\documentclass[a4paper,UKenglish,cleveref, autoref, numberwithinsect, thm-restate]{lipics-v2021}

\pdfoutput=1 
\hideLIPIcs  


\nolinenumbers
\usepackage{xcolor}

\bibliographystyle{plainurl}

\title{Subgraph Counting in Subquadratic Time for Bounded Degeneracy Graphs} 


\author{Daniel Paul-Pena}{University of California, Santa Cruz, United States }{dpaulpen@ucsc.edu}{https://orcid.org/0009-0008-1073-6173}{}

\author{C. Seshadhri}{University of California, Santa Cruz, United States}{sesh@ucsc.edu}{https://orcid.org/0000-0003-2163-3555}{}

\authorrunning{D. Paul-Pena and C. Seshadhri} 

\Copyright{Daniel Paul-Pena and C. Seshadhri} 

\ccsdesc[500]{Mathematics of computing~Graph algorithms} 

\keywords{Homomorphism counting, Bounded degeneracy graphs, Fine-grained complexity, Subgraph counting.} 

\category{} 

\relatedversion{} 


\funding{Both authors are supported by NSF CCF-1740850, CCF-1839317, CCF-2402572, and DMS-2023495.
}
%


\EventEditors{John Q. Open and Joan R. Access}
\EventNoEds{2}
\EventLongTitle{42nd Conference on Very Important Topics (CVIT 2016)}
\EventShortTitle{CVIT 2016}
\EventAcronym{CVIT}
\EventYear{2016}
\EventDate{December 24--27, 2016}
\EventLocation{Little Whinging, United Kingdom}
\EventLogo{}
\SeriesVolume{42}
\ArticleNo{23}

\newcommand{\maybefootnote}[2]{%
	\ifcsname firsttime#1\endcsname
	\else
	\expandafter\gdef\csname firsttime#1\endcsname{}%
	\footnote{#2}%
	\fi
}

\mathchardef\mhyphen="2D
\newcommand{\reducible}[1]{${#1}$-reducible}
\newcommand{\computable}[1]{${#1}$-computable}
\newcommand{\reduced}[1]{G_{#1}}

\newcommand{\cycle}[1]{\cC_{#1}}

\newcommand{\simplex}[1]{\cS_{#1}}
\newcommand{\hyperone}{\cH_1}
\newcommand{\hypertwo}{\cH_2}
\newcommand{\hyperthree}{\cH_\triangle}
\newcommand{\diamondgraph}{\cD}

\newcommand{\colSetSub}{\text{Col-}\cS}

\newcommand{\maxoutdeg}{d}

\newcommand{\expandG}{G'}
\newcommand{\expandGOdd}{G''}

\newcommand{\extension}[3]{ext\left(#1,#2;#3\right)}

\newcommand{\WSub}[2]{\mathrm{\text{Col-WSub}}_{#2}(#1)}
\newcommand{\WSubNI}[1]{\mathrm{\text{Col-WSub}}_{#1}}
\newcommand{\Hom}[2]{\mathrm{Hom}_{#2}(#1)}
\newcommand{\Sub}[2]{\mathrm{Sub}_{#2}(#1)}
\newcommand{\IndSub}[2]{\mathrm{IndSub}_{#2}(#1)}

\newcommand{\LICL}{LICL}

\newcommand{\Reachable}{Reach}

\newcommand{\Spasm}{Spasm}


\newcommand{\degen}{\kappa}

\newcommand{\dtw}{\tau}

\newcommand{\dagtree}{DAG-tree decomposition}
\newcommand{\dagtreewidth}{DAG-treewidth}

\newcommand{\ignore}[1]{}



\newcommand{\cB}{\mathcal{B}}
\newcommand{\cC}{{\cal C}}
\newcommand{\cD}{\mathcal{D}}
\newcommand{\cE}{{\cal E}}

\newcommand{\cH}{{\cal H}}

\newcommand{\cP}{\mathcal{P}}

\newcommand{\cS}{\mathcal{S}}
\newcommand{\cT}{{\cal T}}

\newcommand{\eps}{\varepsilon}

\newcommand{\NN}{\mathbb{N}}



\newcommand{\Sec}[1]{\S \ref{sec:#1}} 
\newcommand{\Eqn}[1]{\hyperref[eq:#1]{(\ref*{eq:#1})}} 
\newcommand{\Fig}[1]{{Fig.\,\ref{fig:#1}}} 
\newcommand{\Tab}[1]{\hyperref[tab:#1]{Tab.\,\ref*{tab:#1}}} 
\newcommand{\Table}[1]{\hyperref[tab:#1]{Table\,\ref*{tab:#1}}} 
\newcommand{\Thm}[1]{\hyperref[thm:#1]{Theorem\,\ref*{thm:#1}}} 
\newcommand{\Fact}[1]{\hyperref[fact:#1]{Fact\,\ref*{fact:#1}}} 
\newcommand{\Lem}[1]{\hyperref[lem:#1]{Lemma\,\ref*{lem:#1}}} 
\newcommand{\Prop}[1]{\hyperref[prop:#1]{Prop.~\ref*{prop:#1}}} 
\newcommand{\Cor}[1]{\hyperref[cor:#1]{Corollary~\ref*{cor:#1}}} 
\newcommand{\Conj}[1]{\hyperref[conj:#1]{Conjecture~\ref*{conj:#1}}} 
\newcommand{\Def}[1]{\hyperref[def:#1]{Definition~\ref*{def:#1}}} 
\newcommand{\Alg}[1]{\hyperref[alg:#1]{Alg.~\ref*{alg:#1}}} 
\newcommand{\Clm}[1]{\hyperref[clm:#1]{Claim~\ref*{clm:#1}}} 
\newcommand{\Obs}[1]{\hyperref[obs:#1]{Observation~\ref*{obs:#1}}} 
\newcommand{\Rem}[1]{\hyperref[rem:#1]{Remark~\ref*{rem:#1}}} 
\newcommand{\Con}[1]{\hyperref[con:#1]{Construction~\ref*{con:#1}}} 
\newcommand{\Step}[1]{\hyperref[step:#1]{Step~\ref*{step:#1}}} 
\newcommand{\Assumption}[1]{\hyperref[assm:#1]{Assumption\,\ref*{assm:#1}}} 


\usepackage{algorithm}
\usepackage{algpseudocode}
\usepackage{amsmath}
\usepackage{comment}
\usepackage{tikz}
\usepackage{paralist}
\usepackage{caption}
\usetikzlibrary{positioning}
\usepackage{mathtools,amsfonts}

\usepackage{thm-restate}
\usepackage{standalone}
\usepackage{standalone}

\begin{document}
	
	\maketitle
	
	\begin{abstract}
		We study the classic problem of subgraph counting, where we wish to determine the number of occurrences of a fixed pattern graph $H$ in an input graph $G$ of $n$ vertices. Our focus is on \emph{bounded degeneracy} inputs, a rich family of graph classes that also characterizes real-world massive networks. 
		Building on the seminal techniques introduced by Chiba-Nishizeki (SICOMP 1985), a recent line of work has built subgraph counting algorithms for bounded degeneracy graphs. Assuming fine-grained complexity conjectures, there is a complete characterization of patterns $H$ for which linear time subgraph counting is possible. For every $r \geq 6$, there exists an $H$ with $r$ vertices that cannot be counted in linear time.
		
		In this paper, we initiate a study of subquadratic algorithms for subgraph counting on bounded degeneracy graphs. We prove that when $H$ has at most $9$ vertices, subgraph counting can be done in $\tilde{O}(n^{5/3})$ time. As a secondary result, we give improved algorithms for counting cycles of length at most $10$. Previously, no subquadratic algorithms were known for the above problems on bounded degeneracy graphs. 
		
		Our main conceptual contribution is a framework that reduces subgraph counting in bounded degeneracy graphs to counting smaller hypergraphs in arbitrary graphs. We believe that our results will help build a general theory of subgraph counting for bounded degeneracy graphs.
	\end{abstract}

	\section{Introduction} \label{sec:intro}
	
	The fundamental algorithmic problem of subgraph counting in a large input graph has a long
	and rich history~\cite{Lo67, DiSeTh02, FlGr04, DaJo04, Lo12, AhNeRo+15, CuDeMa17, PiSeVi17, SeTi19}.
	There are applications in logic, properties of graph products, partition functions in statistical physics, database theory, machine learning, and network science~\cite{ChMe77,BrWi99,DrRi10,BoChLo+06,PiSeVi17,DeRoWe19,PaSe20}. 
	We are given a \emph{pattern} graph $H = (V(H), E(H))$, and an \emph{input} graph $G = (V(G), E(G))$.
	All graphs are assumed to be simple. We use $\Sub{G}{H}$ to denote the problem of computing the number of (not necessarily induced) subgraphs of $H$ in $G$, that is, the number of subgraphs of $G$ isomorphic to $H$. 
	
	If the pattern is part of the input, this problem becomes $\mathbb{NP}$-hard,
	as it subsumes subgraph isomorphism. Often, one thinks of the pattern $H$ as fixed, and running times are parameterized in terms of the properties (like the size) of $H$.
	Let us set $n = |V(G)|$ and $k = |V(H)|$. 
	There is a trivial brute-force $O(n^k)$ algorithm, but we do not expect
	$O(n^{k-\eps})$ algorithms for general $H$ for some constant $\eps> 0$~\cite{DaJo04}. 
	
	The rich field of subgraph counting focuses on restrictions on the pattern or the input,
	under which non-trivial algorithms and running times are possible~\cite{ItRo78,AlYuZw97,BrWi99,DrGr00,DiSeTh02,DaJo04,BoChLo+06,CuDeMa17,Br19,RoWe20}. 
	Given the practical importance of subgraph counting, there is a special focus on linear time
	(or small polynomial) running times.
	
	Inspired by seminal work of Chiba-Nishizeki~\cite{ChNi85}, a recent line of work has focused
	on building a theory of subgraph counting for \emph{bounded degeneracy graphs}~\cite{Br19,BePaSe20,BePaSe21,BrRo22,BeGiLe+22}.
	These are classes of graphs where all subgraphs have a constant average degree. 
    This work culminated in results of Bera-Pashanasangi-Seshadhri and Bera-Gishboliner-Levanzov-Seshadhri-Shapira~\cite{BePaSe21, BeGiLe+22}.
	We now have precise dichotomy theorems characterizing linear time subgraph counting in bounded degeneracy graph.
    When $H$ has at most $5$ vertices, then $\Sub{G}{H}$ can
	be determined in linear time (if $G$ has bounded degeneracy). For all $k \geq 6$,
    there is a pattern $H$ on $k$ vertices that cannot be counted in linear time, assuming
    fine-grained complexity conjectures.
	The following question is the next step from this line of work.
	\smallskip
	
	\emph{When can we get subquadratic algorithms for subgraph counting (when $G$ has bounded degeneracy)?
		Are there non-trivial algorithms that work for all $H$ with $6$ (or more) vertices?
	}
	
	\smallskip
	Before stating our main results, we offer some justification for this problem.
	
	{\bf Bounded degeneracy graphs:} This is an extremely rich family of graph classes, containing
	all minor-closed families, bounded expansion families, and preferential attachment graphs~\cite{Se23}.
	Most massive real-world graphs, like social networks, the Internet, communication networks, etc.,
	have low degeneracy (\cite{GoGu06,JaSe17,ShElFa18,BeChGh20,BeSe20}, also Table 2 in~\cite{BeChGh20}). 
	The degeneracy has a special significance in the analysis of real-world graphs, since it is intimately tied to the technique of ``core decompositions''~\cite{Se23}. 
	Most of the state-of-the-art practical subgraph counting algorithms use algorithmic techniques
	for bounded degeneracy graphs~\cite{AhNeRo+15,JhSePi15,PiSeVi17,OrBr17,JaSe17,PaSe20}.
	
	{\bf Subquadratic time:} From a theoretical perspective, the orientation techniques of
	Chiba-Nishizeki and further results~\cite{ChNi85,BePaSe20,BePaSe21,BeGiLe+22} are designed for linear time algorithms.
	The primary technical tool is the use of DAG-Tree decompositions, introduced
	in a landmark result of Bressan \cite{Br19,Br21}. Bressan's algorithm yields running times
	of the form $n^r$, where $r \in \NN$ and is the \emph{DAG-treewidth} of $H$.
	It is known that linear time algorithms are possible iff the DAG-treewidth is one \cite{BeGiLe+22}.
	It is natural to ask if the DAG-treewidth being two is a natural barrier. 
	\emph{Subquadratic} algorithms would necessarily require a new technique, other than
	DAG-treewidth. It would also show situations where the DAG-treewidth can be beaten.
	
	From a practical standpoint, the best exact subgraph counting codes (the ESCAPE package~\cite{escape}) use methods similar to the above results. Algorithms for bounded degeneracy graphs have been remarkably successful in dealing with modern large networks. Subquadratic algorithms could provide new practical tools for subgraph counting.
	
	{\bf The focus on $6$ vertices and the importance of cycle patterns:} The problem
	of counting all patterns of a fixed size is called \emph{graphlet analysis}
	in bioinformatics and machine learning \cite{Pr07,ShViPe+09}. Current scalable exact counting codes go to
	5 vertex patterns~\cite{escape}, which is precisely the theoretical barrier seen in \cite{BePaSe20}.
	Part of our motivation is to understand the complexity of counting all $6$ vertex patterns
	(in bounded degeneracy graphs). 
	
	The seminal cycle detection work of Alon-Yuster-Zwick also gave algorithms 
	parameterized by the graph degeneracy~\cite{AlYuZw97}. But most of their results are for cycle \emph{detection}, whereas counting is arguably the more relevant problem. It is natural to ask if counting is also feasible with similar running times.
	
	\subsection{Main Results} \label{sec:results}
	
	Our main result shows that patterns with at most $9$ vertices can be counted in subquadratic time. Let $n$ be the number of vertices and $\degen$ be the degeneracy of the input graph $G$.
    The degeneracy $\kappa$ is the maximum value, over all subgraphs of $G$, of the minimum degree of the subgraph.
    In what follows, $f:\NN \to \NN$ denote some explicit function.

	\begin{restatable}{theorem}{main}(Main Theorem)\label{thm:main} 
		There is an algorithm that computes\maybefootnote{1}{We can obtain a similar result for the problem of counting only induced subgraphs $\IndSub{G}{H}$, as it can be expressed as a linear combination of $\Sub{G}{H'}$ for some patterns $H'$ with $V(H')=V(H)$.} $\Sub{G}{H}$ for all patterns $H$ with at most $9$ vertices
		in time $f(\degen) \tilde{O}(n^{5/3})$.\maybefootnote{2}{We express our results parameterizing by the degeneracy of the input graph $G$. Note that if $G$ is from a class with bounded degeneracy, then $f(\degen)$ is constant and we can ignore that term.}
	\end{restatable}

	Recall that previous works gave (near) linear time algorithms when $H$ had at most $5$ vertices \cite{BePaSe20}. The best subgraph counting algorithm for bounded degeneracy graphs is Bressan's algorithm, which runs in at least quadratic (if not cubic) time for many patterns with $6$ to $9$ vertices.
	
    Additionally, we construct an explicit $10$-vertex pattern that we conjecture cannot be counted in subquadratic time in the bounded degeneracy setting. We are able to relate the complexity of counting that pattern to counting a specific hypergraph in general graphs, which we believe can not be done in subquadratic time. (More in \Sec{invert} and \Sec{hardness}.)
	
	\subsubsection{Counting cycles}

	As a secondary result, we are able to show that cycle counting in bounded degeneracy graphs can be done even faster. These results are tight, in the sense that any improvement will imply beating long
    standing cycle detection algorithms for sparse graphs. Let $\cycle{k}$ denote the $k$-cycle and $d_k$ be the exponent (in terms of edges) of the fastest algorithm for $k$-cycle detection. 
    Gishboliner-Levanzov-Shapira-Yuster (henceforth GLSY) recently showed that \emph{homomorphism} counting of $\cycle{2k}$ in bounded degeneracy graphs can be done in time $O(n^{d_k})$~\cite{GiLeSh+23}. 
    We prove that this complexity can be matched for the problem of \emph{subgraph} counting. Cycle counting is more challenging, since it involves compute linear combinations
    of homomorphisms of non-cyclic patterns (which requires other techniques).
	
	\begin{theorem} \label{thm:cycles}

		\begin{asparaitem}
			\item There is an algorithm that computes $\Sub{G}{\cycle{6}}$ and $\Sub{G}{\cycle{7}}$ in time $f(\degen)\tilde{O}(n^{d_3}) \approx f(\degen)\tilde{O}(n^{1.41})$. Additionally, no $f(\degen)o(n^{d_3})$ algorithm exists unless there exists a $o(m^{d_3})$ algorithm for counting triangles.
			\item There is an algorithm that computes $\Sub{G}{\cycle{8}}$ and $\Sub{G}{\cycle{9}}$ in time $f(\degen)\tilde{O}(n^{d_4}) \approx f(\degen)\tilde{O}(n^{1.48})$. Additionally, no $f(\degen)O(n^{d_4})$ algorithm exists unless there exists a $o(m^{d_4})$ algorithm for counting $4$-cycles.
			\item There is an algorithm that computes $\Sub{G}{\cycle{10}}$ in time $f(\degen)\tilde{O}(n^{d_5}) \approx f(\degen)\tilde{O}(n^{1.63})$. Additionally, no $f(\degen)o(n^{d_5})$ algorithm exists unless there exists a $o(m^{d_5})$ algorithm for counting $5$-cycles.
		\end{asparaitem}
	\end{theorem}

	Previously, for bounded degeneracy graphs, subquadratic results were only known
    for detecting cycles, by a result of Alon, Yuster and Zwick \cite{AlYuZw97}. \Thm{cycles} improves or matches their bounds in all cases, despite solving the harder problem of counting. 
    The lower bounds relating to cycle counting in general graphs follow directly from the techniques of GLSY~\cite{GiLeSh+23}.
    We note that the exponents $d_k$ have not been improved for twenty years~\cite{AlYuZw97, YuZw04}.
    As a side corollary of our methods, we also get a better algorithm for counting $5$-cycles in arbitrary graphs (\Cor{5-cycle}).

	\subsection{Main Ideas} \label{sec:ideas}
	
	The theorems above are obtained from a new reduction technique that converts
	homomorphism counting in bounded degeneracy graphs to subgraph counting in 
	arbitrary graphs for some specific patterns.

The starting point for most subgraph counting algorithms for bounded degeneracy graphs
	is to use \emph{graph orientations}~\cite{Se23}. A graph $G$ has bounded degeneracy iff
	there exists an acyclic orientation $\vec{G}$ such that all vertices have bounded \emph{outdegree}.
	(An acyclic orientation is obtained by directing the edges of $G$ into a DAG.)
	Moreover, this orientation can be found in linear time~\cite{MaBe83}. To count $H$-subgraphs
	in $G$, we consider all possible orientations $\vec{H}$ of $H$ and compute
	(the sum of) all $\Sub{\vec{G}}{\vec{H}}$.
	
	The approach formalized by Bressan~\cite{Br21} and Bera-Pashanasangi-Seshadhri~\cite{BePaSe20} is to break $\vec{H}$ into a collection of (out)directed trees rooted at the sources of $\vec{H}$. 
	The copies of each tree in $\vec{G}$ can be enumerated in linear time, since outdegrees
	are bounded. We need to figure out how to "assemble" these trees into copies of $\vec{H}$.
	
	Bressan's DAG-tree decomposition gives a systematic method to perform this assembly,
	and the running time is $O(n^\tau)$, where $\tau$ is the ``DAG-treewidth'' of $\vec{H}$.
	The definition is technical, so we do not give details here. 
	Also, this method only gives the homomorphism count (edge-preserving maps from $H$ to $G$), and we require further
	techniques to get subgraph counts~\cite{CuDeMa17}. 
	The main contribution of the linear time dichotomy theorems is to completely characterize patterns $H$ such that all orientations $\vec{H}$ have DAG-treewidth one~\cite{BePaSe21,BeGiLe+22}. 
	
	Since $\tau$ is a natural number, to get subquadratic time algorithms, we need new ideas.
	
	\subsubsection{The 6-cycle} It is well-known (from~\cite{AlYuZw97}) that linear-time orientation based methods hit an obstruction at $6$-cycles. 
    Consider the oriented $6$-cycle in the left of \Fig{cycles}. It is basically a ``triangle'' of out-out wedges (as given by the red lines); indeed, one can show that $6$-cycle counting in bounded degeneracy graphs is essentially equivalent to triangle counting in arbitrary graphs~\cite{BePaSe20,BePaSe21,BeGiLe+22}.
    This observation can be converted into an algorithm, as was shown by GLSY~\cite{GiLeSh+23}. Starting with $\vec{G}$, we create a new graph with edges (the red lines) corresponding to the endpoints of out-out wedges. Since $\vec{G}$ has bounded outdegree, the number of edges in the new graph is $O(n)$. Every triangle in the new graph corresponds to a (directed) $6$-cycle homomorphism in $\vec{G}$.
	Triangle counting in the new graph can be done in $O(n^{1.41})$ time (or $n^{3/2}$ time using a combinatorial algorithm)~\cite{AlYuZw97}. 
    Getting the subgraph count is more involved, but we can use existing methods that reduce to homomorphism counting~\cite{CuDeMa17}.
	
	\begin{figure}[t]
		\centering
		\begin{minipage}{.47\linewidth}
			\centering
			\includegraphics[width=\textwidth]{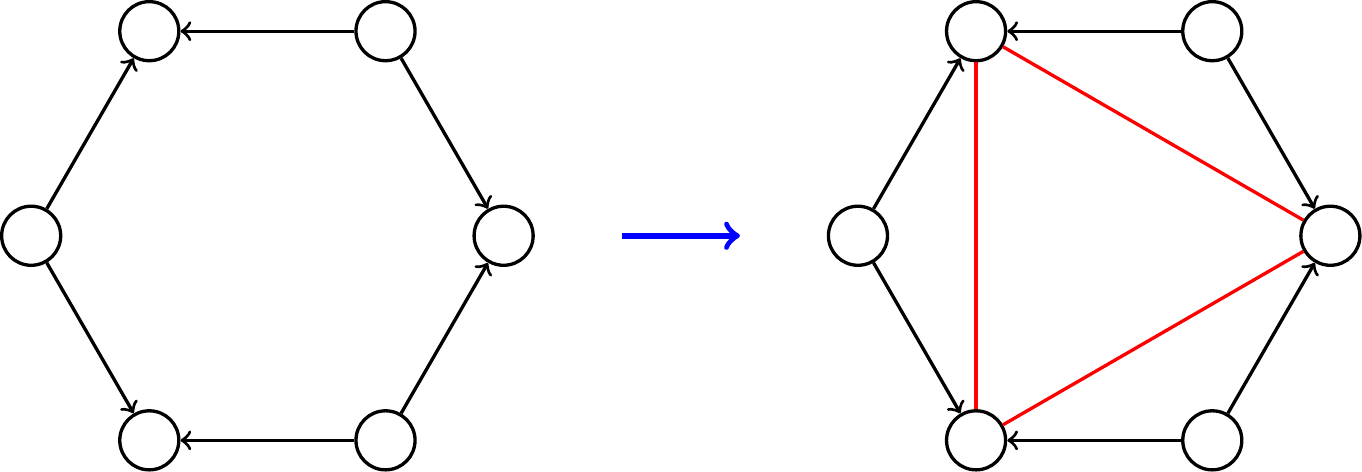}
		\end{minipage}
		\hspace{0.5cm}
		\begin{minipage}{.47\linewidth}
			\centering
			\includegraphics[width=\textwidth]{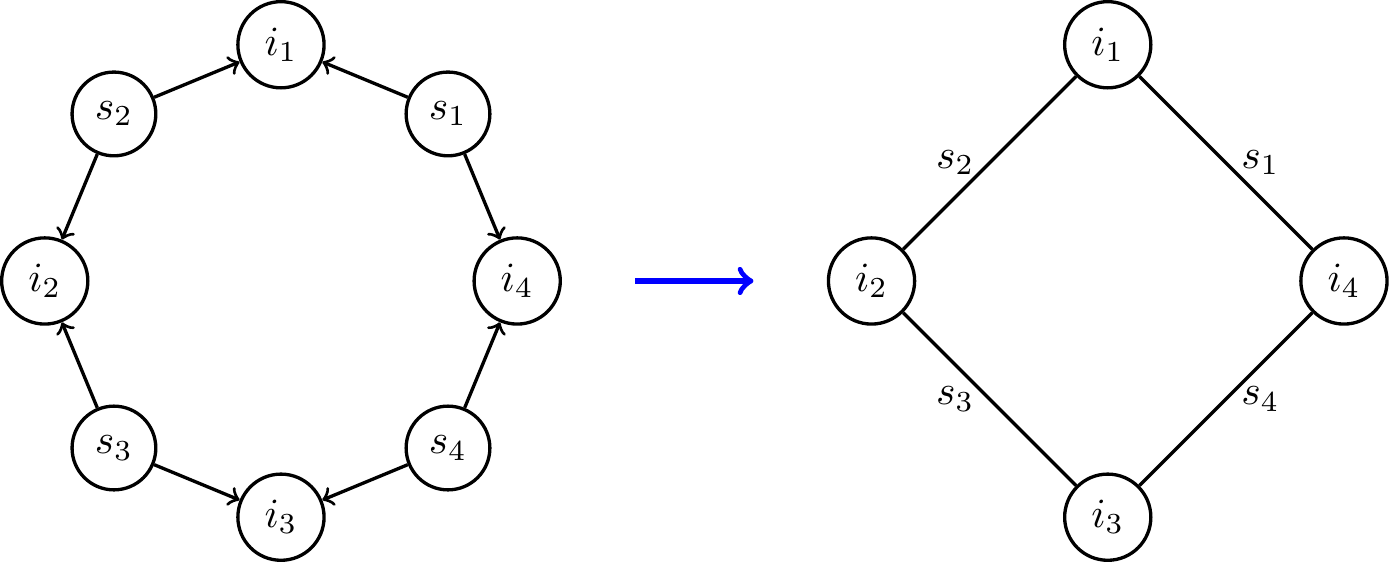}
		\end{minipage}
		\caption{(a) The $6$-cycle obstruction, this orientation has three sources intersecting with each other, this oriented pattern can not be counted in linear time in bounded degeneracy graphs. Adding an edge connecting the end-points of every out-out wedge gives a triangle. (b) An example of how the oriented $\cycle{8}$ reduces to a $\cycle{4}$, the four sources become edges connecting the intersection vertices.} 
		\label{fig:cycles}
	\end{figure}
	
	One can extend this idea more generally as follows. 
	Let $\vec{H}$ be a directed pattern, a source is any vertex with no incoming arcs, and we define an intersection vertex as any vertex that can be ``reached'' by two different sources.
	
	Suppose there are $k$ sources and $k$ intersection vertices. Suppose further that there is an ordering of the sources $\{s_0,...,s_{k-1}\}$ and the intersection vertices $\{i_0,...,i_{k-1}\}$ of $\vec{H}$ such that, for all $j$, only the sources $s_j$ and $s_{j+1}$ (taking modulo $k$ in the indexes) can both reach the vertex $i_j$.
	This is the case of the oriented $\cycle{6}$ and $\cycle{8}$ in \Fig{cycles} or of any acyclic orientation of any cycle.
	One can then construct a new graph $G'$ such that $\Sub{G'}{\cycle{k}}$ is the same as $\Hom{\vec{G}}{\vec{H}}$ (where $\Hom{\vec{G}}{\vec{H}}$ denotes the homomorphism count).

	\subsubsection{Generalizing to non-cyclic patterns}
	
	So far, the algorithmic approach only makes sense for cycle patterns. Our main contribution is a framework that generalizes this approach to count homomorphisms and subgraphs of more complex patterns. 
	
	The first part of our framework is the concept of \reducible{P} patterns. A directed pattern is \reducible{P} if we can reduce counting homomorphisms of it to counting $P$ subgraphs in a sparse hypergraph. The formal definition is technical and can be seen in \Sec{reduce}. We provide a simplified exposition in this section.
	
	Consider a directed pattern $\vec{H}$, for example the left image in \Fig{reductions}. We can replace every source with a hyperedge connecting the intersection vertices reachable by the source. The result is a graph (or hypergraph) $P$ such that $\vec{H}$ is \reducible{P}. 
	
	However, our framework allows for even more freedom: instead of looking at individual sources, we partition sources into sets of sources $S_e$. See the rightmost figure in \Fig{reductions} for an example, where the $6$ sources are divided into $4$ sets of sources. 
	The sub-patterns induced by the vertices reachable by every set of sources are ``easy'' to count using existing techniques. We can also arrange the intersection vertices into sets $I_v$ such that every set of sources will reach some sets of intersection vertices. We can obtain $P$ by replacing every set of intersection vertices with a vertex and every set of sources $S_e$ with a hyperedge $e$ that contains the vertices corresponding to the sets of intersection vertices that can be reached by $S_e$.
	
	In the example of \Fig{reductions}, the intersection of the vertices reachable from source sets forms a ``cyclic arrangement''. The intersection vertices reachable from $S_1$ are also reachable from $S_2$ and $S_4$, and similarly for the other sets of sources, giving that the pattern will be \reducible{\cycle{4}}. 

	\begin{figure}[t]
		\centering
		\begin{minipage}{.4\linewidth}
			\centering
			\includegraphics[width=\textwidth]{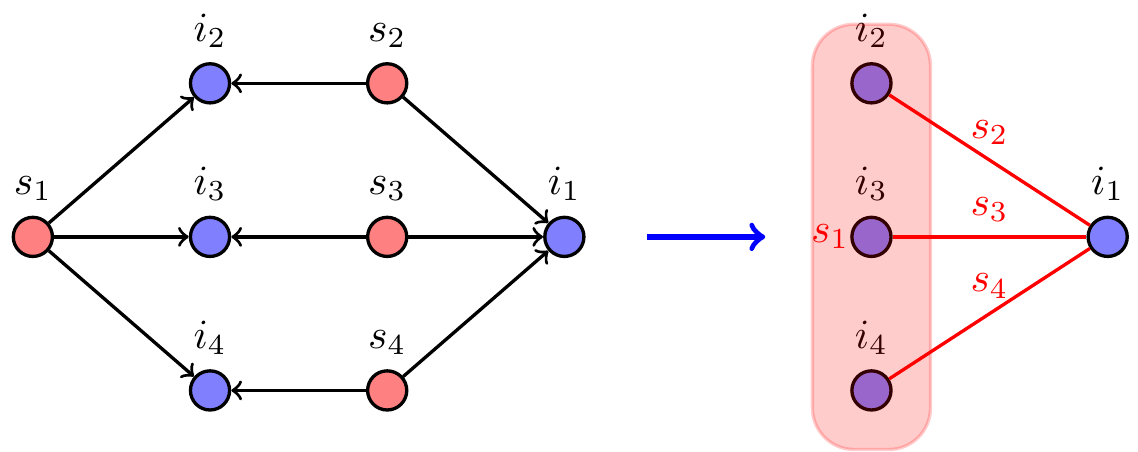}
		\end{minipage}
		\hspace{0.5cm}
		\begin{minipage}{.52\linewidth}
			\centering
			\includegraphics[width=\textwidth]{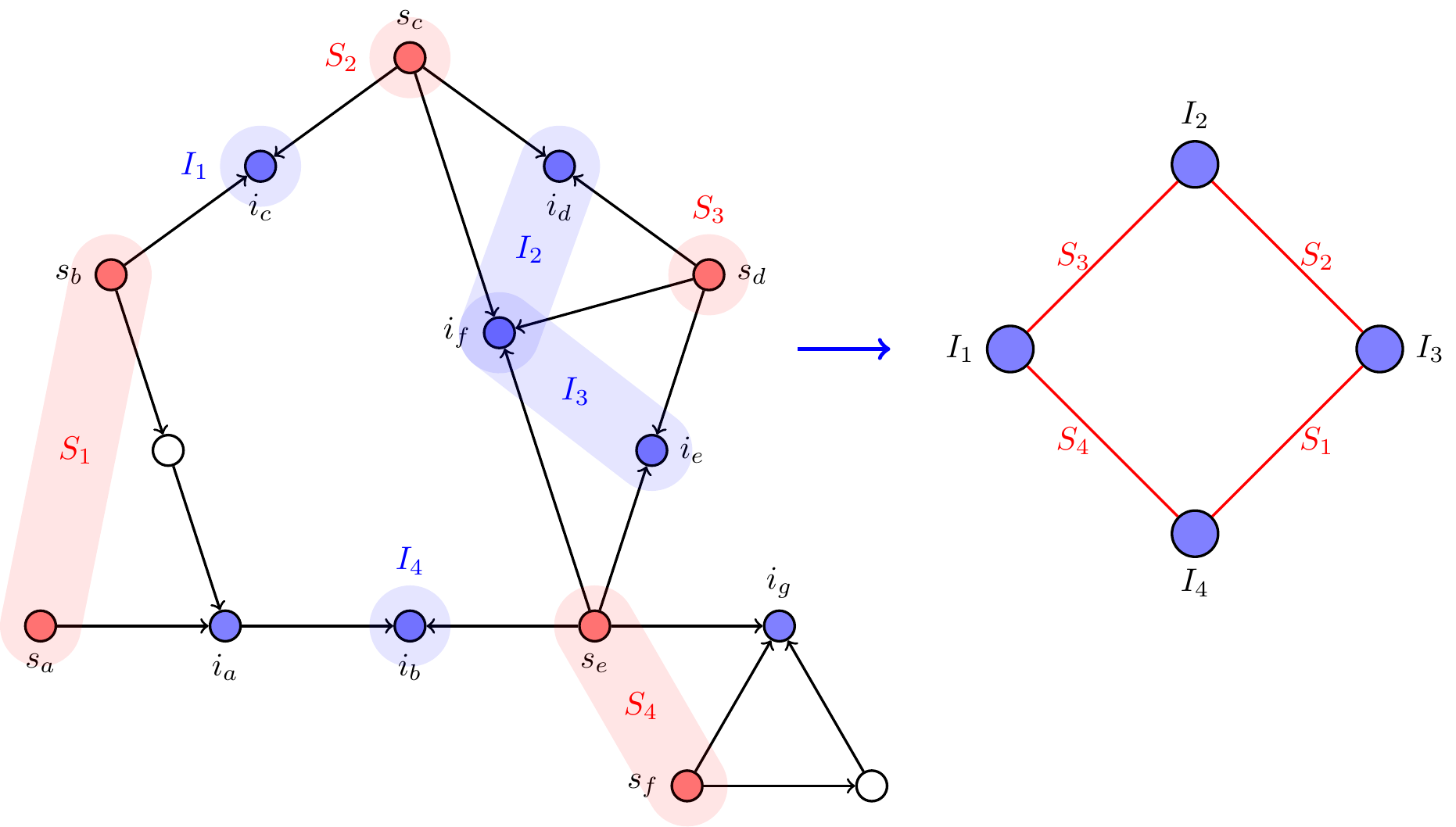}
		\end{minipage}
		\caption{Two more complex examples $P$-reducibility.} 
		\label{fig:reductions}
	\end{figure}
	
	\subsubsection{The reduced graph} \label{sec:reduced}
	
	The second part of our framework is the \emph{reduced graph}. If a pattern is \reducible{P}, then for any directed input graph $\vec{G}$,
	we can construct a colored weighted graph $\reduced{P}$ with the following property. The number of homomorphisms of the original pattern relates to the number of colorful copies of $P$ in $\reduced{P}$. 
	
	The reduced graph consists of $|V(P)|$ layers of vertices, where each layer is related to one intersection set of $\vec{H}$. Specifically, there is a vertex in the $j$-th layer for every possible image of the corresponding intersection set $I_j$ in $\vec{G}$, that is, every set of vertices in $\vec{G}$ such that $I_j$ can be mapped to it. Moreover, the vertices of every layer will have the same color. For example, for every intersection set $I_j$ in $\vec{H}$ and any map $\phi:I_j\to \vec{G}$ there is a vertex $(\phi(I_j) \mhyphen j)$ in $\reduced{P}$ with color $j$.\footnote{There could be $O(n^{|I_j|})$ such vertices, however, as we will note later, there will be at most $O(n)$ edges, so we can ignore vertices with degree $0$ and we do not need to even create them.}
	
	The edges have weights that represent the number of homomorphisms mapping the intersection sets to the corresponding images. For example, let $S$ be a source set reaching two intersection sets $I_j=\{i_j\}$ and $I_{j'}=\{i_{j'}\}$, and let $u,v \in \vec{G}$. An edge $e$ connecting $(u\mhyphen j)$ and $(v \mhyphen j')$ with weight $w=w(e)$ indicates that there are $w$ different homomorphisms mapping $\vec{H}(S)$ (the subgraph of $\vec{H}$ induced by the vertices reachable by $S$) to $\vec{G}$ that map $j$ to $u$ and $j'$ to $v$. Additionally we can show that if $\vec{G}$ has bounded outdegree, then the new reduced graph will have $O(n)$ edges.
	
	We give an example in \Fig{example_reduction}. For simplicity of exposition, the graph $\vec{G}$ is smaller than the pattern. 
	Each vertex of $\vec{G}$ has three copies (each with a different color) in the reduced graph, denoted $G_{\cycle{3}}$. Since the vertex $a$ cannot be the image
	of any intersection vertex (it has zero indegree), copies of this vertex do not appear in $G_{\cycle{3}}$. Observe that there is no mapping of an out-out wedge 
	where $b$ is one endpoint and $d$ is the other endpoint. Hence, there is no edge from a copy of $b$ to a copy of $d$. 
	
	\begin{figure}
		\centering
		\includegraphics[width=\textwidth*3/5]{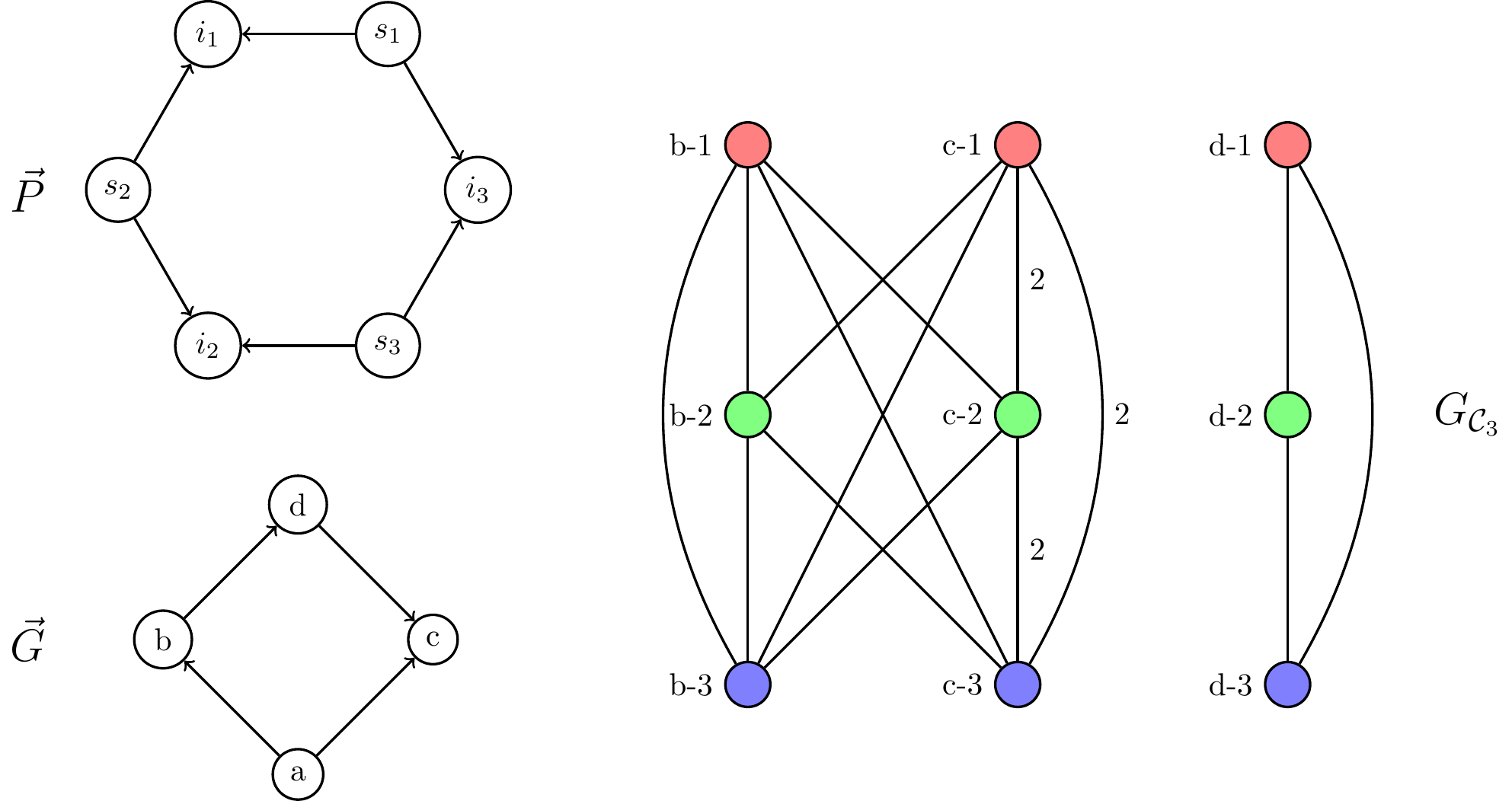}%
		\caption{An example of the construction of $\reduced{\cycle{3}}$, for pattern $\vec{H}$ and input graph $\vec{G}$. The red vertices correspond with $i_1$ in $\vec{H}$, the green ones with $i_2$ and the blue ones with $i_3$. The weight of the edges is $1$ except when indicated. For example there are two homomorphisms $\phi: \vec{H}(s_2) \to \vec{G}$ that map $i_1$ and $i_2$ to $c$, hence the edge $(c\mhyphen 1,c\mhyphen 2)$ has weight $2$. One can verify that the number of homomorphisms from $\vec{H}$ to $\vec{G}$ is equal to the sum of products of (colorful) triangles in $\reduced{\cycle{3}}$.}
		\label{fig:example_reduction}
	\end{figure}
	
	Every colorful copy of $P$ in $\reduced{P}$ correspond to fixing the positions of the intersection sets in $\vec{G}$, and the weight of each hyperedge will correspond to the number of homomorphisms mapping that portion of the graph. Hence the product of the weights of all hyperedges in each copy will give the total number of homomorphisms mapping all the intersection sets to the corresponding vertices in $\vec{G}$.
	Therefore, the total number of homomorphisms will be equal to the quantity $\WSub{\reduced{P}}{P}$, which corresponds to the sum of products of weights of colorful copies of $P$. That is:
	\begin{equation} \label{eq:wsub}
		\WSub{\reduced{P}}{P} = \sum_{P \in \colSetSub(P,\reduced{P})} \prod_{e\in E(P)} w(e)
	\end{equation}
	Here, $\colSetSub(P,\reduced{P})$ denotes the set of distinct colorful copies of $P$ in $\reduced{P}$. We are able to show that solving $\WSub{\reduced{P}}{P}$ is equivalent to counting the number of homomorphisms of $\vec{H}$ in $\vec{G}$.

	Therefore, we can count homomorphisms of \reducible{P} patterns in the same time as counting colorful copies of $P$ in the reduced graph.
	
	\begin{restatable}{lemma}{maindirected} \label{lem:main_directed}
		Let $c>1$, if there exists a $\tilde{O}(m^{c})$ algorithm that for any graph $G'$ computes $\WSub{G'}{P}$. Then, for any \reducible{P} pattern $\vec{H}$ and directed input graph $\vec{G}$ we can compute $\Hom{\vec{G}}{\vec{H}}$ in time $f(\maxoutdeg)\tilde{O}(n^{c})$, where $\maxoutdeg$ is the maximum outdegree of $\vec{G}$.
	\end{restatable}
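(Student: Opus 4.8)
The plan is to combine the exact equivalence of \Lem{equivalence_reduction} with a linear size bound on the reduced graph, and then invoke the assumed cycle-counting algorithm as a black box. By \Lem{equivalence_reduction}, $\Hom{\vec{G}}{\vec{H}} = \WSub{\reduced{\cycle{k}}}{\cycle{k}}$, so it suffices to: (i) construct the colored weighted graph $\reduced{\cycle{k}}$ from the oriented input $\vec{G}$; (ii) show it has $f(\maxoutdeg)\cdot O(n)$ vertices and edges; and (iii) run the hypothesized $O(m^{c})$ algorithm for $\WSubNI{\cycle{k}}$ on it. Since $k \le |V(\vec{H})|$ is a constant and $c > 1$ is a constant, step (iii) costs $O\big((f(\maxoutdeg)\, n)^{c}\big) = f'(\maxoutdeg)\cdot O(n^{c})$ and dominates the $f(\maxoutdeg)\cdot O(n)$ cost of steps (i)--(ii), giving the claimed bound.

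For the size bound in step (ii), recall that $\reduced{\cycle{k}}$ has $k$ layers, one for each intersection set $I_1,\dots,I_k$ of $\vec{H}$, and layer $i$ has exactly one vertex per realizable image of $I_i$ in $\vec{G}$. Every intersection vertex of $\vec{H}$ is reachable from some source along a directed path of length at most $|V(\vec{H})|$, so once we fix the image of that source (at most $n$ choices), the images of all vertices it reaches must lie in the forward closure of that image, which has size at most $\maxoutdeg + \maxoutdeg^{2} + \cdots + \maxoutdeg^{|V(\vec{H})|}$, a constant depending only on $\maxoutdeg$ and $\vec{H}$. Hence layer $i$ has $f(\maxoutdeg)\cdot O(n)$ vertices, and $\reduced{\cycle{k}}$ has $f(\maxoutdeg)\cdot O(n)$ vertices overall. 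Each edge between adjacent layers $i$ and $i'$ corresponds to a jointly realizable pair of images of $(I_i,I_{i'})$, witnessed by at least one homomorphism of the source-set subpattern $\vec{H}(S)$; by the ``easy subpattern'' guarantee built into the definition of \reducible{\cycle{k}} in \Sec{reduce}, each $\vec{H}(S)$ has only $f(\maxoutdeg)\cdot O(n)$ homomorphisms into $\vec{G}$, all enumerable in that much time. Bucketing the enumerated homomorphisms by their restrictions to $I_i$ and $I_{i'}$ therefore produces all edges together with their weights (the per-bucket counts) in $f(\maxoutdeg)\cdot O(n)$ time, and in particular there are $f(\maxoutdeg)\cdot O(n)$ edges.

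It remains to handle step (iii). The graph $\reduced{\cycle{k}}$ comes with a proper $k$-coloring --- layer $i$ gets color $i$ --- so a colorful $k$-cycle is exactly a $k$-cycle using one vertex per layer, which is precisely the object counted by $\WSubNI{\cycle{k}}$ and for which we assumed an $O(m^{c})$ algorithm. Applying it with $m = f(\maxoutdeg)\cdot O(n)$ computes $\WSub{\reduced{\cycle{k}}}{\cycle{k}} = \Hom{\vec{G}}{\vec{H}}$ in time $f(\maxoutdeg)\cdot O(n^{c})$, as required. (If instead one only has an algorithm for the uncolored weighted $\cycle{k}$ count, the colorful count is recovered by inclusion--exclusion over the $2^{k}$ subsets of colors applied to the induced subgraphs of $\reduced{\cycle{k}}$, at a multiplicative cost of $2^{k} = f(|V(\vec{H})|)$ that is absorbed into $f(\maxoutdeg)$.)

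The only real obstacle is step (ii), and specifically the claim that the edge weights can be tabulated within the linear-in-$n$ budget: this needs each per-source-set subpattern $\vec{H}(S)$, restricted to prescribed images of the two intersection sets bounding it, to be counted in near-linear time, which is exactly the structural content packed into the definition of $\cycle{k}$-reducibility (the cyclic arrangement of source/intersection sets plus the ``allowed violations''). Granting that definition and \Lem{equivalence_reduction}, the remaining steps --- the per-layer counting bound and the reduction to a single colored cycle-counting instance --- are routine consequences of bounded outdegree.
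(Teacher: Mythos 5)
Your overall route is the same as the paper's: combine \Lem{equivalence_reduction} with a linear bound on the size of $\reduced{\cycle{k}}$, then feed the reduced graph to the assumed $O(m^{c})$ algorithm. The paper packages your steps (i)--(ii) as \Lem{extension_subgraphs} and \Lem{construct_reduced_graph} and then argues exactly as you do.

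The one place where your sketch, taken literally, would fail is the claim that each subpattern $\vec{H}(S_i)$ has only $f(\maxoutdeg)\cdot O(n)$ homomorphisms into $\vec{G}$, all enumerable in that time, so that the edge weights can be obtained by bucketing an explicit enumeration. Bounded outdegree does not bound in-degree: a source set $S_i$ containing two sources whose reachable sets meet in a single vertex already has $\sum_{v}\bigl(d^{-}(v)\bigr)^{2}=\Theta(n^{2})$ homomorphisms in the worst case (e.g.\ an inward-oriented star), even though the outdegree is $1$. What is linear is the number of homomorphisms of the \emph{single-source} rooted subgraph $\vec{H}(s_i)$, at most $O(n\maxoutdeg^{k})$; the paper's \Lem{extension_subgraphs} enumerates only those and uses Bressan's algorithm (\Lem{bressan_ext}) to compute, for each one, the count $\extension{\vec{H}(S_i)}{\vec{G}}{\phi}$ of its extensions to all of $\vec{H}(S_i)$ \emph{without} enumerating the extensions, then aggregates these counts over the restrictions to $I_{i-1}\cup I_{i}$ to get the edge weights. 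You do gesture at exactly this in your closing paragraph (counting, restricted to prescribed images of the intersection sets, rather than enumerating), so this is a slip of phrasing rather than a missing idea, but the enumerate-and-bucket implementation should be replaced by the extension-counting one.
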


	\subsubsection{From directed to undirected: getting homomorphisms counts}
	
	To extend our reduction framework to undirected graphs we introduce the concept of \computable{\cP} patterns. Note that different acyclic orientations of the same pattern might reduce to different graphs. We say that a pattern $H$ is \computable{\cP} if all the acyclic orientations of $H$ can be computed in linear time or are \reducible{P} for some $P\in \cP$. If for all the patterns in $\cP$ we can compute $\WSubNI{P}$ in time $O(m^c)$ then we can use \Lem{main_directed} to get a bound in the complexity of $\Hom{G}{H}$.

	For each $k\geq 6$, we find a set of hypergraphs $\cP_k$ such that all patterns with $k$ vertices are \computable{\cP_k}. These sets grow with $k$ and we have $\cP_k \subseteq \cP_{k+1}$. We give a definition of these sets in Section \Sec{others} and prove the following lemma.
	
	\begin{restatable}{lemma}{pkcomputable} \label{lem:pk_computable}
		Let $k\geq 6$, every connected pattern $H$ with $k$ vertices is \computable{\mathcal{P}_{k}}.
	\end{restatable}

	For $k=9$ we are able to show that there exists a subset $\cP^*_9$ of $\cP_9$ such that every $9$-vertex pattern is also \computable{\cP^*_9} and for every pattern $P \in \cP^*_9$ we can compute $\WSubNI{P}$ in $\tilde{O}(m^{5/3})$ time. Allowing us to show that for all patterns with $9$ vertices or less we can count the number of homomorphisms in subquadratic time.

	We can also show similar results for cycles. All orientations of cycle patterns can be reduced to cycles of half their length. This means that any $2k$-cycle and $2k+1$-cycle are \computable{\{\cycle{k}\cup...\cup \cycle{3} \}} (in these cases we will simply write \computable{\cycle{k}}).

	We can also show that for all cycles we can compute $\WSub{G}{\cycle{k}}$ in $\tilde{O}(m^{d_k})$ time, the fastest time for detecting $k$-cycles in general graphs.
	
	\begin{restatable}{lemma}{cyclecomplexity} \label{lem:cycle_complexity}
		For all $k\geq 3$, there is an algorithm that computes $\WSub{G}{\cycle{k}}$ in time $\tilde{O}(m^{d_k})$.
	\end{restatable}
	
	This means that we can compute the number of homomorphisms of $\cycle{2k}$ and $\cycle{2k+1}$ in time $\tilde{O}(m^{d_k})$, similar result to the one obtained in  GLSY~\cite{GiLeSh+23}.

	\subsubsection{Getting subgraph counts} \label{sec:subgraph}
	
	At this point, we have algorithms for computing various pattern \emph{homomorphisms}. To get subgraph counts,
	we need the inclusion-exclusion techniques of~\cite{CuDeMa17}. One can express the $H$-subgraph
	count as a linear combination of $H'$-homomorphism counts, where $H'$ is a pattern in $\Spasm(H)$.
	The $\Spasm(H)$ consists of all patterns $H'$ such that $H$ has a surjective homomorphism to $H'$.
	Thus, every pattern in the spasm has at most as many vertices as $H$. 

	Hence for any pattern $H$ with $k$ vertices, all the patterns in $\Spasm(H)$ will have at most $k$ vertices and hence they will also be \computable{\cP_k}, giving \Thm{main}.
	
	In the case of the cycles, to be able to extend the results from last section to the $Sub$ problem, we analyze the spasm of the different cycles. For $k\leq 10$ we are able to show that the patterns in the spasms of $\cycle{k}$ are also \computable{\cycle{\lfloor k/2 \rfloor}}. That combined with \Lem{cycle_complexity} implies the upper bound of \Thm{cycles}.
	
    \subsubsection{Inverting the reduction for conditional hardness} \label{sec:invert}
	
    We show that in some cases our reduction procedure is optimal. For example, counting small cycles in general graphs can be reduced to counting cycles of twice the length in graphs of degeneracy $\kappa=2$. The reduction is quite simple and just involves subdividing the edges. With a slight modification, the subdivision approach can be used to show lower bounds for odd cycles too, which gives the lower bound of \Thm{cycles}. We note that an analogous result for counting homomorphisms was shown in GLSY~\cite{GiLeSh+23}.
	
	\begin{restatable}{lemma}{lowerbound} \label{lem:lowerbound}
		Let $6\leq k \leq 10$, an $f(\degen)o(n^{d_{\lfloor k/2 \rfloor}})$ algorithm for counting $k$-cycles implies the existence of a $o(m^{d_{\lfloor k/2 \rfloor}})$ algorithm for counting $\lfloor k/2 \rfloor$-cycles. 
	\end{restatable}

	 This inversion of the reduction procedure also gives us an algorithm for counting undirected $5$-cycles in general graphs which improves on the current state of the art, and matches the complexity for $5$-cycle detection.
	
	\begin{restatable}{corollary}{fivecycle} \label{cor:5-cycle}
		There is an algorithm that, for any graph $G$, computes $\Sub{G}{\cycle{5}}$ in time $O\left(m^{d_5} \right) \approx O\left(m^{1.63}\right)$.
	\end{restatable}

	The approach of subdividing edges can be used to prove a relation between other patterns too. In the case of hypergraphs, we can replace hyperedges of arity $r$ by $r$-stars. We are able to show a strong relation between the hypergraph $\hyperthree$ depicted in \Fig{hyperthree}, and a $10$-vertex pattern. We conjecture that for this pattern we can not count the number of subgraphs in subquadratic time.
	\begin{conjecture} \label{conj:hyperthree}
		There is no $o(m^2)$ algorithm for computing $\Sub{G}{\hyperthree}$.
	\end{conjecture}
	\begin{figure}
	\centering
	\includegraphics[width=\textwidth*1/4]{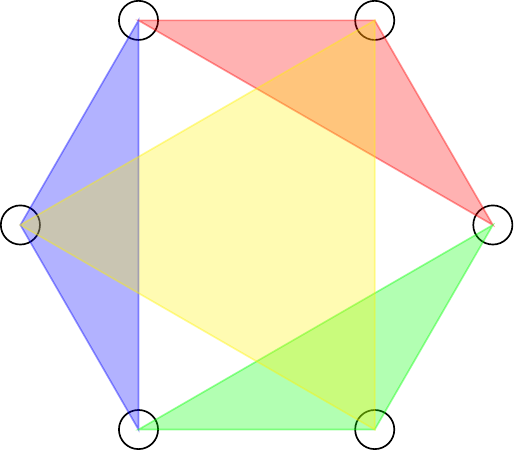}%
	\caption{The hypergraph $\hyperthree$.}
	\label{fig:hyperthree}
	\end{figure}
	This conjecture implies that there are no subquadratic algorithms for computing the number of subgraphs of all $10$-vertex patterns.
	\begin{restatable}{lemma}{hardness} \label{lem:hardness}
		If \Conj{hyperthree} holds, there is no algorithm that computes $\Sub{G}{H}$ in time $f(\degen)o(n^{2})$ for all patterns $H$ with $10$ vertices.
	\end{restatable}
	We consider it an interesting open problem to relate our conjecture with existing fine-grained complexity assumptions. Some previous works prove barriers for subquadratic \emph{listing} in general graphs~\cite{BrGo24}, but no similar results exist for counting hypergraphs.

	\subsection{Related Work}  \label{sec:related}
	
	Subgraph counting is closely tied to homomorphism counting; in some cases,
	it is more convenient to talk about the latter. Seminal work
	of Curticepean-Dell-Marx showed that the optimal algorithms for subgraph
	counting can be designed from homomorphism counting algorithms and vice versa \cite{CuDeMa17}.
	
	Much of the study of subgraph/homomorphism counting comes from
	paramterized complexity theory.
	D{\'\i}az et al ~\cite{DiSeTh02} gave a $O(2^{k}n^{tw(H)+1})$
	algorithm for determining the $H$-homomorphism count,
	where $tw(H)$ is the treewidth of $H$.
	Dalmau and Jonsson~\cite{DaJo04} proved
	that $\Hom{G}{H}$ is polynomial time solvable iff
	$H$ has bounded treewidth. Otherwise it is
	$\#W[1]$-complete. Roth and Wellnitz~\cite{RoWe20} consider 
	restrictions of both $H$ and $G$, and focus of $\#W[1]$-completeness.
	
	Tree decompositions have played an important role in subgraph counting.
	We give a brief review of the graph parameters treewidth and degeneracy.
	The notion of tree decomposition and treewidth were introduced in a seminal work by Robertson and Seymour~\cite{RoSe83,RoSe84,RoSe86}, although the concept
	was known earlier~\cite{BeBr73,Ha76}.
	
	Degeneracy is a measure of sparsity and has
	been known since the early work of Szekeres-Wilf~\cite{SzWi68}.
	We refer the reader to the short survey of Seshadhri~\cite{Se23} about degeneracy
	and algorithms. The 
	degeneracy has been exploited for subgraph counting problems in many
	algorithmic results~\cite{ChNi85,Ep94,AhNeRo+15,JhSePi15,PiSeVi17,OrBr17,JaSe17,PaSe20}.
	
	Bressan connected degeneracy to treewidth-like notation and introduced
	the concept of DAG treewidth~\cite{Br19, Br21}. 
	The main result is the following. For a pattern $H$ with $|V(H)|=k$ and 
	an input graph $G$ with $|E(G)|=m$ and degeneracy $\degen$,
	one can count $\Hom{G}{H}$ in
	$f(\degen,k)O(m^{\dtw(H)}\log m)$ time, where $\dtw(H)$ is the 
	DAG treewidth of $H$. Assuming the exponential time hypothesis~\cite{ImPaZa98}, the subgraph counting problem does not admit
	any $f(\degen,k)m^{o(\dtw(H)/\ln \dtw(H))})$ algorithm, for any positive function $f:\mathbb{N}\times \mathbb{N} \rightarrow \mathbb{N}$.
	
	Bera-Pashanasangi-Seshadhri introduced the first theory of linear time homomorphism
	counting \cite{BePaSe20}, showing that all patterns with at most $5$ vertices could be counted in linear time. It was later shown that for every pattern with no induced cycles of length $6$ or more, the number of homomorphisms could also be counted in linear time \cite{BePaSe21,BeGiLe+22}.
	
	A recent work of Komarath et al. \cite{KoKuMi+23} gave quadratic and cubic algorithms for counting cycles in sparse graphs. Gishboliner et al. gave subquadratic algorithms for homomorphism counting of cycles in bounded degeneracy graphs \cite{GiLeSh+23}.
	Bressan, Lanziger and Roth have also studied counting algorithms for directed patterns \cite{BrLaRo23}.

	\subsection{Paper Organization}
	
	In Section $3$, we define our reduction framework formally and prove the equivalence between the bounded degeneracy and the general setting. In Section $4$,
    we prove that many patterns are cycle-reducible. In Section $5$, we define the sets $\cP_k$ and complete the proof of the main theorem. Finally, in Section $6$, we show how to compute Col-WSub for cycle patterns. Due to space limitations most proofs have been deferred to the appendix, including the proofs of \Lem{lowerbound} and \Lem{hardness}.
	
	\section{Preliminaries} \label{sec:prelim}
	
	\paragraph*{Graphs, subgraphs and homomorphisms}
	
	We use $H = (V(H), E(H))$ to denote the pattern graph and $G = (V(G),E(G))$ to denote the input graph. We will use $n=|V(G)|$ and $m=|E(G)|$ for the number of vertices and edges of $G$ respectively.
	
	A homomorphism from $H$ to $G$ is a mapping $\phi: V(H) \to V(G)$ such that $\forall (u,v) \in E(H)$ we have $(\phi(u),\phi(v)) \in E(G)$. We use $\Phi(H,G)$ to denote the set of all homomorphisms from $H$ to $G$. We use $\Hom{G}{H}$ to denote the problem of counting the number of homomorphisms from $H$ to $G$, that is, computing $|\phi(H,G)|$. Similarly we use $\Sub{G}{H}$ to denote the problem of counting the number of subgraphs (not necessarily induced) of $G$ isomorphic to $H$.
	
	We say that two homomorphisms $\phi: V \to G$ and $\phi': V' \to G$ agree if for any vertex $v \in V \cap V'$ we have $\phi(v) = \phi'(v)$.
	
	The spasm of a graph is the set of all possible graphs obtained by recursively combining two vertices that are not connected by an edge, removing any duplicated edge. Using inclusion-exclusion arguments one can express the value of $\Sub{G}{H}$ as a weighted sum of homomorphism counts of the graphs in the spasm of $H$. Hence, computing $\Hom{G}{H'}$ for all $H'\in \Spasm(H)$, allows to compute $\Sub{G}{H}$, as given by the following equation: $\Sub{G}{H}  = \sum_{H' \in \Spasm(H)} f(H') \Hom{G}{H'}$.
	
	Here $f(H')$ are a series of non-zero coefficients that can be computed for each $H$. See \cite{BoChLo+06} for more details. Curticapean, Dell and Marx showed that this process is optimal \cite{CuDeMa17}, that is, the complexity of $\Sub{G}{H}$ is exactly the hardest complexity of $\Hom{G}{H'}$ for the graphs in $\Spasm(H)$.
	
	\paragraph*{Degeneracy and directed graphs}
	
	A graph $G$ is $\degen$-degenerate if every subgraph has a minimum degree of at most $\degen$. The degeneracy $\degen(G)$ of a graph is the minimum value such that $G$ is $\degen$-degenerate. There exists an acyclic orientation of a graph, called the degeneracy orientation, which has the property that the maximum outdegree of the graph is at most $\degen$ \cite{MaBe83}. Additionally, this orientation can be computed in $O(n+m)$ time. We will use $\vec{G}$ to denote the directed input graph. For a pattern $H$, we use $\Sigma(H)$ to denote the set of acyclic orientations of $H$. When orienting the input graph, every occurrence of $H$ will now appear as exactly one of its acyclic orientations, that is, $\Hom{G}{H} = \sum_{\vec{H} \in \Sigma(H)} \Hom{\vec{G}}{\vec{H}}$.
	
	We say that $v \in V(\vec{H})$ is a source if its in-degree is $0$. We use $S(\vec{H})$ to denote the set of sources of $\vec{H}$. We say a vertex $u$ is reachable from $v$ if there is a directed path connecting $v$ to $u$ and use $\Reachable_{\vec{H}}(s)$ to denote the set of vertices reachable by the source $s$. Abusing notation, for a set $S' \subseteq S(\vec{H})$ we use $\Reachable_{\vec{H}}(S')$ (or $\Reachable(S')$ if $\vec{H}$ is clear from the context) to denote the set of vertices reachable by any vertex in $S'$. We use $\vec{H}(S)$ and $\vec{H}(s)$ for the subgraphs of $\vec{H}$ induced by $\Reachable_{\vec{H}}(S)$ and $\Reachable_{\vec{H}}(s)$. 
	
	We say that a vertex $v$ of $\vec{H}$ is an intersection vertex if there are at least two distinct sources $s,s' \in S(\vec{H})$ such that $v$ is reachable by both of them, that is, $v \in \vec{H}(s) \cap \vec{H}(s')$. We use $I(\vec{H})$ to denote the set of intersection vertices. Note that $S(\vec{H}) \cap I(\vec{H}) = \emptyset$.
	
	\paragraph*{The \dagtreewidth}
	
	Bressan introduced the concept of \dagtree{} of a directed acyclic graph \cite{Br19}:
	
	\begin{definition}[\dagtree{} \cite{Br19}]
		For a given directed acyclic graph $\vec{H} = (V(\vec{H}),E(\vec{H}))$, a \dagtree{} of $\vec{H}$ is a rooted tree $T=(\cB,\cE)$ such that: 
		\begin{itemize}
			\item Each node $B \in \cB$, is a subset of the sources of $\vec{H}$, $B \subseteq S(\vec{H})$.
			\item Every source of $\vec{H}$ is in at least one node of $T$, $\bigcup_{B \in \cB} B = S(\vec{H})$.
			\item $\forall B,B_1,B_2 \in \cB$. If $B$ is in the unique path between $B_1$ and $B_2$ in $T$, then $\Reachable_{\vec{H}}(B_1) \cap \Reachable_{\vec{H}}(B_2) \subseteq \Reachable_{\vec{H}}(B)$.
		\end{itemize}
	\end{definition}
	
	The \dagtreewidth{} of a \dagtree{} $T$,  $\dtw(T)$, is the maximum size of all the bags in $T$. The \dagtreewidth{} of a directed acyclic graph $\vec{H}$ is the minimum value of $\dtw(T)$ across all valid \dagtree{} $T$ of $\vec{H}$. For an undirected graph $H$, we will have that $\dtw(H) = \max_{\vec{H} \in \Sigma(H)} \dtw(\vec{H})$.
	
	\paragraph*{Using \dagtree{} to compute homomorphisms}
	
	Bressan gave an algorithm that computes $\Hom{G}{H}$ making use of the \dagtree{} of a directed pattern \cite{Br19,Br21}. The algorithm decomposes the pattern into smaller subgraphs, computes the number of homomorphisms of every subgraph and then combines the counts using dynamic programming.
	\begin{theorem} \label{thm:bressan} \cite{Br19}
		For any pattern $H$ with $k$ vertices there is an algorithm that computes $\Hom{G}{H}$ in time $f(k, \degen) \tilde{O}(n^{\dtw(H)})$.
	\end{theorem}
	For the patterns with $\dtw(H) = 1$ we obtain an algorithm that runs in linear time when parameterized by the degeneracy of the input graph. Bera et al. showed an exact characterization of which patterns have $\dtw(H) = 1$.
	\begin{lemma} \label{lem:licl} \cite{BePaSe21}
		$LICL(H) < 6 \Leftrightarrow$ $\dtw(H) = 1$.
	\end{lemma}
	
	If we analyze the algorithm from Bressan in more detail, we can see that it can be used as a black box to obtain some more fine-grained counts. In order to understand this we need to introduce the following definition:
	
	A homomorphism $\phi'$ extends $\phi$ if for every vertex $u \in \phi$ we have $\phi(u) = \phi'(u)$. Let $\phi$ be a homomorphism from a subgraph $\vec{H}'$ of $\vec{H}$ to $\vec{G}$. We define $\extension{\vec{H}}{\vec{G}}{\phi}$ as the number of homomorphisms $\phi'$ from $\vec{H}$ to $\vec{G}$ that extend $\phi$.
	
	\begin{lemma} [\label{lem:bressan_ext} Lemma $5$ in \cite{Br21} (restated)] There exists an algorithm that, given a directed pattern $\vec{H}$ with $k$ vertices and a \dagtree{} $T$ rooted in $s$ with $\dtw(T)=1$, and a directed graph $\vec{G}$ with $n$ vertices and maximum outdegree $d$; returns, for every homomorphism $\phi:\vec{H}[s] \to \vec{G}$, the quantity $\extension{\vec{H}}{\vec{G}}{\phi}$. The algorithm runs in time $f(k,d)\tilde{O}(n)$.
	\end{lemma}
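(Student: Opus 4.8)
\emph{Proof sketch.} This is an inspection of Bressan's dynamic program~\cite{Br19,Br21}, pointing out that its final table is exactly the required output rather than just its sum. \textbf{Per-source enumeration.} Fix a source $b$ of $\vec{H}$. Every vertex of $\vec{H}[b]$ is the endpoint of a directed path from $b$ of length at most $k-1$, so once $\phi(b)$ is chosen, $\phi$ must send $V(\vec{H}[b])$ into the set of vertices of $\vec{G}$ reachable from $\phi(b)$ by directed paths of length $<k$, a set of size at most $d^{k}$ since $\vec{G}$ has maximum outdegree $d$. Hence for each of the $n$ choices of $\phi(b)$ there are at most $(d^{k})^{k}$ candidate maps $\phi:\vec{H}[b]\to\vec{G}$, each testable in $f(k)$ time, and all homomorphisms $\vec{H}[b]\to\vec{G}$ can be enumerated and stored in $f(k,d)\,n$ time.

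\textbf{A separator property.} Take a node $B=\{b\}$ of $T$ (all bags are singletons because $\dtw(T)=1$) with children $B_1=\{b_1\},\dots,B_t=\{b_t\}$, and let $S_{B_i}$ be the union of the sources in the subtree rooted at $B_i$. Applying the third \dagtree{} axiom with $B_i$ on the unique path between $B$ and any node contained in $S_{B_i}$ gives $\Reachable_{\vec{H}}(b)\cap\Reachable_{\vec{H}}(b_2)\subseteq\Reachable_{\vec{H}}(b_i)$ for all $b_2\in S_{B_i}$, hence $\Reachable_{\vec{H}}(b)\cap\Reachable_{\vec{H}}(S_{B_i})=\Reachable_{\vec{H}}(b)\cap\Reachable_{\vec{H}}(b_i)$; applying it with $B$ on the path between a node outside and a node inside the subtree rooted at $B$ shows every vertex of $\vec{H}[S_B]$ reachable from a source outside $S_B$ lies in $\Reachable_{\vec{H}}(b)$; and applying it at $B$ between the subtrees of $B_i$ and $B_j$ shows $V(\vec{H}[S_{B_i}])\cap V(\vec{H}[S_{B_j}])\subseteq\Reachable_{\vec{H}}(b)$ for $i\neq j$. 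Since $\Reachable_{\vec{H}}(\cdot)$ is closed under out-edges, these facts say that $\vec{H}[S_B]$ is the union of $\vec{H}[b]$ and the $\vec{H}[S_{B_i}]$, with edges entirely inside one of these pieces, and with pairwise vertex overlaps confined to the ``contact sets'' $\Reachable_{\vec{H}}(b)\cap\Reachable_{\vec{H}}(b_i)\subseteq V(\vec{H}[b_i])$.

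\textbf{The dynamic program and the root.} For a node $B=\{b\}$ let $M_B$ be the table, indexed by homomorphisms $\phi:\vec{H}[b]\to\vec{G}$, with $M_B[\phi]$ the number of homomorphisms $\Psi:\vec{H}[S_B]\to\vec{G}$ restricting to $\phi$ on $\vec{H}[b]$. For a leaf $M_B[\phi]=1$ for every homomorphism $\phi$. For an internal node, the separator property makes the gluing of $\phi$ and homomorphisms of the $\vec{H}[S_{B_i}]$ that agree with $\phi$ on the contact sets a bijection onto homomorphisms of $\vec{H}[S_B]$, and the constraints on the $\Psi_i$ are mutually independent given $\phi$, so
\[
M_B[\phi]=\prod_{i=1}^{t}\ \sum_{\substack{\psi:\vec{H}[b_i]\to\vec{G}\\ \psi\equiv\phi\ \text{on}\ \Reachable_{\vec{H}}(b)\cap\Reachable_{\vec{H}}(b_i)}} M_{B_i}[\psi].
\]
Processing $T$ bottom-up, at the root $R=\{s\}$ we have $S_R=S(\vec{H})$, so $\Reachable_{\vec{H}}(S_R)=V(\vec{H})$ and $\vec{H}[S_R]=\vec{H}$; thus $M_R[\phi]$ is precisely the number of homomorphisms $\vec{H}\to\vec{G}$ extending $\phi$, i.e.\ $\extension{\vec{H}}{\vec{G}}{\phi}$ for every $\phi:\vec{H}[s]\to\vec{G}$. (Summing over $\phi$ recovers $\Hom{\vec{G}}{\vec{H}}$, Bressan's original guarantee.)

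\textbf{Running time, and the main obstacle.} Each $M_B$ has at most $f(k,d)\,n$ nonzero entries by the per-source bound. To evaluate the recurrence at $B$, first marginalize each $M_{B_i}$ onto its contact set — one hashing pass costing $f(k,d)\,\tilde{O}(n)$ — obtaining a table $N_i$ giving, for each homomorphism of the contact set, the sum of $M_{B_i}[\psi]$ over $\psi$ extending it; then $M_B[\phi]$ is the product of the $t$ lookups $N_i[\phi|_{\text{contact}_i}]$. Since $\dtw(T)=1$ we may assume $T$ has $O(k)$ nodes, so the total is $f(k,d)\,\tilde{O}(n)$, the polylogarithmic factor coming only from hashing used to align contact sets. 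The one delicate point is the separator property together with the consequence that indexing $M_B$ by homomorphisms of $\vec{H}[b]$ — rather than of all of $\vec{H}[S_B]$ — loses nothing; this is exactly where $\dtw(T)=1$ is used, forcing singleton bags and making the reachable sets behave like a laminar family separating $\vec{H}$. Once this is in place, the recurrence, the identification of the root table with $\extension{\vec{H}}{\vec{G}}{\cdot}$, and the time analysis are routine.
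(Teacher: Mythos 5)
The paper does not prove this lemma itself — it is restated from Lemma 5 of Bressan's work and used as a black box — and your argument is a correct self-contained reconstruction of exactly that dynamic program: the per-source enumeration bound, the separator consequences of the third \dagtree{} axiom for singleton bags, the product-of-marginalized-children recurrence, and the observation that the root table is $\extension{\vec{H}}{\vec{G}}{\cdot}$ rather than only its sum. This matches the intended justification, so there is nothing to add.
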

	
	\paragraph*{Hypergraphs}
	
	A hypergraph is a graph where the each edge (or hyperedge) is a subset of the vertices. The arity of a hyperedge is the number of vertices that it contains. We will only consider hypergraphs where every hyperedge has arity at least $2$. We use $E(G)$ for the set of hyperedges of $G$, where for each $e \in E(G)$ we have $e \subseteq V(G)$.We will also consider weighted hypergraphs, for a hypergraph $G$, the function $w: E(G) \to \NN$ gives the weight of each hyperedge $e$ in $G$. We use $\simplex{k}$ to denote the simplex hypergraph of arity $k$.
	
	\section{The reduction procedure} \label{sec:reduce}
	
	In this section we explain our reduction procedure. We start with the concept of $P$-reducibility. The main idea is to divide the set of sources of the directed pattern $\vec{H}$ into $|E(P)|$ different subsets of sources $S_e$, each corresponding to one hyperedge $e$ in $E(P)$, such that every source in $S(\vec{H})$ belong to exactly one subset $S_e$. We also require that the subgraphs reachable by every set of source can be counted efficiently using Bressan's algorithm.
	
	We also set $|V(P)|$ distinct subsets of intersections vertices $I_v \subseteq I(\vec{H})$, with each subset corresponding to each vertex $v$ in $V(P)$. Not all the intersection vertices of $I(\vec{H})$ must belong necessarily to one of the subsets, and we use $I^*$ to denote the set of intersection vertices that appear in at least one of the set. Moreover, different intersection sets can contain the same intersection vertex, however we require that the vertices corresponding to sets that contain the same intersection vertex in $V(P)$ induce a connected subgraph.
	
	We use $I(e)$ to denote the vertices in $I^*$ reachable by the sources in $S_e$. For every hyperedge $e$ of $P$, we require that the corresponding set of sources can reach all the vertices in the intersection sets corresponding to the vertices of $e$. Also the set of sources related to every hyperedge containing the vertex $v$ must reach all the vertices in $I_v$. These conditions ensure that we will be able to combine the homomorphism counts for each of the subgraphs $\vec{H}(S_e)$. We now present the full definition.
	
	\begin{definition} [\reducible{P}] \label{def:reducible}
	A connected DAG $\vec{H}$ is \reducible{P} if we can set:
	\begin{itemize}
		\item For every vertex $v \in V(P)$, a subset of intersection vertices $I_v \subseteq I(\vec{H})$. With $I^*=\bigcup_i I_i$. Such that for every intersection vertex $i \in I^*$, we have that the vertices $\{v : i \in I_v\} \subseteq V(P)$ induce a connected hypergraph in $P$.
		
		\item For every hyperedge $e \in E(P)$, a subset of sources $S_e \subseteq S(\vec{H})$. With $S_e \cap S_{e'} =\emptyset\ \forall e\neq e'$ and $\bigcup_{e\in E(P)} S_e = S(\vec{H})$. Such that every subset of sources $S_e$ contains a source $s_e$ with $\vec{H}(s_e) \cap I^* = \vec{H}(S_e) \cap I^* = I(e)$, and the subgraph $\vec{H}(S_e)$ admits a $\dtw=1$ \dagtree{} rooted at the source $s_e$.
		\end{itemize}
		Satisfying:
		\begin{enumerate}
			\item  For every vertex $v \in V(P)$: $I_v \subseteq I(e) \forall e \ni v$
			\item For every hyperedge $e \in E(P)$: $\bigcup_{v \in e} I_v = I(e)$
		\end{enumerate}
	
	\end{definition}

	We now define the reduced graph $\reduced{P}$.

	\begin{definition} [Reduced graph $\reduced{P}$]  \label{def:reduced_graph} 
		Given a \reducible{P} directed pattern on source sets $\{S_e : e \in E(P)\}$ and intersection sets $\{I_v : v \in V(P)\}$, we define the reduced graph $\reduced{P}$ of the directed input graph $\vec{G}$ as follows:
		\begin{itemize}
			\item For every vertex $v\in V(P)$ and every homomorphism $\phi: I_{v} \to \vec{G}$ we have the vertex $(\phi(I_{v})\mhyphen v)$ with color $v$. The vertices with the same color form the ``layers'' of $\reduced{P}$.
			\item For every hyperedge $e\in E(P)$ and for every homomorphism $\phi: I(e) \to \vec{G}$, let $\phi_v$ be the restriction of $\phi$ to $I_v$ for each vertex $v \in e$, we will have a hyperedge connecting the vertices $\{(\phi_v(I_v) \mhyphen v) : v\in e\}$ with weight $\extension{\vec{H}(S_e)}{\vec{G}}{\phi}$.
		\end{itemize}
	\end{definition}

	We use $V^{(v)}(\reduced{P})$ to refer to the vertices of $\reduced{P}$ in the $v$-th layer. 
	The number of vertices in every layer $v$ can be up to $O(n^{|I_v|})$, however we will only consider vertices that are not isolated, that is, have degree at least $1$. We can show that we can construct $\reduced{P}$ efficiently when only considering such vertices.
	
	\begin{restatable}{lemma}{construct} \label{lem:construct_reduced_graph} 
		Given a \reducible{P} pattern $\vec{H}$ and a directed graph $\vec{G}$ with maximum outdegree $\maxoutdeg$, we can construct $\reduced{P}$ in $f(\maxoutdeg)\tilde{O}(n)$ time. Additionally, the number of non-isolated vertices and the total number of hyperedges are bounded by $f(\maxoutdeg)O(n)$.
	\end{restatable}

	We can now prove the equivalence between homomorphisms of the original pattern and weighted colorful copies of the reduced hypergraph. This lemma relates the counts between the original and the reduced graph.

	\begin{restatable}{lemma}{equivalence}\label{lem:equivalence_reduction}
		$
		\Hom{\vec{G}}{\vec{H}} = \WSub{\reduced{P}}{P}
		$
	\end{restatable}
	
	Finally, we have all the tools to complete our reduction framework, giving us \Lem{main_directed}.
	
	\maindirected*
	\begin{proof}
		For any \reducible{P} pattern we can use \Lem{construct_reduced_graph} to construct the reduced $\reduced{P}$ graph in time $f(\maxoutdeg)\tilde{O}(n)$ for any input graph $\vec{G}$. This graph will have $f(\maxoutdeg)O(n)$ edges. We can then use the $\tilde{O}(m^{c})$ algorithm to compute $\WSub{\reduced{P}}{P}$ in time $f(\maxoutdeg)\tilde{O}(n^{c})$. From \Lem{equivalence_reduction} we have that $\WSub{\reduced{P}}{P}$ will be equal to $\Hom{\vec{G}}{\vec{H}}$.	
	\end{proof}

	\subsection{From directed to undirected}
	
	We introduce the concept of \computable{\cP}, which will help us give upper bounds in the complexity of undirected patterns.
	
	\begin{definition}[\computable{\mathcal{P}}] \label{def:computable}
		Let $\mathcal{P}$ be a set of hypergraphs. We say that a pattern $H$ is \computable{\mathcal{P}} if every acyclic orientation $\vec{H} \in \Sigma(H)$ has either \dagtreewidth{} of $1$ or there exists a hypergraph $P \in \mathcal{P}$ such that $\vec{H}$ is $P$-reducible.
	\end{definition}
	
		If a pattern $H$ is \computable{\mathcal{P}} and $\mathcal{P}$ is only formed by cyclic patterns we will instead write \computable{\cycle{l}}, where $l$ is the length of the largest cycle in $\mathcal{P}$. The complexity of computing homomorphisms of \computable{\mathcal{P}} patterns will be dominated by the hardest complexity for computing $\WSubNI{P}$.
	
	\begin{restatable}{lemma}{computablelemma} \label{lem:computable}
				Let $\mathcal{P}$ be a set of hypergraphs, if for every hypergraph $P \in \mathcal{P}$ there is an algorithm that computes $\WSubNI{P}$ in time $\tilde{O}(m^c)$, then for any input graph $G$ with degeneracy $\degen$ and any \computable{\mathcal{P}} pattern $H$, there is an algorithm that computes $\Hom{G}{H}$ in time $f(\degen)\tilde{O}(n^c)$.
		\end{restatable}

	\section{Reducing to cycles} \label{sec:reductions}
	
	We first focus on patterns that can be reduced to counting cycles. We start by introducing the following two lemmas, showing that directed patterns with either few sources or few intersection vertices are \reducible{\cycle{3}}.
	
	\begin{restatable}{lemma}{threesources} \label{lem:3sources}
		Every directed acyclic pattern $\vec{H}$ with at most $3$ sources is either \reducible{\cycle{3}} or $\dtw(\vec{H})=1$.
	\end{restatable}

	\begin{restatable}{lemma}{threeintersections} \label{lem:3intersections}
		Every directed acyclic pattern $\vec{H}$ with at most $3$ intersection vertices is either \reducible{\cycle{3}} or $\dtw(\vec{H})=1$. 
	\end{restatable}
	
	Using this two lemmas we can prove that all $6$ and $7$-vertex patterns are \computable{\cycle{3}}.
	
	\begin{restatable}{lemma}{sixseven} \label{lem:sixseven}
		Every $6$ and $7$-vertex undirected pattern $H$ is \computable{\cycle{3}}.
	\end{restatable}

	Additionally, the patterns in the spasm will always have less vertices, and hence all patterns in the spasms of all $6$ and $7$-vertex patterns will also be \computable{\cycle{3}}. This fact, together with \Lem{cycle_complexity} gives the following.

	\begin{corollary} \label{cor:sixseven}
		Let $H$ be a pattern with $6$ or $7$ vertices. For any input graph $G$, we can compute $\Hom{G}{H}$ and $\Sub{G}{H}$ in time $f(\degen)\tilde{O}(n^{d_3}) \approx f(\degen)O(n^{1.41})$.
	\end{corollary}
	
	We can also show that the acyclic orientations of cycle patterns are always \reducible{\cycle{k}} for some $k$ at most half of the length of the cycle. This is equivalent to the result in \cite{GiLeSh+23}, but expressed using our reducibility framework.
	
	\begin{restatable}{lemma}{cyclehom}  \label{lem:cycle_hom}
		For all $k\geq3$, $\cycle{2k}$ and $\cycle{2k+1}$ are \computable{\cycle{k}}.
	\end{restatable}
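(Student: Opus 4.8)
The plan is to show that every acyclic orientation $\vec{H}$ of $\cycle{2k}$ (and of $\cycle{2k+1}$) is either of \dagtreewidth{} $1$ or \reducible{\cycle{l}} for some $l \le k$, thereby matching the \computable{\cycle{k}} definition. First I would recall the structure of an acyclic orientation of a cycle. Fix a cyclic traversal of $\cycle{m}$; an acyclic orientation simply chooses, for each edge, one of the two directions, and acyclicity means the orientation is not a consistent cyclic rotation. Consequently the sources (vertices with indegree $0$) and sinks (outdegree $0$) alternate around the cycle, and if there are $j$ sources then there are $j$ sinks, with $2j \le m$ because between two consecutive sources there lies at least one sink, and vice versa. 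The sinks are exactly the ``intersection vertices'' in the terminology of \Sec{ideas}: a vertex reachable from two distinct sources must be a sink of one of the two monotone arcs meeting there. So an acyclic orientation of $\cycle{m}$ with $j$ sources decomposes into $j$ monotone directed paths, each running from a source, splitting to its two neighbouring sinks; pictorially this is exactly the ``cyclic arrangement'' of \Fig{reducible}, with the $j$ sources becoming the $j$ reduced-graph layers and the $j$ sinks the intersection sets.

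Next I would split into cases on the number of sources $j$. If $j = 1$ the orientation has a single source, and a classical fact (used in the linear-time dichotomy work, e.g.\ \cite{BePaSe20,BeGiLe+22}) is that such an orientation has \dagtreewidth{} $1$; this handles condition (i). If $j \ge 2$, the monotone-path decomposition above is literally an instance of the simplified reducibility criterion stated in \Sec{ideas}: order the sources $s_1,\dots,s_j$ and the sinks $i_1,\dots,i_j$ along the cycle so that only $s_i$ and $s_{i+1}$ (indices mod $j$) reach $i_i$. Each source set is a singleton, each ``subpattern reachable from a source'' is just a directed path (trivially of \dagtreewidth{} $1$, hence ``easy''), and there are no violations of the cyclic arrangement at all. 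Therefore $\vec{H}$ is \reducible{\cycle{j}}. Since $2j \le m$, we get $j \le m/2$; for $m = 2k$ this gives $j \le k$, and for $m = 2k+1$ this gives $j \le k$ as well (as $j$ is an integer and $j \le (2k+1)/2$ forces $j \le k$). In either case $\vec{H}$ is \reducible{\cycle{l}} with $l = j \le k$, so condition (ii) holds.

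Combining the two cases: every acyclic orientation of $\cycle{2k}$ or $\cycle{2k+1}$ satisfies (i) or (ii), which is exactly the definition of \computable{\cycle{k}}. The main obstacle is not any single step but making sure the informal ``cyclic arrangement'' picture in \Sec{ideas} is applied against the formal definition of \reducible{\cycle{k}} in \Sec{reduce}: one must check that the monotone-path decomposition of a cyclically-oriented $m$-cycle with $j$ sources genuinely meets every clause of that definition (singleton source sets, the reachability-sharing condition, the permitted-violation bookkeeping being vacuous, and the target being $\cycle{j}$ rather than some other small pattern). Once that bookkeeping is confirmed, the counting bound $2j \le m$ does all the remaining work, and the two parities $m = 2k, 2k+1$ both collapse to $j \le k$.
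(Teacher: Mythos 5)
Your argument is essentially the paper's: bound the number of sources $j$ of any acyclic orientation by $k$ (via source/sink alternation), handle few sources by \dagtreewidth{} $1$, and otherwise take singleton source sets with the sinks as intersection sets to get a $\cycle{j}$-reduction. The one place you deviate is the case $j=2$, which you fold into the reducibility branch as ``\reducible{\cycle{2}}''; this does not actually work with the formal machinery. For $k=2$ the conditions of \Def{reducible} force $I_1=I_2$ (both equal to $\vec{H}(S_1)\cap\vec{H}(S_2)$, i.e.\ the set of both sinks), the two edge families of the reduced graph become parallel edges between the same pair of layer vertices rather than a cycle structure, and in any case the downstream algorithms (\Lem{wsub}) only supply $\WSubNI{\cycle{l}}$ for $l=3,4,5$, so a $\cycle{2}$-reduction is not usable. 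The paper avoids this by observing that an orientation with $1$ or $2$ sources has \dagtreewidth{} $1$ (put each source in its own bag), and invoking the reduction only when $j\ge 3$; your proof is correct once you make that same adjustment.
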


	Moreover, we can also show that all patterns in the spasms of cycles up to length $10$ are also cycle-computable.
	
	\begin{lemma} \label{lem:spasms}
		\begin{itemize}
			\item All the patterns in $\Spasm(\cycle{6})$ and $\Spasm(\cycle{7})$ are \computable{\cycle{3}}.
			\item All the patterns in $\Spasm(\cycle{8})$ and $\Spasm(\cycle{9})$ are \computable{\cycle{4}}.
			\item All the patterns in $\Spasm(\cycle{10})$ are \computable{\cycle{5}}.
		\end{itemize}
	\end{lemma}

	This lemma allows us to prove the upper bound of \Thm{cycles}.
	\begin{restatable}{lemma}{uppercycle} \label{lem:uppercycle}
		For all $6\leq k\leq 10$, there is an algorithm that computes $\Sub{G}{\cycle{k}}$ in time $f(\degen)O(n^{d_{\lfloor k/2\rfloor}})$.
	\end{restatable}
	
	\section{Reducing to other patterns} \label{sec:others}
	
	Consider the set of directed patterns with $8$ vertices. Using the results of the previous section we can show that most of the orientations will either admit a \dagtree{} with $\tau=1$ or will be \reducible{\cycle{3}}. However, if a pattern $\vec{H}$ has $4$ sources and $4$ intersection vertices then it might not be cycle-reducible. Instead we might need to reduce to some hypergraphs, like in \Fig{reductions}. The following definitions will help us determining which patterns we will need to reduce to for patterns with at least $8$ vertices.
	
	\begin{restatable}{definition}{calp} [$\mathcal{P}_{i,s}$, $\mathcal{P}_k$] \label{def:group}
		We define $\mathcal{P}_{i,s}$ as the set of hypergraphs $P$ with $i$ vertices and $s$ hyperedges such that:
		\begin{enumerate}
			\item Every vertex has degree at least $2$.
			\item Every hyperedge contains at least $2$ vertices.
			\item No hyperedge is a subset of any other hyperedge.
			\item For every pair of distinct vertices $u,v \in V(P)$ the set of hyperedges containing $u$ can not be equal or a subset of the set of hyperedges containing $v$.
		\end{enumerate}
		For any $k\geq 7$, we define $\mathcal{P}_{k}$ recursively as the union of $\cP_{k-1}$ and all sets $\mathcal{P}_{i,s}$, with $i+s = k$ and $i,s\geq 4$, with $\cP_6=\{\cycle{3}\}$.
	\end{restatable}

	We can prove that for any $k$, patterns with $k$ vertices will reduce to some pattern in $\cP_k$. This was stated earlier as \Lem{pk_computable}.
	\pkcomputable*

	In order to prove \Thm{main}, we show exactly which patterns form $\cP_9$.
	
	\begin{lemma} \label{lem:nine_content}
		$
			\cP_9 = \{\cycle{3},\cycle{4},\diamondgraph,\simplex{3},\hyperone, \hypertwo\}
		$
	\end{lemma}
	\begin{figure}[t]
	\centering
	\begin{minipage}{.15\linewidth}
		\centering
		\includegraphics[width=\textwidth]{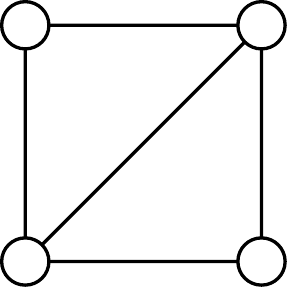}
	\end{minipage}
	\hspace{1cm}
	\begin{minipage}{.15\linewidth}
		\centering
		\includegraphics[width=\textwidth]{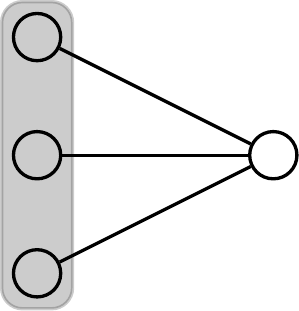}
	\end{minipage}
	\hspace{1cm}
	\begin{minipage}{.2\linewidth}
		\centering
		\includegraphics[width=\textwidth]{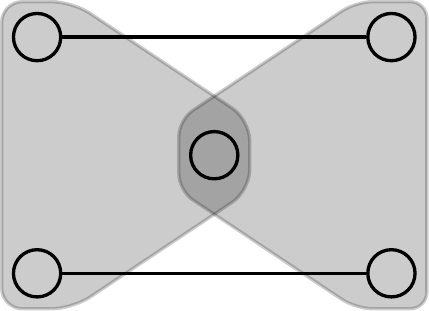}
	\end{minipage}
	\caption{The diamond graph $\diamondgraph$, the hypergraph $\hyperone$ and the hypergraph $\hypertwo$.} 
	\label{fig:hypers}
	\end{figure}
	Where, $\diamondgraph$ is the diamond pattern, $\simplex{3}$ the $3$-simplex, $\hyperone$ and $\hypertwo$ are the two hypergraphs shown in \Fig{hypers}. It turns out that simplex-reducible patterns are also cycle-reducible. Hence we can set $\cP^*_9 = \cP_9 \setminus \simplex{3}$ and show that every \computable{\cP_9} pattern is also \computable{\cP^*_9}.
	
	\begin{restatable}{lemma}{ninestar} \label{lem:nine_star}
		If a pattern $H$ is \computable{\cP_9}, then it is also \computable{\cP^*_9}
	\end{restatable}

	We can show that all the hypergraphs in $\cP^*_9$ can be counted in subquadratic time.
	
	\begin{restatable}{lemma}{allnine} \label{lem:allnine}
			For any weighted colored hypergraph $G$ with $m$ edges, there is an algorithm that computes $\WSub{G}{P}$ for all patterns $P \in \cP^*_9$ in time $\tilde{O}(m^{5/3})$.
	\end{restatable}

	Finally we can prove the main theorem.
	
	\main*
	\begin{proof}
		Let $H$ be a pattern with $9$ or less vertices. From \Lem{pk_computable} we have that $H$ is \computable{\mathcal{P}_9}, using \Lem{nine_star} we will have that it is also \computable{\cP^*_9}. Additionally, from \Lem{allnine} we have that for all hypergraphs $P\in \cP^*_9$ we can compute $\WSub{G}{P}$ in $\tilde{O}(m^{5/3})$ time. This together with \Lem{computable} gives that we compute $\Hom{G}{H}$ in $f(\degen)\tilde{O}(n^{5/3})$ time.
		
		All the graphs $H'$ in the Spasm of $H$ have also at most $9$ vertices, hence we can compute $\Hom{G}{H'}$ for them and use inclusion-exclusion to obtain the value of $\Sub{G}{H}$ in total time $f(\degen)\tilde{O}(n^{5/3})$.
	\end{proof}

	\section{Counting cycles} \label{sec:wsub}
	
	We adapt the two algorithms for counting weighted homomorphisms of cycles shown in \cite{GiLeSh+23} for computing $\WSubNI{\cycle{k}}$. The first is a combinatorial algorithm that matches the complexity of detecting directed cycles combinatorially~\cite{AlYuZw97}. The second is a matrix multiplication based algorithm which adapts the algorithm from \cite{YuZw04}. The complexity of this algorithm for counting $\cycle{k}$ is given by the value $c_k$, the exact values of $c_k$ for $k\geq 6$ are not known, but the following upper bound holds~\cite{DaVuWi19}:
	\begin{equation}
	\begin{split} 
		&c_k \leq \frac{\omega(k+1)}{2\omega+k-1}\ &\text{if $k$ is odd} \qquad\qquad
		&c_k \leq \frac{\omega k - 4/k}{2 \omega + k - 2 - 4/k}\ &\text{if $k$ is even}
	\end{split}
	\end{equation}
	Where $\omega$ is the matrix multiplication exponent.
	\begin{lemma} \label{lem:wsub_cycles}
		Let $G$ be a colored weighted graph with $m$ edges. For all $k>3$:
		\begin{itemize}
			\item There is a combinatorial algorithm that computes $\WSub{G}{\cycle{k}}$ in time $\tilde{O}(m^{2-1/\lceil k/2 \rceil })$.
			\item There is an algorithm that computes $\WSub{G}{\cycle{k}}$ in time $\tilde{O}(m^{c_k})$.
		\end{itemize}
	\end{lemma}

	For any $k$ we use $d_k$ for the fastest of the two algorithms, similar to \cite{GiLeSh+23}. \Lem{cycle_complexity} follows directly from \Lem{wsub_cycles} and the following equation:
	\begin{equation} \label{eq:dk}
		d_k = min(2-1/\lceil k/2 \rceil,c_k)
	\end{equation}

	Note that $d_k < 2$ for all $k$, hence we have subquadratic algorithms for all cycles. For $k<6$ and using the fastest matrix multiplication exponent known to date $\omega = 2.371339$ \cite{AlDuWi+25}, we have that the matrix multiplication algorithm is faster and we get the following approximate values of $d_k$: $d_3 \approx 1.41$, $d_4 \approx 1.48$ and $d_5 \approx 1.63$.
	
	\bibliography{subgraph_counting_doi}
	
	\newpage
	\appendix
	\section{Proofs of Section \ref{sec:reduce}}
	
	\subsection{Proof of \Lem{construct_reduced_graph}}
	
	First, we show how to compute homomorphism extensions of the different subgraphs $\vec{H}(S_e)$ of \reducible{P} patterns, this is a necessary step in order to construct the reduced graph.
	
	\begin{lemma} \label{lem:extension_subgraphs}
		Let $\vec{H}$ be a \reducible{P} directed pattern and $\vec{G}$ a directed input graph with $n$ vertices and maximum outdegree $\maxoutdeg$. Let $\{S_e : e\in E(P)\}$ and $\{I_v : v\in V(P)\}$ be the sets of sources and intersections that achieve the $P$-reducibility. For every set of sources $S_e$ we can compute $\extension{\vec{H}(S_e)}{\vec{G}}{\phi}$ for every $\phi: I(e) \to \vec{G}$ in $f(\maxoutdeg)\tilde{O}(n)$ time. Additionally, there are at most $f(\maxoutdeg)O(n)$ homomorphisms $\phi$ with non-zero extension.
	\end{lemma}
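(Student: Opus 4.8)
The plan is to reduce the statement to a direct application of Bressan's black-box extension algorithm (\Lem{bressan_ext}), after verifying that each subgraph $\vec{H}(S_i)$ together with the source $s_i$ fits the hypotheses of that lemma. First I would recall from \Def{reducible} that each $\vec{H}(S_i)$ admits a $\dtw=1$ \dagtree{} $T_i$ that can be taken to be rooted at $s_i$; since $s_i \in S_i$ reaches both $I_{i-1}$ and $I_i$, the set $\Reachable_{\vec{H}(S_i)}(s_i) = \vec{H}(S_i)[s_i]$ contains $I_{i-1}\cup I_i$. The subtlety here is that \Lem{bressan_ext} as stated returns $\extension{\vec{H}(S_i)}{\vec{G}}{\psi}$ for homomorphisms $\psi$ from the \emph{full} root-reachable set $\vec{H}(S_i)[s_i]$, whereas we want extensions of partial homomorphisms $\phi$ defined only on $I_{i-1}\cup I_i \subseteq \vec{H}(S_i)[s_i]$. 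So the first real step is to pass from one to the other.

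Concretely, I would run the algorithm of \Lem{bressan_ext} on the pattern $\vec{H}(S_i)$, the \dagtree{} $T_i$ rooted at $s_i$, and the input $\vec{G}$, obtaining $\extension{\vec{H}(S_i)}{\vec{G}}{\psi}$ for every homomorphism $\psi : \vec{H}(S_i)[s_i] \to \vec{G}$ in $f(k,\maxoutdeg)\tilde O(n)$ time. Then for each such $\psi$ I restrict it to $I_{i-1}\cup I_i$ to get a map $\phi = \psi|_{I_{i-1}\cup I_i}$, and I aggregate: $\extension{\vec{H}(S_i)}{\vec{G}}{\phi} = \sum_{\psi \,:\, \psi|_{I_{i-1}\cup I_i} = \phi} \extension{\vec{H}(S_i)}{\vec{G}}{\psi}$. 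This identity holds because every homomorphism $\phi' : \vec{H}(S_i) \to \vec{G}$ extending $\phi$ restricts on $\vec{H}(S_i)[s_i]$ to exactly one $\psi$ with $\psi|_{I_{i-1}\cup I_i} = \phi$, and conversely; so summing the extension counts of those $\psi$ partitions the set of extensions of $\phi$ with no double counting. The aggregation is a single pass over the (at most $f(\maxoutdeg)O(n)$ many) maps $\psi$ with nonzero extension, bucketed by their restriction to $I_{i-1}\cup I_i$; this is within the claimed $f(\maxoutdeg)\tilde O(n)$ budget, absorbing the $\log$ factors from sorting/hashing the buckets.

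For the counting bound on the number of $\phi$ with nonzero extension: a map $\phi : I_{i-1}\cup I_i \to \vec{G}$ has nonzero extension only if it arises as the restriction of some $\psi : \vec{H}(S_i)[s_i] \to \vec{G}$ with $\extension{\vec{H}(S_i)}{\vec{G}}{\psi} > 0$, and the number of such $\psi$ is at most $f(\maxoutdeg)O(n)$ by \Lem{bressan_ext} (each $\psi$ is determined by $\psi(s_i)$ — $n$ choices — and then the bounded-outdegree BFS trees out of $s_i$ give $f(\maxoutdeg)$ completions). Restriction only decreases this count, giving the bound. The main obstacle I anticipate is purely bookkeeping: making sure the root $s_i$ guaranteed by \Def{reducible} genuinely reaches \emph{all} of $I_{i-1}\cup I_i$ (so that $\phi$ is indeed a restriction of some root map), which is exactly the technical requirement built into the definition that ``each subgraph $\vec{H}(S_i)$ admits a $\dtw=1$ \dagtree{} rooted at the source $s_i$'' reaching both adjacent intersection sets — so this is already handed to us, and the proof is essentially the restriction-aggregation argument above plus a citation of \Lem{bressan_ext}.
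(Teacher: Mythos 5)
Your proposal is correct and follows essentially the same route as the paper's proof: invoke Bressan's black-box (\Lem{bressan_ext}) on $\vec{H}(S_i)$ with the $\dtw=1$ \dagtree{} rooted at $s_i$, use the fact that $I_{i-1}\cup I_i\subseteq\Reachable(s_i)$ to aggregate the extension counts over restrictions, and bound both the running time and the number of nonzero-extension maps by the $O(n\maxoutdeg^{k})$ bound on homomorphisms from $\vec{H}(s_i)$. You actually spell out the partition identity behind the aggregation step more explicitly than the paper does, which is a welcome addition rather than a deviation.
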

	\begin{proof}
		Fix a set of sources $S_e$. From the definition of \reducible{P} we have that $\vec{H}(S_e)$ admits a \dagtree{} with $\dtw=1$ rooted at $s_e$. Therefore we can use Bressan's results in \Lem{bressan_ext} to compute the quantity $\extension{\vec{H}(S_e)}{\vec{G}}{\phi}$ for each homomorphism $\phi: \vec{H}(s_e) \to \vec{G}$. Note that from \Def{reducible}, we have that $I(e) \subseteq \vec{H}(s_e)$, hence it is possible to iterate over the homomorphisms from $\vec{H}(s_e)$ and aggregate the counts in order to obtain, for each homomorphism $\phi': I(e)\to \vec{G}$ the quantity $\extension{\vec{H}(S_e)}{\vec{G}}{\phi'}$, that is, the number of homomorphisms of $\vec{H}(S_e)$ in $\vec{G}$ that map the vertices in $I(e)$ to some specific set of vertices.
		
		The number of homomorphisms from $\vec{H}(s_e)$ to $\vec{G}$ is bounded by $O(n \maxoutdeg^k)$. Hence the call to Bressan's and the aggregation will take a total of $f(\maxoutdeg)\tilde{O}(n)$ time.
		
		Additionally, each homomorphism from $\vec{H}(s_e)$ to $\vec{G}$ can contribute to exactly one homomorphism from $I(e)$ to $\vec{G}$, therefore the total number of such homomorphisms with non-zero extension will be at most $f(\maxoutdeg)O(n)$.
	\end{proof}
	
	We can now prove \Lem{construct_reduced_graph}.
	
	\construct*
	\begin{proof}[Proof of \Lem{construct_reduced_graph}]
		Let $\{S_e : e\in E(P)\}$ and $\{I_v : v\in V(P)\}$ be the sets of sources and intersection vertices that achieve the $P$-reducibility. Using \Lem{extension_subgraphs} we can compute, for every homomorphism $\phi: I(e) \to \vec{G}$ the quantity $\extension{\vec{H}(S_e)}{\vec{G}}{\phi}$, in $f(\maxoutdeg)\tilde{O}(n)$ time. Additionally we know that for each $S_e$ there are at most $f(\maxoutdeg)O(n)$ homomorphisms $\phi$ with non-zero extension.
		
		Initialize $\reduced{P}$ as the empty graph. Now, for every $e\in E(P)$, and for every $\phi: I(e) \to \vec{G}$ with non-zero extension, let $\phi_v$ be the restriction of $\phi$ to $I_v$ for each $v \in e$. If not present in the graph, create the vertex $u_v = (\phi_v(I_v)\mhyphen(v))$ for each $v \in e$. Then add a hyperedge connecting the vertices $\{ u_v : v \in e\}$ with weight $\extension{\vec{H}(S_e)}{\vec{G}}{\phi}$. The resulting graph is exactly $\reduced{P}$.
		
		The complexity of this process is dominated by computing the extensions as in \Lem{extension_subgraphs}, hence the graph can be constructed in $f(\maxoutdeg)\tilde{O}(n)$ time. Also, the number of hyperedges is equal to the number of homomorphisms with non-zero extension, which again using \Lem{extension_subgraphs} will be bounded by $f(\maxoutdeg)O(n)$. Finally, the number of vertices added to the graph is at most the number hyperedges times the maximum arity of $P$, which is bounded by $k$, and hence it will be bounded also by $f(\maxoutdeg)O(n)$.
	\end{proof}

	\subsection{Proof of \Lem{equivalence_reduction}}
	First we show that each colorful copy of $P$ in $\reduced{P}$ can be related to $|V(P)|$ homomorphisms that do not disagree in any vertex.
	
	\begin{claim} \label{clm:connected}
		Let $P' \in \colSetSub(P,\reduced{P})$ with vertex set $\{u_v : v \in V(P)\} = V(P')$. Consider the homomorphisms $\{\phi_v\}$ corresponding to each vertex $u_v \in V(P')$ such that $u_v = (\phi_v(I_v)\mhyphen v)$. Every pair of homomorphisms will agree with each other.
	\end{claim}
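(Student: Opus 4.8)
The plan is to show that any two homomorphisms $\phi_i, \phi_j$ (with $i < j$) associated to the vertices of a colorful $k$-cycle $H'$ in $\reduced{\cycle{k}}$ agree on their common domain $I_i \cap I_j$. The key structural fact we will exploit is the ``continuous sequence'' requirement from \Def{reducible}: if a vertex $u \in V(\vec{H})$ lies in both $I_i$ and $I_j$, then $u$ lies in all of $I_i, I_{i+1}, \dots, I_j$ or all of $I_j, I_{j+1}, \dots, I_i$ (indices mod $k$). So the common domain of $\phi_i$ and $\phi_j$ decomposes into such ``arcs'' of the cyclic sequence, and it suffices to handle a single arc, and in fact by transitivity it suffices to handle \emph{consecutive} layers: if $u \in I_i \cap I_{i+1}$, then $\phi_i(u) = \phi_{i+1}(u)$.

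\textbf{Step 1: consecutive layers agree via the connecting edge.} Fix $i$ and consider the edge of $H'$ between $v_i = (\phi_i(I_i)\mhyphen i)$ and $v_{i+1} = (\phi_{i+1}(I_{i+1})\mhyphen(i+1))$. By \Def{reduced_graph}, every edge between layer $i$ and layer $i+1$ arises from some homomorphism $\psi \colon I_i \cup I_{i+1} \to \vec{G}$ with non-zero extension, where the layer-$i$ endpoint is $(\psi|_{I_i}(I_i)\mhyphen i)$ and the layer-$(i+1)$ endpoint is $(\psi|_{I_{i+1}}(I_{i+1})\mhyphen(i+1))$. Matching endpoints of a vertex $(\phi(I)\mhyphen i)$ only records the \emph{image set} $\phi(I)$, not the map itself, so I must be slightly careful here: what the edge guarantees directly is that $\psi|_{I_i}$ and $\phi_i$ have the same image set, and likewise on $I_{i+1}$. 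I expect the cleanest route is to instead \emph{define} $\phi_i$ to be (a choice of) the restriction $\psi|_{I_i}$ coming from the edge on one side of $v_i$, and then check consistency with the edge on the other side of $v_i$; the two edges incident to $v_i$ in $H'$ go to layers $i-1$ and $i+1$, and the source condition (there is $s_i \in S_i$ reaching both $I_{i-1}$ and $I_i$, and the analogous statement for $S_{i+1}$) together with $I_{i-1}\cup I_i \subseteq \vec H(s_i)$ is what lets a single homomorphism of $\vec H(S_i)$ simultaneously pin down the images of $I_{i-1}$ and $I_i$ consistently. So for consecutive layers the agreement is essentially built into how edges of $\reduced{\cycle{k}}$ are created.

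\textbf{Step 2: propagate around an arc.} Suppose $u \in I_i \cap I_j$. By the continuous-sequence property, WLOG $u \in I_\ell$ for every $\ell$ on the arc $i, i+1, \dots, j$. Walk along the path $v_i, v_{i+1}, \dots, v_j$ in $H'$; each consecutive step gives $\phi_\ell(u) = \phi_{\ell+1}(u)$ by Step 1 (the edge between $v_\ell$ and $v_{\ell+1}$ records a homomorphism of $\vec H(S_{\ell+1})$ defined on $I_\ell \cup I_{\ell+1}$, hence on $u$, and its restrictions to the two layers are exactly what the vertices $v_\ell, v_{\ell+1}$ encode). Composing along the arc yields $\phi_i(u) = \phi_j(u)$. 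Since $u$ was arbitrary in $I_i \cap I_j$, the two homomorphisms agree, which is the assertion.

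\textbf{Main obstacle.} The delicate point is the one flagged in Step 1: a vertex of $\reduced{\cycle{k}}$ is labeled by an \emph{image set} $\phi(I_i)\subseteq V(\vec G)$, not by the homomorphism $\phi$ itself, so a priori the cycle $H'$ only fixes images of the $I_i$'s up to reordering within each image set. The resolution must use that the paper implicitly takes $\phi_i$ to be the homomorphism \emph{witnessing} an incident edge (this is consistent because the edges incident to $v_i$ both come from homomorphisms whose domains contain $I_i$, and \Lem{extension_subgraphs} produces these by aggregating genuine homomorphisms of $\vec H(S_i)$), so the relevant maps are honest homomorphisms and Step 1's edge-based argument is valid. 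I would make this precise by stating at the outset that each $\phi_i$ is chosen as the restriction to $I_i$ of one of the two witnessing homomorphisms of the edges of $H'$ at $v_i$, and then the bulk of the proof is verifying that the two choices at each $v_i$ coincide on $I_i$ — which again follows from $I_{i-1}\cup I_i \subseteq \vec H(s_i)$ and the fact that both edges at $v_i$ ultimately restrict a homomorphism of the same graph $\vec H(S_i)$ (here I use condition (1) of \Def{reducible}, $\vec H(S_i)\cap \vec H(S_{i+1}) = I_i$, to see the edge into layer $i+1$ is built from $\vec H(S_{i+1})$ and restricts to $I_i$, matching the layer-$i$ side).
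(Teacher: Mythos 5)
Your proposal matches the paper's proof: both establish agreement on consecutive layers by noting that each edge of $\reduced{\cycle{k}}$ is built from a single homomorphism on $I_i\cup I_{i+1}$, and then handle non-adjacent layers by propagating along the arc of the cycle on which the shared vertex appears in every intersection set (the paper phrases this as a contradiction, you phrase it directly, but the argument is identical). The notational subtlety you flag about $(\phi(I_i)\mhyphen i)$ recording an image rather than a map is a fair observation about the paper's notation, but it does not change the substance of the argument.
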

	\begin{proof}
		We prove by contradiction. Consider the case that for some $P' \in \colSetSub(P,\reduced{P})$ we have two vertices $u_v$ and $u_{v'}$ for which the corresponding homomorphisms $\phi_v$ and $\phi_{v'}$ disagree in the value of $i$, that is, $\exists i \in I_v \cap I_{v'}$ such that $\phi_{v}(i) \neq \phi_{v'}(i)$.
		
		From \Def{reducible} we have that the vertices $v \in V(P)$ for which $I_v \ni i$ induce a connected hypergraph in $P$. Let $P_i$ be that hypergraph. Note that $v,v' \in P_i$, hence there must exist a sequence of vertices $\{v, v_1, v_2 ... v'\}$  from $v$ to $v'$ in $P_i$ such that there is a hyperedge in $P$ containing every consecutive pair of the sequence, otherwise, $P_i$ is not connected and we reach a contradiction.
		
		Now, consider the vertices $u_v, u_{v_1},..., u_{v'} \in V(P')$ corresponding to each vertex of the sequence. For each consecutive pair $u_{v_l},u_{v_{l+1}}$ in the sequence, we have that there is a hyperedge $e$ in $\reduced{P}$ that contains both $u_{v_l}$ and $u_{v_{l+1}}$. But this implies that the homomorphisms $\phi_{v_l}$ and $\phi_{v_{l+1}}$ corresponding to those vertices can not disagree in the value of $i$, as they are both obtained by restricting the same homomorphism $\phi: I(e) \to \vec{G}$ to $I_{v_l}$ and $I_{v_{l+1}}$ respectively.
		
		This implies that for every pair of vertices in the sequence the corresponding homomorphisms must agree. However, this also implies that all the homomorphisms in the sequence must map $i$ to the same vertex, reaching a contradiction.
	\end{proof}
	
	We can now prove \Lem{equivalence_reduction}. This lemma relates the counts between the original and the reduced graph.
	
	\equivalence*
	\begin{proof}
		
		Let $\colSetSub(P,\reduced{P})$ be the collection of appearances of colorful $P$ in $\reduced{P}$. Because of how we set the colors in the reduced graph, every such appearance will have one vertex in each of the $|V(P)|$ layers of $\reduced{P}$.
		
		Let $\Phi(I^*,\vec{G}) = \{\phi: I^* \to \vec{G}\}$. Every homomorphism from $\vec{H}$ to $\vec{G}$ will extend exactly one of the homomorphisms $\phi$ in $\Phi(I^*,\vec{G})$. Hence we can write:
		
		\begin{equation} \label{eq:step0}
			\Hom{\vec{G}}{\vec{H}} = \sum_{\phi \in \Phi(I^*,\vec{G})} \extension{\vec{H}}{\vec{G}}{\phi}
		\end{equation}
		
		 Every such homomorphism $\phi$ in $\Phi(I^*,\vec{G})$ corresponds exactly with a set of homomorphisms $\phi_v \in \phi(I_v, \vec{G})$ for each $v \in V(P)$, such that $\phi = \bigcup_{v \in V(P)} \phi_v$. Note that all such homomorphisms must agree which each other. Moreover, let $\Phi_\times = \bigtimes_{v \in V(P)}{\Phi(I_v,\vec{G})}$ be the collection of all such possible sets and $\Phi'_\times$ the restriction of $\Phi_\times$ to sets where all the homomorphisms agree with each other. Every set $\{\phi_v\} \in \Phi'_\times$ will correspond with exactly one homomorphism $\phi$ in $\Phi(I^*,\vec{G})$. Hence we can express the previous expression as follows:
		
		\begin{equation} \label{eq:step1}\sum_{\phi \in \Phi(I^*,\vec{G})} \extension{\vec{H}}{\vec{G}}{\phi} = \sum_{\{\phi_v \} \in \Phi'_\times }
			\extension{\vec{H}}{\vec{G}}{\bigcup_{v \in V(P)} \phi_v}
		\end{equation}
		
		Now, consider the subgraphs $\vec{H}(S_e)$ for every $e \in E(P)$. Note that these subgraphs only intersect with each other on the vertices of $I^*$, and $I(e) = I^* \cap \vec{H}(S_e)$. For a hyperedge $e \in E(P)$ and a homomorphism $\phi \in \Phi(I^*,\vec{G})$, let $\phi_e$ be the restriction of $\phi$ to the vertices of $I(e)$. We can then write $\extension{\vec{H}}{\vec{G}}{\phi}$ as the product of extensions $\extension{\vec{H}(S_e)}{\vec{G}}{\phi_e}$. Moreover, when expressing $\phi$ as a set $\{\phi_v \} \in \Phi'_\times$, every $\phi_e$ will be equal to the union of $\phi_v$ for all $v \in e$. Hence for any such set we will have:
		
		\begin{equation}\label{eq:step2}
			\extension{\vec{H}}{\vec{G}}{\bigcup_{v \in V(P)} \phi_v} = 
			\prod_{e\in E(P)} \extension{\vec{H}(S_e)}{\vec{G}}{\bigcup_{v \in e} \phi_v}
		\end{equation}
		
		Combining \Eqn{step0}, \Eqn{step1} and \Eqn{step2} we get:
		
		\begin{equation} \label{eq:step3}
			\Hom{\vec{G}}{\vec{H}} = 
			\sum_{\{\phi_v \} \in \Phi'_\times}
			\prod_{e\in E(P)} \extension{\vec{H}(S_e)}{\vec{G}}{\bigcup_{v \in e} \phi_v}
		\end{equation}
		
		Note that every homomorphisms $\phi_v$ in the sets of $\Phi'_\times$ corresponds with the vertex $(\phi_v(I_v)\mhyphen v)$ of $\reduced{P}$. Additionally, by \Def{reduced_graph} the value of $\extension{\vec{H}(S_e)}{\vec{G}}{\bigcup_{v \in e} \phi_v}$ corresponds with the weight of the hyperedge connecting the vertices $\{(\phi_v(I_v)\mhyphen v) : v\in e\}$ of $\reduced{P}$. Thus, we can rewrite the previous expression, by iterating over vertices of the different layers, as long as their corresponding homomorphisms agree with each other. Let $V(\Phi'_\times)$ be the collection of sets of vertices $\{u_v\}$ corresponding to each $\{\phi_v\} \in \Phi'_\times$. We have:
		
		\begin{equation}\label{eq:step4}
				\sum_{\{\phi_v \} \in \Phi'_\times}
				\prod_{e\in E(P)} \extension{\vec{H}(S_e)}{\vec{G}}{\bigcup_{v \in e} \phi_v} \\
				= \sum_{\{u_v \} \in V(\Phi'_\times)} 
				\prod_{e\in E(P)} w(\{u_v: v\in e\})
		\end{equation}
		
		For a set of vertices $\{u_v\}$ to have non-zero total product, we will require that all the hyperedges $\{u_v: v\in e\}$ for $e\in E(P)$ have non-zero weight. Hence, the subgraph of $\reduced{P}$ induced by $\{u_v\}$ must contain a copy of $P$. Additionally, every copy of $P$ in $\reduced{P}$ will correspond to some set of vertices $\{u_v \} \in V(\Phi'_\times)$, as by \Clm{connected} the vertices of every copy must agree in their corresponding homomorphisms. Given this, together with \Eqn{step3} and \Eqn{step4} we get:
		
		\begin{equation}
			\Hom{\vec{G}}{\vec{H}}  
			= \sum_{\{u_v \} \in V(\Phi'_\times)} 
			\prod_{e\in E(P)} w(\{u_v: v\in e\})
			=\sum_{P' \in \colSetSub(P,\reduced{P})} \prod_{e'\in E(P')} w(e')
		\end{equation}
		
		Which by the definition of $\WSubNI{P}$ (Equation \Eqn{wsub}) is precisely $\WSub{\reduced{P}}{P}$.
		
	\end{proof}

	\subsection{Proof of \Lem{computable}}
	
	\computablelemma*
	\begin{proof}
		First we compute the degeneracy orientation $\vec{G}$ of $G$, this will take $O(n+m)$ time. Note that the number of edges is bounded by $f(\degen)O(n)$, the maximum outdegree of $\vec{G}$ will be $d = O(\degen)$.
		Let $H$ be a \computable{\cP} pattern. Every acyclic orientation $\vec{H} \in \Sigma(H)$ either has a \dagtreewidth{} of $1$ ot it is \reducible{P} for some hypergraph $P\in \cP$. In the first case we can use \Thm{bressan} to compute $\Hom{\vec{G}}{\vec{H}}$ in $f(\degen)\tilde{O}(n)$ time. In the second case we can use \Lem{main_directed} to compute $\Hom{\vec{G}}{\vec{H}}$ in time $f(d)\tilde{O}(n^c)$. We can then aggregate the counts for all the orientations, obtaining $\Hom{G}{H}$ in time $f(\degen)\tilde{O}(n^c)$.
	\end{proof}

	\section{Proofs of Section \ref{sec:reductions}}

	\subsection{Proof of \Lem{3sources}}
		\threesources*
		\begin{proof}
		First, if the pattern has only $1$ or $2$ sources, then one can construct a \dagtree{} $T$ of $\vec{H}$ with $\dtw=1$ by putting every source in its own node of $T$.
		
		Now we consider the case where $\vec{H}$ has $3$ sources. Let $s_1$, $s_2$ and $s_3$ be the three sources. Consider the case where there is a pair of sources that do not intersect with each other, let wlog $s_1,s_3$ be those sources, that is, $\Reachable(s_1) \cap \Reachable(s_3) = \emptyset$. We can construct a \dagtree{} of $\vec{H}$ by putting every source in its own bag an putting $s_2$ in between $s_1$ and $s_3$. This will be a valid \dagtree{} with $\dtw=1$.
		
		Otherwise, every pair of sources intersect each other. Let $S_i = \{s_i\}$ and set $I_i = \vec{H}(S_i) \cap \vec{H}(S_{i+1})$. We show that these sets give a valid $\cycle{3}$-reduction. First, the two sets are valid sets, the source sets are non-overlapping and have $\dtw=1$ (as there is only one source in each set). While the intersections are all adjacent to each other, so intersection sets containing the same vertex will form a continuous sequence. Also, we can check that the two conditions from \Def{reducible} are satisfied directly by how we constructed the sets. Hence the pattern will be \reducible{3}\footnote{Note that it is possible that some of these patterns also have $\dtw=1$.}. 
	\end{proof}

	\subsection{Proof of \Lem{3intersections}}
	\threeintersections*
	\begin{proof}
		If there are $1$ or $2$ intersection vertices then one can construct a valid \dagtree{} $T$ with $\dtw(T)=1$ by finding a source that reaches all the intersection vertices (one must exist or the pattern is not connected) and setting it as the root of $T$ and then connecting all the other sources directly to the root as singleton bags. For every pair of sources, their intersection will be a subset of the vertices reachable by the root and hence we will have a valid \dagtree{}.
		
		If we have $3$ intersection vertices we need to consider the following cases. If there is a source that can reach the $3$ intersection vertices then the previous construction will give a valid \dagtree{} with $\dtw=1$. Hence, we consider the case where every source at most reach $2$ of the intersection vertices. Let $\{u,v,w\} = I(\vec{H})$ be the three intersection vertices, and consider the three pairs $u\mhyphen v,v\mhyphen w,u\mhyphen w$. For every source $s \in S(\vec{H})$ we will have that $s$ either reaches one of the intersection vertices or one of the three pairs. If only two of the pairs are reachable by the sources then we can construct a \dagtree{} with $\dtw=1$: Let $u\mhyphen v,v\mhyphen w$ be the two pairs that appear, and let $s_1$ and $s_2$ be two sources with $u,v \in \Reachable(s_1)$ and $v,w \in \Reachable(s_2)$, set $s_1$ as the root of the tree and $s_2$ as one of its children, connect every other source to either $s_2$ if they can reach $w$ and to $s_1$ otherwise.
		
		If the three pairs are reachable, then we can show a valid $\cycle{3}$-reduction. Let $s_1,s_2,s_3$ be three sources reaching a different pair of intersection vertices, specifically, let $u,v \in \Reachable(s_1)$, $v,w \in \Reachable(s_2)$ and $u,w \in \Reachable(s_3)$. We put $s_1$ in $S_1$, $s_2$ in $S_2$ and $s_3 \in S_3$, for every other source, if they reach two intersection vertices then we put it in the set which contains a source reaching the same two vertices, if they reach only one intersection vertex then we arbitrarily add it to one of the two sets of sources that reach that vertex. We set $I_1=\{v\}$, $I_2=\{w\}$ and $I_3=\{u\}$. We can verify that this is a valid $\cycle{3}$-reduction, as it will satisfy all the properties from \Def{reducible}.
	\end{proof}

	\subsection{Proof of \Lem{sixseven}}
	\sixseven*	
	\begin{proof}
		Fix a pattern $H$, any acyclic orientation $\vec{H} \in H$ will either have at most $3$ sources or $3$ intersection vertices. In the first case we can use \Lem{3sources} to show that $\vec{H}$ is \reducible{\cycle{3}} or has $\dtw=1$, similarly in the second case we can use \Lem{3intersections}. Hence, all acyclic orientations of $H$ are either \reducible{\cycle{3}} or have $\dtw=1$, which implies that $H$ is \computable{\cycle{3}}.
	\end{proof}

	\subsection{Proof of \Lem{cycle_hom}}
	\cyclehom*
	\begin{proof}
		Fix any $k$, we want to show that all acyclic orientations of $\cycle{2k}$ and $\cycle{2k+1}$ have $\dtw=1$ or are \reducible{\cycle{l}} for some $l \leq k$. Note that any such orientation can have at most $k$ sources. If the pattern has either $1$ or $2$ sources, then we can construct a \dagtree\ of width one by putting the sources in their own bags. If an orientation $\vec{H}$ has $l \geq 3$ sources then we can show that it is \reducible{\cycle{l}}:
		
		Let $s_0,...,s_{l-1}$ be the $l$ sources of $\vec{H}$, ordered in either clockwise or anti-clockwise. For $i \in [l]$, set $S_i = \{s_i\}$ and $I_i = \vec{H}(S_i) \cap \vec{H}(S_{i+1})$, note that these sets are always non-empty, as consecutive sources in a cycle intersect in a vertex. Additionally a source $s_i$ will only reach two intersection vertices, the one with $s_{i-1}$ and the one with $s_{i+1}$. Therefore, these sets will satisfy all the conditions in \Def{reducible}, giving that $\vec{H}$ is \reducible{\cycle{l}}.
	\end{proof}

	\subsection{Proof of \Lem{spasms}}
	
	We start by proving some auxiliary lemmas.
	
		\begin{lemma} \label{lem:merge}
		Given two directed patterns $\vec{P},\vec{P}'$ with $\dtw(\vec{P}) = \dtw(\vec{P}') = 1$, the graph $\vec{P}''$ obtained by either merging a vertex from $\vec{P}$ with a vertex from $\vec{P}'$ or two vertices $u,v$ from $\vec{P}$ connected by an edge with two vertices $u',v'$ from $\vec{P}'$ connected by an edge will also have $\dtw(\vec{P}'') = 1$.
	\end{lemma}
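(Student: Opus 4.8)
The plan is to exhibit, from width-one \dagtree{}s of $\vec{P}$ and $\vec{P}'$, an explicit width-one \dagtree{} of $\vec{P}''$. I would first record the reformulation of the condition $\dtw(\vec{H})=1$ that falls straight out of the third \dagtree{} axiom once the bags are singletons: a width-one \dagtree{} of a DAG $\vec{H}$ is a tree $T$ whose nodes are labelled by sources of $\vec{H}$ (every source used at least once) such that for every vertex $w$ the set $S^{\vec{H}}_w:=\{\,s\in S(\vec{H}):w\in\Reachable_{\vec{H}}(s)\,\}$ induces a connected subtree of $T$ (apply $\Reachable(B_1)\cap\Reachable(B_2)\subseteq\Reachable(B)$ with singleton bags, ranging over all $w$ in the intersection). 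After a routine normalization using the same axiom I would assume each source occurs in exactly one node, and I take the edge-merge to be orientation-consistent ($u\to v$ in $\vec{P}$, $u'\to v'$ in $\vec{P}'$, merging $u$ with $u'$ and $v$ with $v'$), since otherwise $\vec{P}''$ would contain a digon. Fix width-one \dagtree{}s $T$ of $\vec{P}$ and $T'$ of $\vec{P}'$.

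Two preliminary steps are easy. (a) $\vec{P}''$ is acyclic: the only vertices common to its $\vec{P}$-part and its $\vec{P}'$-part are the merge locus --- a single vertex $y$, or the two endpoints $a\to b$ of the merged edge --- so a directed cycle meeting both parts would have to enter and leave each part through that locus (impossible without repeating a vertex), while a cycle confined to one part would be a cycle of $\vec{P}$ or of $\vec{P}'$. (b) Reachability in $\vec{P}''$ decomposes: because the merge locus separates the two parts, for $w$ on the $\vec{P}$-side we get $S^{\vec{P}''}_w=S^{\vec{P}}_w$ when the merge vertex does not reach $w$ in $\vec{P}$, and $S^{\vec{P}''}_w=(S^{\vec{P}}_w\setminus\{x\})\cup R$ otherwise, where $R$ is the set of sources of $\vec{P}''$ reaching the merge vertex; symmetrically on the $\vec{P}'$-side. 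In the edge case the portals $a,b$ are nested, since $u\to v$ forces $\Reachable_{\vec{P}}(u)\supseteq\{u\}\cup\Reachable_{\vec{P}}(v)$, so the extra $\vec{P}$-side reachers of $b$ are already absorbed into $S^{\vec{P}}_w$ and the same two-case description survives.

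The construction of $T''$ is the core. Let $D:=S^{\vec{P}}_{x}$ and $D':=S^{\vec{P}'}_{x'}$ be the sources of the two pieces reaching the merge vertex; each induces a subtree, $\Theta\subseteq T$ and $\Theta'\subseteq T'$, a single node when the corresponding endpoint is itself a source. I split on which of $x,x'$ remain sources of $\vec{P}''$. If both do, glue $T$ and $T'$ by identifying $\{x\}$ with $\{x'\}$, relabelled $\{y\}$. If $x$ is a source of $\vec{P}$ but not of $\vec{P}''$ (because $x'$ was not a source of $\vec{P}'$), delete $\{x\}$ from $T$ --- breaking it into the subtrees $C_1,\dots,C_d$ that hung below it --- and attach each $C_i$ by a pendant edge to one fixed node $\beta\in\Theta'$, leaving $T'$ untouched; symmetrically with the roles swapped. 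If neither endpoint is a source of $\vec{P}''$, take $T\sqcup T'$ and add a single edge between a node of $\Theta$ and a node of $\Theta'$. The edge-merge case uses the same recipe with the separator $\{a,b\}$ in place of $y$, the nestedness from step (b) ensuring that one connecting edge, placed inside the subtrees of the reachers of $a$, simultaneously serves $a$ and $b$.

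Finally I would verify the subtree property of $T''$ for every vertex $w$ of $\vec{P}''$, reading $S^{\vec{P}''}_w$ off step (b). If $w$ is not downstream of the merge then $S^{\vec{P}''}_w$ equals $S^{\vec{P}}_w$ or $S^{\vec{P}'}_w$; this set avoids the (at most one) deleted node, so the subtree representing it lies inside a single surviving component $C_i$ (or an untouched part of $T$, $T'$) and persists in $T''$. If $w$ is downstream of the merge, say on the $\vec{P}$-side, then $S^{\vec{P}''}_w=(S^{\vec{P}}_w\setminus\{x\})\cup R$, and here is the step I expect to be the main obstacle: $S^{\vec{P}}_w$ is a connected subtree of $T$ through $\{x\}$, hence wherever it extends into a component $C_i$ it also contains the root of $C_i$ (the unique neighbour of $\{x\}$ in that direction, and subtrees are closed under internal paths), so once the $C_i$'s are re-hung from $\beta$ the whole set stays connected and joins up through $\beta$ with $\Theta'$, which represents $D'\subseteq R$ and contains $\beta$; in the cases with no deletion the analogous check is immediate because only one connecting edge was added between the relevant subtrees. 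The merge vertex itself has $S^{\vec{P}''}_w$ equal to $\{y\}$, to $D'$, or to $D\cup D'$ according to the case, each represented by a connected subtree. Thus $T''$ is a width-one \dagtree{} of $\vec{P}''$ and $\dtw(\vec{P}'')=1$; the only tedious ingredient is the bookkeeping of source-status changes in the edge case (now up to four of $u,v,u',v'$ are involved) together with the normalization to unique occurrences, neither of which alters the argument.
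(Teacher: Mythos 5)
Your proof is correct and follows essentially the same route as the paper's: both arguments fix width-one \dagtree{}s of $\vec{P}$ and $\vec{P}'$, case-split on whether the merged vertices are sources, and splice the two trees at nodes whose reachable sets contain the merge locus (identifying the two source-bags, substituting/re-hanging when one endpoint loses source status, or adding a single connecting edge when neither endpoint is a source). Your version is more explicit about the verification — the connected-subtree reformulation and the reattachment of the components $C_i$ — but the construction and case analysis coincide with the paper's.
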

	
	\begin{proof}
		Consider a \dagtree{} $T$ of $\vec{P}$ and a \dagtree{} $T'$ of $\vec{P}'$ with $\dtw(T)=\dtw(T')=1$. Let $r \in T$ be any bag with $u \in \Reachable(r)$ (note that in the case of merging two vertices $v$ will also be in $\Reachable(r)$), similarly let $r' \in T'$ be any bag with $u' \in \Reachable(r')$. Now we consider the following scenarios:
		
		\begin{enumerate}[a)]
			\item $u$ is a source in $\vec{P}$ and $u'$ is a source in $\vec{P}'$: In this case we have that $r=u$ and $r'=u'$. Additionally, in $\vec{P}''$ the merged vertex $u$ will still be a source. Hence we can form a valid \dagtree{} $T''$ of $\vec{P}''$ with $\dtw=1$ by merging the bags $r$ and $r'$ into a single bag.
			
			\item $u$ is a source in $\vec{P}$ but $u'$ is not a source: In this case $r=u$, and in $\vec{P}''$ the merged vertex will not be a source, however $r'$ will still be one. We can create a valid \dagtree{} $T''$ of $\vec{P}''$ by replacing the bag containing $r$ in $T$ by the bag containing $r'$ and merging it with the $r'$ of $T'$. Note that all the original sources in $\vec{P}$ (except $u$) are still sources reaching the same vertices, while $r'$ now reaches the same vertices of $P$ that were reachable by $u$, hence all the reachability conditions will be satisfied.
			
			\item $u$ is not a source in $\vec{P}$ but $u'$ is a source in $\vec{P}'$: This case is analogous to the previous.
			
			\item Both $u$ and $u'$ were not sources in $\vec{P}$ and $\vec{P}'$ respectively: In this case we can construct a valid \dagtree{} $T''$ by connecting $r$ and $r'$ with an edge. In order for this to not be a valid \dagtree{}, we would need a source $s\neq r \in \vec{P}$ and a source $s' \neq r' \in \vec{P}'$ such that $\Reachable(s) \cap \Reachable(s') \not\subseteq \Reachable(r)$ or $\Reachable(s) \cap \Reachable(s') \not\subseteq \Reachable(r')$. Note that $\Reachable(s) \cap \Reachable(s')$ can at most contain only the merged vertex $u$ and all the vertices reachable by it, but both $r$ and $r'$ will also reach those vertices, hence the \dagtree{} is valid.
		\end{enumerate}
		\Fig{dag} shows the construction of the \dagtree{} in every case.
	\end{proof}
	
	\begin{figure}
		\centering
		\includegraphics[width=\textwidth*3/4]{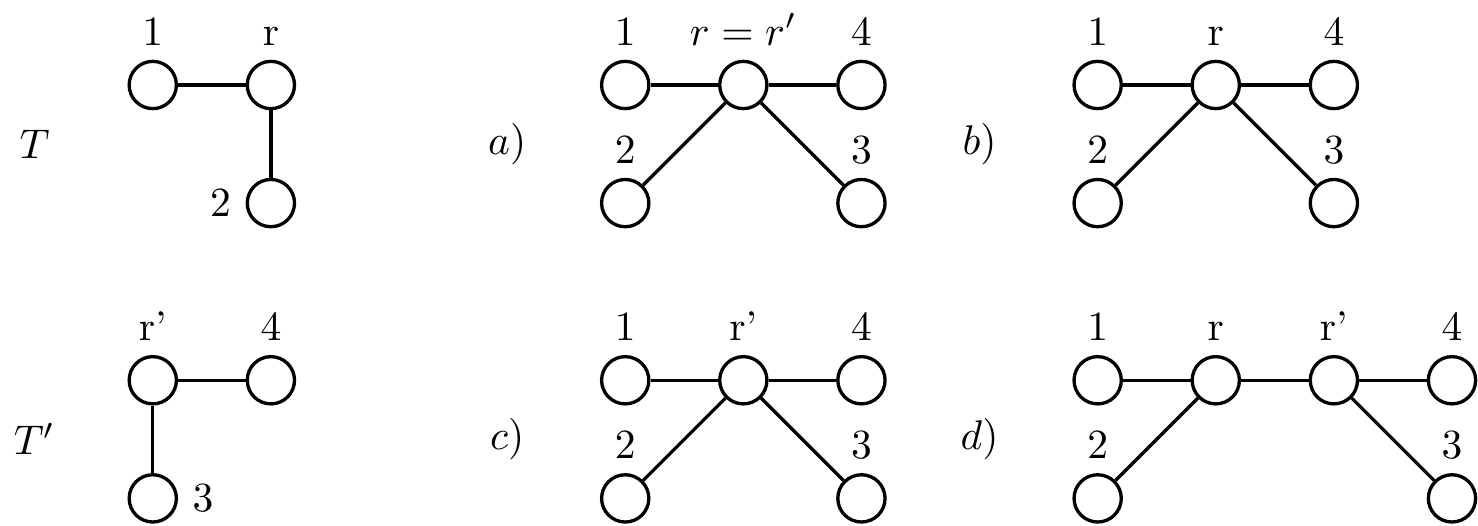}%
		\caption{The possible cases in \Lem{merge}. $T$ and $T'$ represent \dagtree{} of the graphs $\vec{P}$ and $\vec{P}'$, with $u \in \Reachable_{\vec{P}}(r)$ and $u' \in \Reachable_{\vec{P}'}(r')$.}
		\label{fig:dag}
	\end{figure}

	\begin{lemma} \label{lem:complexpattern}
		Given any \computable{\cycle{k}} undirected pattern $H$, and an undirected connected pattern $H'$ with $\LICL(H')<6$, the graphs resulting of the following operations are \computable{\cycle{k}}:
		\begin{itemize}
			\item Merge any vertex $u \in H$ with any vertex $u' \in H'$.
			\item Combine any edge $(u,v) \in E(H)$ with any edge $(u',v') \in E(H')$ by merging $u$ with $u'$ and $v$ with $v'$, and removing the duplicate edge.
		\end{itemize}
	\end{lemma}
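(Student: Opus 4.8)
The plan is to fix an arbitrary acyclic orientation $\vec{H}'' \in \Sigma(H'')$ and show it either has $\dtw=1$ or is \reducible{\cycle{l}} for some $l\le k$, which is exactly what \computable{\cycle{k}} requires. Let $\vec H,\vec{H}'$ be the sub-orientations of $\vec{H}''$ on the edge sets $E(H)$ and $E(H')$; these are acyclic orientations of $H$ and $H'$ that agree on the (at most one) shared edge, which in the edge-merge case we write $u\to v$. Since $\LICL(H')<6$, \Lem{licl} gives $\dtw(\vec{H}')=1$, and since $H$ is \computable{\cycle{k}}, either $\dtw(\vec H)=1$ or $\vec H$ is \reducible{\cycle{l}} for some $l\le k$. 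If $\dtw(\vec H)=1$ then $\vec{H}''$ is a vertex- or edge-merge of two DAGs of \dagtreewidth{} one, so $\dtw(\vec{H}'')=1$ by \Lem{merge} and we are done; so assume $\vec H$ is \reducible{\cycle{l}} with source sets $S_1,\dots,S_l$, intersection sets $I_1,\dots,I_l$ (write $I^*=\bigcup_i I_i$), and designated sources $s_i\in S_i$.

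The key point is that $\vec{H}'$ is attached to $\vec{H}''$ only through the ``gateway'' $u$ (in the edge case $\{u,v\}$; note $\Reachable_{\vec{H}'}(v)\subseteq\Reachable_{\vec{H}'}(u)$ since $u\to v$), so it can be absorbed into a single subgraph $\vec H(S_a)$ of the reduction. Pick $a$ with $u\in\vec H(S_a)$ — possible since every vertex of a DAG is reachable from some source, and if $u\in S(\vec H)$ we take the index of its set (then $u\notin I(\vec H)$). Every new source of $\vec{H}''$ is a source of $\vec{H}'$ in $W:=V(H')\setminus V(H)$; set $S_i':=S_i$ for $i\ne a$, and let $S_a'$ be $S_a$ with the new sources $S(\vec{H}')\cap W$ added and $u$ removed if it is a source of $\vec H$ but not of $\vec{H}'$. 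Since any walk between $V(H)$ and $W$ must pass through $u$, one checks that $\vec{H}''(S_a')$ is exactly $\vec H(S_a)$ merged with $\vec{H}'$ at the vertex $u$ (resp. at the edge $uv$), while $\vec{H}''(S_i')=\vec H(S_i)$ for $i\ne a$. Hence $\vec{H}''(S_a')$ admits a $\dtw=1$ \dagtree{} by \Lem{merge}, which — since such decompositions re-root at any bag and a source of $S_a'$ survives as a bag of the merged decomposition — we may take rooted at a source of $S_a'$ reaching $I_{a-1}'\cup I_a'$ (defined next).

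For the intersection sets: if $u\notin I(\vec H)$ (in particular if $u\in S(\vec H)$), every appendage vertex is reachable in $\vec{H}''$ only from $S_a'$, so the pairwise intersections $\vec{H}''(S_i')\cap\vec{H}''(S_j')$ are unchanged from $\vec H$; we keep $I_j':=I_j$, and all conditions of \Def{reducible} follow from those for $\vec H$ together with the description of the $\vec{H}''(S_i')$ above. If $u\in I(\vec H)$, let $\{p,\dots,q\}$ be the (continuous) block of indices with $u\in I_j$; then every vertex of $R':=\Reachable_{\vec{H}'}(u)\setminus V(H)$ is reachable in $\vec{H}''$ from exactly the source sets $S_p,\dots,S_{q+1}$ — the same ones reaching $u$ — so we put $I_j':=I_j\cup R'$ for $j\in\{p,\dots,q\}$ and $I_j':=I_j$ otherwise (in the edge case one further peels off the part of the appendage reachable from $v$ but not $u$, which goes with $v$'s block, which contains $u$'s). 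The conditions of \Def{reducible} are then checked directly: intersecting consecutive $\vec{H}''(S_i')$ adds exactly $R'$ or nothing to $I_i$ (condition 1); $\vec{H}''(S_i')\cap I^*(\vec{H}'')$ picks up $R'$ precisely on the affected indices, matching $I_{i-1}'\cup I_i'$ (condition 2); condition 3 holds with $s_i'=s_i$ for $i\ne a$ and the designated source of $S_a'$ chosen above, since it reaches $u$ hence $R'$, and $\vec H(s_i)\cap\vec H(s_j)$ grows only by $R'\subseteq I^*(\vec{H}'')$; and $R'$ is attached to exactly the block already containing $u$, so the continuous-block condition holds. As $l\le k$ is unchanged, $\vec{H}''$ is \reducible{\cycle{l}}.

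I expect the main obstacle to be keeping the case analysis clean — especially the case in which the merge vertex is itself an intersection vertex of $\vec H$, so the appendage must be distributed over a whole block of source sets while preserving the continuous-block invariant, and the edge-merge case, where $u$ and $v$ may lie in nested blocks, forcing the appendage to be split into a ``$u$-part'' and a ``$v$-part''. A secondary point is verifying that the $\dtw=1$ \dagtree{} handed over by \Lem{merge} can indeed be re-rooted at a source of $S_a'$ reaching $I_{a-1}'\cup I_a'$, as \Def{reducible} requires. Conceptually nothing here is deep, but the verification of the conditions is where essentially all the work lies.
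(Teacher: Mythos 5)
Your proposal is correct and follows essentially the same route as the paper: split $\vec{H}''$ into the induced orientations $\vec{H}$ and $\vec{H}'$, apply \Lem{licl} and \Lem{merge} when $\dtw(\vec{H})=1$, and otherwise absorb $\vec{H}'$ into the single source set of the $\cycle{l}$-reduction whose reachable subgraph contains the gateway vertex. In fact your treatment of the case where the merge vertex is an intersection vertex of $\vec{H}$ (distributing the appendage over the whole contiguous block of intersection sets containing $u$) is spelled out more carefully than in the paper, which simply asserts that the remaining conditions of \Def{reducible} follow from the construction.
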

	\begin{proof}
		
		Let $H''$ be the graph resulting of any of the operations. Consider any acyclic orientation $\vec{H}'' \in \Sigma(H'')$, let $\vec{H}$ and $\vec{H}'$ be the subgraphs of $\vec{H}''$ induced by the vertices of $H$ and $H'$ respectively, note that $\vec{H}$ and $\vec{H}'$ are themselves acyclic orientations of $H$ and $H'$ respectively. Because $\LICL(H') < 6$ we have that $\dtw(\vec{H}') = 1$.
		
		Now consider the pattern $\vec{H}$, we have two cases:
		\begin{itemize}
			\item $\dtw(\vec{H}) = 1$. In this case we can directly invoke \Lem{merge}, giving that the merged directed pattern $\vec{H}''$ also has $\dtw(\vec{H}') = 1$.
			
			\item $\vec{H}$ is \reducible{\cycle{l}}, for some $l \leq k$. Let $S_1,...,S_l$ be the sources of the $\cycle{l}$-reduction of $\vec{H}$. There must exist a set of sources $S_i$ that can reach the merged vertex/vertices. We define the following source sets for $\vec{H}''$: For $j\neq i$ we set $S''_j=S_j$ and for the remaining set of sources we set $S''_i = (S_i \cap S(\vec{H}'')) \cup S(\vec{H}')$. Note that if the merged vertex was a source in $\vec{H}$ it is possible that it is no longer a source in $\vec{H}''$. We set $I''_i = \vec{H}(S_{i})\cap \vec{H}(S_{i+1})$.
			
			If $s_i$ is still a source in $\vec{H}''$ then we set $s''_i = s_i$, otherwise there must be a source $s''$ that can reach the old $s_i$ in $\vec{H}''$. Such source will also reach all the vertices in $I_{i-1}$ and $I_{i}$. We can see that this is a valid $\cycle{l}$-reduction: The intersection will be valid as the only changes are that $I''_{i}$ and $I''_{i-1}$ might include additional vertices from $H'$. The source sets are also valid as all the sources will be part of exactly one source set. The other conditions in \Def{reducible} will follow from our construction.
		\end{itemize}
	\end{proof}
	
	Now we have all the tools to prove \Lem{spasms}. We break the result into different lemmas. The result for $\cycle{6}$ and $\cycle{7}$ follows directly from \Lem{sixseven}, as every pattern in those spasms will have at most $7$ vertices. We prove next the result for $\cycle{8}$ and $\cycle{9}$.
	
	\begin{lemma}
		All the patterns in $\Spasm(\cycle{8})$ and $\Spasm(\cycle{9})$ are \computable{\cycle{4}}.
	\end{lemma}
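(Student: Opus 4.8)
The plan is to split the members of $\Spasm(\cycle{8})$ and $\Spasm(\cycle{9})$ according to their number of vertices. Every such member is connected, being a quotient of a cycle, and each vertex identification drops the vertex count by exactly one, so the members on at most $7$ vertices, the members on exactly $8$ vertices, and $\cycle{8},\cycle{9}$ themselves can be treated separately. For members on at most $7$ vertices I would invoke \Lem{sixseven} (for $6$ and $7$ vertices) and \Lem{licl} (for at most $5$ vertices, where $\LICL<6$ forces $\dtw=1$) to get \computable{\cycle{3}}, which trivially implies \computable{\cycle{4}}; and $\cycle{8}$, $\cycle{9}$ are \computable{\cycle{4}} directly by \Lem{cycle_hom}, since $8=2\cdot 4$ and $9=2\cdot 4+1$. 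Thus the lemma reduces to the $8$-vertex members, and these can occur only in $\Spasm(\cycle{9})$, each arising from identifying a single pair of non-adjacent vertices of $\cycle{9}$.

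The next step is to enumerate those $8$-vertex graphs. Writing the cycle as $v_0v_1\cdots v_8v_0$, vertex-transitivity lets me assume $v_0$ is identified with $v_j$, and the reflection $v_i\mapsto v_{9-i}$ reduces the choices to $j\in\{2,3,4\}$. For $j=3$ the result is a triangle and a $\cycle{6}$ glued at one vertex; for $j=4$ it is a $\cycle{4}$ and a $\cycle{5}$ glued at one vertex; for $j=2$ the common neighbour $v_1$ of $v_0$ and $v_2$ creates a parallel edge which, once collapsed by the ``remove duplicated edge'' rule, leaves a $\cycle{7}$ with one pendant vertex. Keeping track of that collapse in the $j=2$ case is the one place the enumeration is not purely mechanical.

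It then remains to certify each of these three $8$-vertex graphs as \computable{\cycle{3}}. Each is a one-vertex gluing of a ``large'' piece and a ``small'' piece: for $j=3$ the large piece is $\cycle{6}$, which is \computable{\cycle{3}} by \Lem{cycle_hom}; for $j=4$ it is $\cycle{5}$, which has $\dtw=1$ by \Lem{licl} and so is \computable{\cycle{3}}; for $j=2$ it is $\cycle{7}$, again \computable{\cycle{3}} by \Lem{cycle_hom}. The small pieces are $\cycle{3}$, $\cycle{4}$ and $K_2$ respectively, all of which satisfy $\LICL<6$ (the first two have induced cycles only of length $3$ and $4$, and $K_2$ has none). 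Hence \Lem{complexpattern}, applied with the vertex-merge operation, shows each glued graph is \computable{\cycle{3}}, and therefore \computable{\cycle{4}}. This covers every pattern in $\Spasm(\cycle{8})$ and $\Spasm(\cycle{9})$.

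The main obstacle is getting the $8$-vertex enumeration exactly right: using the symmetry of $\cycle{9}$ to bring the case list down to three graphs, and correctly handling the ``remove duplicated edge'' step that turns one identification into a pendant-vertex configuration rather than a figure-eight. Once the three graphs are pinned down, the rest is a direct application of \Lem{cycle_hom}, \Lem{licl}, \Lem{complexpattern} and \Lem{sixseven}.
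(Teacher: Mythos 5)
Your proposal is correct and follows essentially the same route as the paper: the paper also certifies the spasm members via \Lem{sixseven}, \Lem{licl}, \Lem{cycle_hom}, and \Lem{complexpattern} (its patterns $H_2$ and $H_3$ are exactly your $j=2$ and $j=3$ quotients, and your $j=4$ graph is covered there by \Lem{licl} since its \LICL{} is $5$). The only difference is presentational: you derive the list of $8$-vertex spasm members explicitly by the symmetry argument, whereas the paper reads them off from its figures.
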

	
	\begin{figure}
		\centering
		\includegraphics[width=\textwidth*1/2]{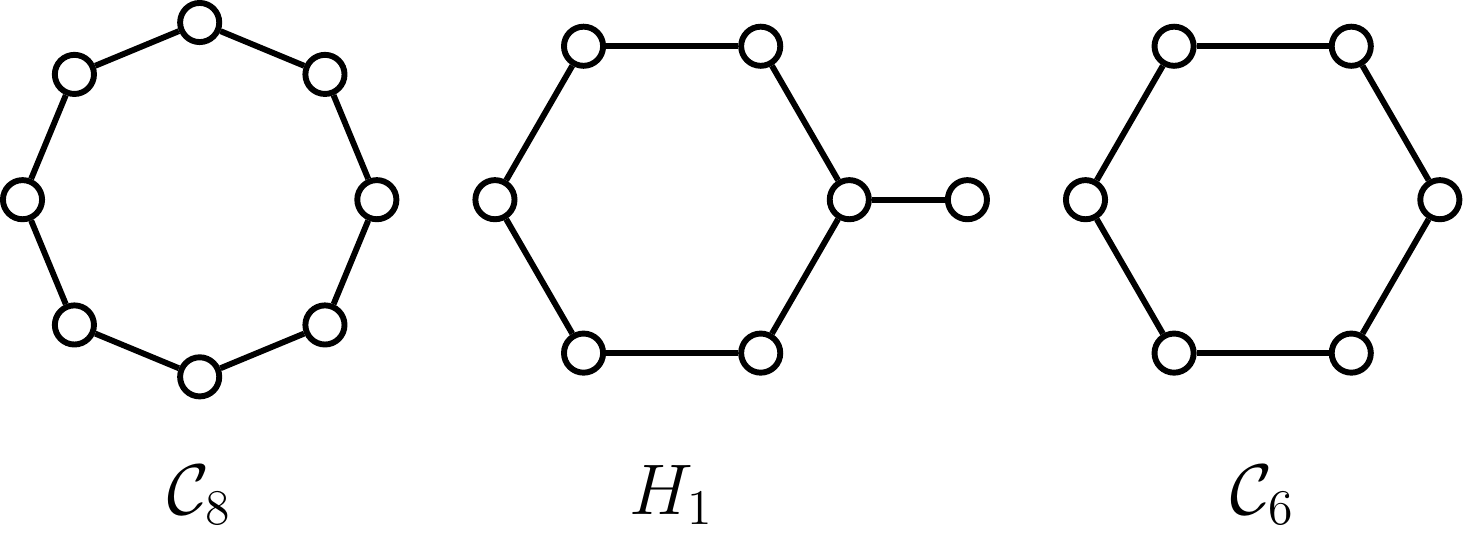}%
		\caption{The spasm of $\cycle{8}$, only including patterns with \LICL{} greater than $5$.}
		\label{fig:spasm8}
	\end{figure}
	
	\begin{figure}
		\centering
		\includegraphics[width=\textwidth*4/5]{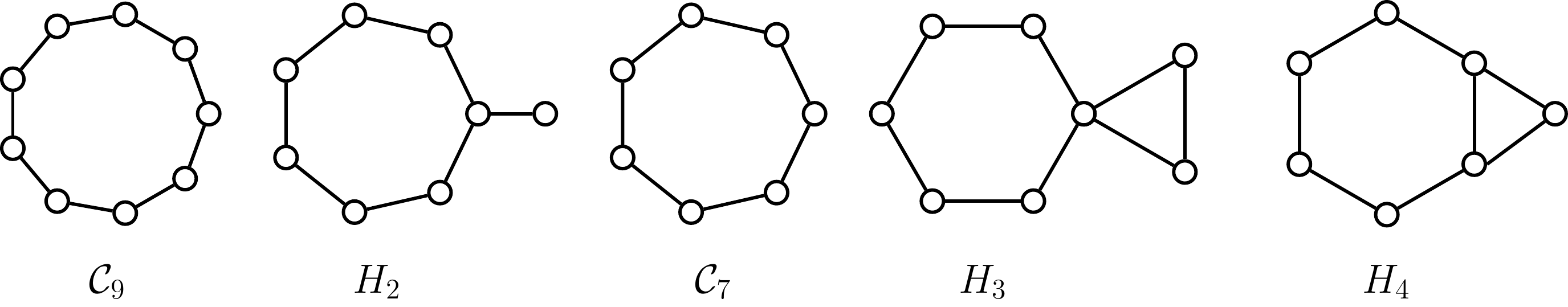}%
		\caption{The spasm of $\cycle{9}$, only including patterns with \LICL{} greater than $5$.}
		\label{fig:spasm9}
	\end{figure}
	
	\begin{proof}
		\Fig{spasm8} shows all the patterns in the spasm of $\cycle{8}$ with \LICL{} greater than $5$, similarly, \Fig{spasm9} shows all the patterns in the spasm of $\cycle{9}$ with \LICL{} greater than $5$. From \Lem{licl} we have that the rest of the patterns in the spasms will have a \dagtreewidth{} of $1$ in all their orientations. From \Lem{cycle_hom} we have that all the cycle patterns in both spasms will be $\cycle{4}$ or \computable{\cycle{3}}.
		
		Using \Lem{sixseven} we have that the patterns $H_1$ and $H_4$ are \computable{\cycle{3}}, as they only have $7$ vertices. For $H_2$ and $H_3$ we can use \Lem{complexpattern} to show that they are also \computable{\cycle{3}}, $H_2$ is formed by combining a $\cycle{7}$ with an edge, and $H_3$ is formed by combining a $\cycle{6}$ with a $\cycle{3}$.
	\end{proof}
	
	Finally, we prove the result for $\cycle{10}$.
	
	\begin{lemma}
		All the patterns in $\Spasm(\cycle{10})$ are \computable{\cycle{5}}.
	\end{lemma}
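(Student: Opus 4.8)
The plan is to reduce the statement to a finite check and then dispatch the cases with the structural lemmas already established. By \Lem{licl}, any $H' \in \Spasm(\cycle{10})$ with $\LICL(H') < 6$ has $\dtw(\vec{H}') = 1$ in every acyclic orientation and hence satisfies clause~(i) of the definition of \computable{\cycle{5}}. So it suffices to enumerate the finitely many patterns in $\Spasm(\cycle{10})$ with $\LICL \ge 6$ --- recorded in a figure analogous to \Fig{spasm8} and \Fig{spasm9} --- and show each is \computable{\cycle{5}}.

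I would organize the enumerated patterns into three easy groups and a short residue. First, the cycles $\cycle{6},\dots,\cycle{10}$ are \computable{\cycle{\lceil k/2\rceil}} by \Lem{cycle_hom}, hence \computable{\cycle{5}}. Second, any pattern on at most $7$ vertices (for instance $\cycle{7}$ with a short chord, or $\cycle{6}$ with a short path attached) is \computable{\cycle{3}} by \Lem{sixseven}. Third, any pattern obtained from a cycle $\cycle{j}$ with $6 \le j \le 9$ by repeatedly gluing a pattern of $\LICL < 6$ at a single vertex or along a single edge is \computable{\cycle{\lceil j/2\rceil}} by iterating \Lem{complexpattern} (using \Lem{cycle_hom} as the base case); this covers $\cycle{j}$ with pendant edges or paths, examples like $\cycle{8}$ with a pendant edge, $\cycle{7}$ glued at a vertex to $\cycle{3}$, and $\cycle{6}$ glued at a vertex to $\cycle{3}$ or $\cycle{4}$, and the edge-glued patterns such as $\cycle{6}$ or $\cycle{7}$ sharing an edge with $\cycle{3}$ and $\cycle{6}$ sharing an edge with $\cycle{4}$. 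To see this list is complete up to a tiny residue, I would exploit the defining property that a graph in $\Spasm(\cycle{10})$ admits a closed walk of length exactly $10$ meeting every edge: this bounds the edge count by $10$, forces an Eulerian circuit when the count equals $10$, and --- with $\LICL \ge 6$ and elementary degree bookkeeping --- shows the only uncovered patterns are the two $8$-vertex graphs formed by two non-adjacent vertices joined by four internally disjoint paths, of lengths $(2,2,2,4)$ and $(2,2,3,3)$.

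These two residual ``four-path theta'' graphs carry the real content, since their two degree-$4$ vertices are non-adjacent, so neither arises from a single-vertex or single-edge merge and \Lem{complexpattern} does not apply; one must verify \computable{\cycle{5}} orientation by orientation. An acyclic orientation with at most $3$ sources, or with at most $3$ intersection vertices, is already \reducible{\cycle{3}} or has $\dtw=1$ by \Lem{3sources} or \Lem{3intersections}. For the remaining orientations I would exhibit an explicit reduction: take every source set to be a singleton $\{s_i\}$, pick a cyclic order of the sources so that, with $I_i = \Reachable(s_i)\cap\Reachable(s_{i+1})$, the three conditions of \Def{reducible} hold together with the ``each intersection vertex occurs in a contiguous run of the $I_i$'s'' and ``$s_i$ reaches $I_{i-1}\cup I_i$'' requirements --- all checkable directly on these $8$-vertex graphs. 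Since any orientation of an $8$-vertex graph has a bounded number of sources, the resulting cyclic arrangement has length $l \le 5$, so each reduction is \reducible{\cycle{l}} with $l \le 5$, which is exactly what \computable{\cycle{5}} allows.

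The main obstacles are making the enumeration airtight and carrying out the orientation analysis of the two theta graphs; the reason the latter works is exactly that the genuinely cycle-irreducible $8$-vertex obstruction of \Fig{obstruction} (with $4$ sources, $4$ intersection vertices, reducible to neither $\cycle{3}$ nor $\cycle{4}$) requires strictly more edges than a length-$10$ closed walk can cover while still leaving a large induced cycle, so it appears in $\Spasm(\cycle{11})$ and $\Spasm(\cycle{12})$ but not in $\Spasm(\cycle{10})$. Pinning that exclusion down through the Eulerian/edge-count argument keeps the residue down to the two tractable thetas, each of which does admit the needed $\cycle{\le 5}$-reductions; the rest is bounded routine verification of \Def{reducible}.
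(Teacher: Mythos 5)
Your proposal follows essentially the same route as the paper: enumerate the spasm members with $\LICL(\cdot)\ge 6$, dispatch most of them via \Lem{licl}, \Lem{sixseven}, \Lem{cycle_hom} and \Lem{complexpattern}, and handle the two residual $8$-vertex theta graphs (the paper's $H_{16}$ and $H_{17}$) by a direct orientation-by-orientation analysis using \Lem{3sources} and \Lem{3intersections} plus explicit reductions. The only small divergence is in that residue: the paper resolves the remaining $4$-source orientations by exhibiting $\dtw=1$ for $H_{17}$ and by grouping two sources into a single source set for $H_{16}$, whereas you propose singleton source sets with a well-chosen cyclic order --- which also works, but note that condition~2 of \Def{reducible} forces sources sharing ``private'' intersection vertices to be adjacent in the cyclic order, so the order genuinely must be chosen and verified rather than taken arbitrarily.
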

	\begin{proof}
		\begin{figure}
			\centering
			\includegraphics[width=\textwidth*4/5]{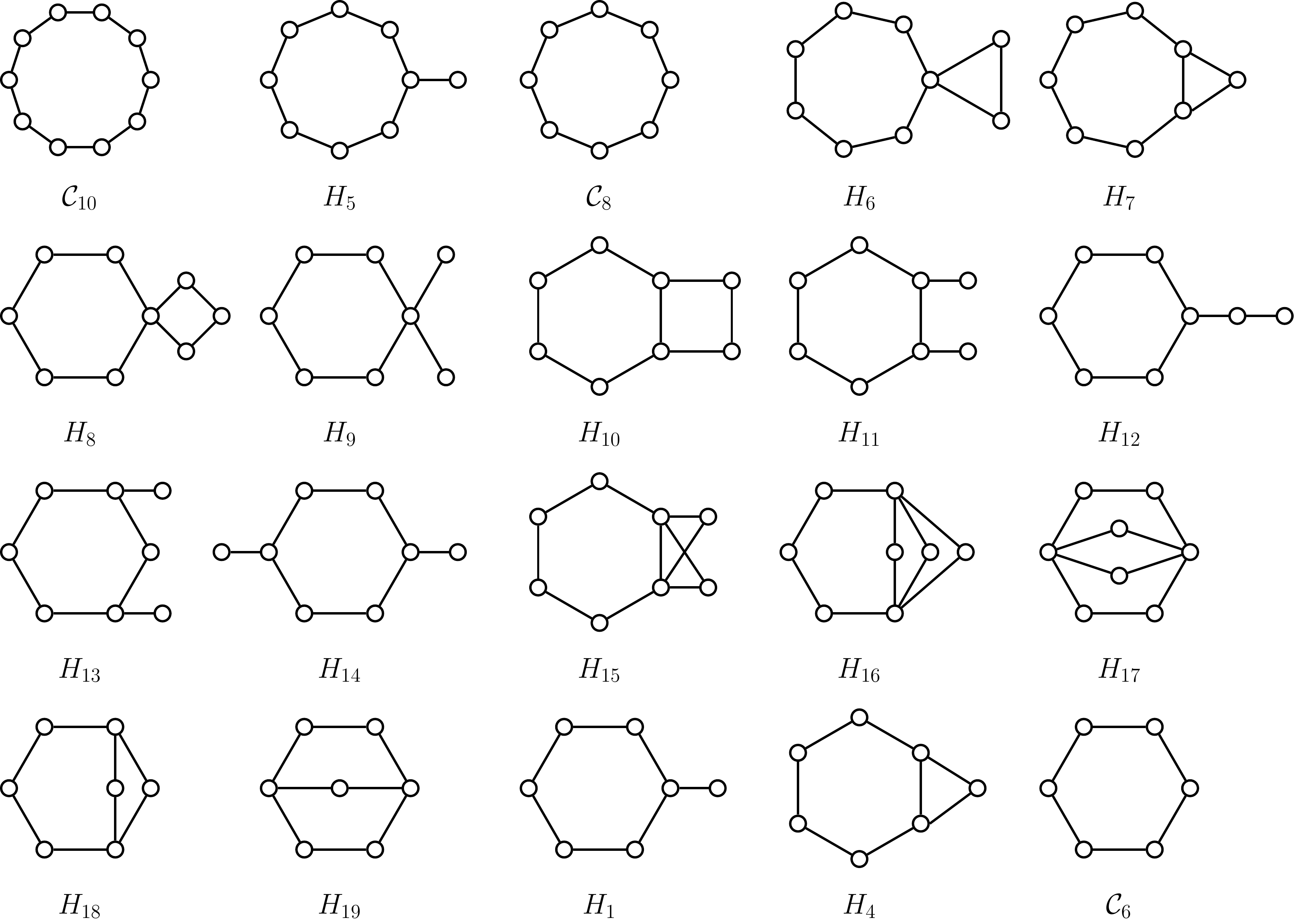}%
			\caption{The spasm of $\cycle{10}$, including only patterns with \LICL{} greater than $5$.}
			\label{fig:spasm10}
		\end{figure}
		
		\Fig{spasm10} shows all the patterns in the spasm of $\cycle{10}$ with $\LICL{}$ greater than $5$. We verify that all the patterns are \computable{\cycle{5}}:
		
		\begin{itemize}
			\item From \Lem{sixseven}, we have that patterns $H_1, H_4, H_{18}, H_{19}$ are \computable{\cycle{3}}, as they have $7$ vertices.
			\item From \Lem{cycle_hom} we have that $\cycle{10}, \cycle{8}, \cycle{6}$ are $\cycle{5}$, $\cycle{4}$ and \computable{\cycle{3}} respectively.
			\item From \Lem{complexpattern} we have that $H_5$ is \computable{\cycle{4}}, as it is obtained by combining a $\cycle{8}$ with a $2$-path.
			\item Again using \Lem{complexpattern} we can show that the following patterns are \computable{\cycle{3}}:
			\begin{itemize}
				\item $H_6$: $\cycle{7}$ + $\cycle{3}$ along a vertex.
				\item $H_7$: $\cycle{7}$ + $\cycle{3}$ along an edge.
				\item $H_8$: $\cycle{6}$ + $\cycle{4}$ along a vertex.
				\item $H_9$: $H_1$ + $2$-path along a vertex.
				\item $H_{10}$: $\cycle{6}$ + $\cycle{4}$ along an edge.
				\item $H_{11}$: $H_1$ + $2$-path along a vertex.
				\item $H_{12}$: $H_1$ + $2$-path along a vertex.
				\item $H_{13}$: $H_1$ + $2$-path along a vertex.
				\item $H_{14}$: $H_1$ + $2$-path along a vertex.
				\item $H_{15}$: $H_4$ + $\cycle{3}$ along an edge.
			\end{itemize}
		\end{itemize}
		
		Only left is to show that $H_{16}$ and $H_{17}$ are also \computable{\cycle{5}}:
		
		\begin{itemize}
			\item First, we look at $H_{17}$, consider any acyclic orientation of it, if it has $3$ sources or less then using \Lem{3sources} we know that the either $\dtw=1$ or the oriented pattern is \reducible{\cycle{3}}. Hence, we consider the orientations with $4$ or more sources, the only option to have $4$ sources is by having both central vertices as sources, and then one of the two top vertices as source and one of the two bottom vertices also as source. One can verify that all the resultant orientations will have a $\dtw=1$, this can be seen as the two central sources will reach the same vertices and the top and bottom source can only intersect in vertices also reachable by the central ones.
			
			\item Now we look at $H_{16}$, consider any acyclic orientation. If it has $3$ sources again we can use \Lem{3sources}. Otherwise if it has $5$ sources it will have at most $3$ intersection vertices, which by \Lem{3intersections} indicated that the pattern has $\dtw=1$ or it is \reducible{\cycle{3}}. Now consider the orientations with $4$ sources, it must happen that at least $2$ of the sources are in the right side (the triple $2$-path), but such sources will reach the same intersection vertices. Meaning that they can be put in the same set of sources, obtaining a \reducible{\cycle{3}} pattern.
		\end{itemize}
	\end{proof}
	
	\subsection{Proof of \Lem{uppercycle}}
	\uppercycle*
	\begin{proof}
		From \Lem{spasms} we have that for all $6 \leq k \leq 10$ all patterns in $\Spasm(\cycle{k})$ are \computable{\cycle{\lfloor k/2 \rfloor}}. From \Lem{cycle_complexity} we have that we can compute $\WSubNI{\cycle{k}}$ in time $O(m^{d_k})$. Hence we can use \Lem{computable} to compute $\Hom{G}{H}$ in time $f(\degen)O(n^{d_{\lfloor k/2 \rfloor}})$ for all patterns in the spasm of $\cycle{k}$, then we can obtain $\Sub{G}{H}$ using inclusion-exclusion.
	\end{proof}
	
	\section{Proofs of Section \ref{sec:others}}
	
	\subsection{Proof of \Lem{pk_computable}}
	
	For reference we restate the definition of $\mathcal{P}_{i,s}$ and $\mathcal{P}_k$.
	
	\calp*
	
	We can show that if we have a graph $P$ satisfying the first three conditions definition, then every \reducible{P} pattern can also be reduced to some graph in $\mathcal{P}_{i,s}$, for some $i$ and $s$.
	
	\begin{lemma} \label{lem:simplify}
		Let $P$ be a hypergraph with $i\geq4$ vertices and $s \geq 4$ hyperedges such that:
		\begin{enumerate}
			\item Every vertex has degree at least $2$.
			\item Every hyperedge contains at least $2$ vertices.
			\item No hyperedge is a subset of any other hyperedge.
		\end{enumerate}
		Then, for some $i' \leq i$, there exists a hypergraph $P' \in \mathcal{P}_{i',s}$ such that every directed pattern $\vec{H}$ that is \reducible{P} is also \reducible{P'}.
	\end{lemma}
	\begin{proof}
		Let $E(u)$ be the set of hyperedges of $P$ that contain the vertex $u$, and let $R$ be the sum of arities of all hyperedges in $P$.
		
		If $P$ already satisfies the fourth condition of \Def{group} then $P \in \mathcal{P}_{i,s}$ and we are done.  Otherwise, there is a pair $u,v \in V(P')$ such that $E(u) \subseteq E(v)$. We show that in that case we can construct a graph $P'$ with $|V(P')| \leq i$ and $E(P') = s$ that still satisfies the initial three constraints and has sum of arities $R' < R$, and for which every pattern that is \reducible{P} is also \reducible{P'}:
		
		We have $E(u) \subseteq E(v)$, we distinguish two cases:
		\begin{itemize}
			\item If $E(u) = E(v)$ then we can just merge the vertices $u$ and $v$ into a single vertex $uv$, every hyperedge that contained both $u$ and $v$ will just contain $uv$ instead. $P'$ has the same number of edges than $P$ but one vertex less, the total arity $R$ of $P'$ will also be less than the one of $P$. Only left is to show that every \reducible{P} pattern $\vec{H}$ is also \reducible{P'}. 
			
			Let $\{S_e : e \in E(P)\}$ and $\{I_v : v \in V(P)\}$ be the sets of sources and intersection vertices of $\vec{H}$ that achieve $P$-reducibility, setting $I_{uv} = I_u \cup I_v$ and removing $I_u$ and $I_v$ while maintaining all the other intersection and source sets the same will achieve $P'$-reducibility: for any of the vertices in $I(\vec{H})$ the connectivity condition from \Def{reducible} will not be affected as $uv$ will neighbor all the vertices that $u$ and $v$ already neighbor in $P$. Our construction guarantees that the two final conditions from \Def{reducible} are satisfied.
			
			\item If $E(u) \subset E(v)$ then we have that every edge containing $u$ also contains $v$. We construct $P'$ by adding the vertex $uv$ and removing $v$, for every hyperedge that contained both $u$ and $v$ we instead only contain $uv$, except for a single hyperedge $e$ that will contain both $u$ and $uv$. $P'$ will have the same number of vertices and hyperedges than $P$ but the total arity will reduce by at least $1$. We show that every \reducible{P} pattern $\vec{H}$ is also \reducible{P'}. 
			
			Let $\{S_e : e \in E(P)\}$ and $\{I_v : v \in V(P)\}$ be the sets of sources and intersection vertices of $\vec{H}$ that achieve $P$-reducibility. We set $I_{uv} = I_u \cup I_v$ and remove the set $I_v$ while maintaining all the other sets the same. We can show that the new sets will achieve $P'$-reducibility: the edge $e$ that connects both $uv$ and $v$ guarantees that the intersection vertices in $I_u$ and $I_v$ will still satisfy the connectivity condition from \Def{reducible}. Our construction guarantees that the two final conditions from \Def{reducible} are also satisfied.
		\end{itemize}
		
		If $P'$ satisfies the last condition from \Def{group} then it must be the case that $P' \in \mathcal{P}_{i',s}$ for some $i' \leq i$. Otherwise another pair of vertices must be violating the fourth condition, and we can then repeat the previous process. Eventually we are guaranteed to reach a graph $P'$ which satisfies all the conditions and it is in $\mathcal{P}_{i',s}$ for some $i' \leq i$. This is guaranteed because the value of $R$ will decrease in every iteration, but it can not be lower than $2s$ (twice the number of hyperedges).
	\end{proof}
	
	We can now prove \Lem{pk_computable}.
	
	\pkcomputable*
	\begin{proof}
		Let $\vec{H}$ be any directed acyclic graph with $k$ vertices. If $\vec{H}$ admits a \dagtree{} with $\tau=1$ then we are done, if $\vec{H}$ has at most $3$ sources or $3$ intersection vertices then $\vec{H}$ is \reducible{\cycle{3}} and we are also done.
		
		Otherwise, $\vec{H}$ has at least $4$ sources and $4$ intersection vertices. Let $S = S(\vec{H})$ be the set of sources of $\vec{H}$. Initialize sets $S_e = \{s_e\}$ for every source $s_e \in S$, we also construct trees of sources $\cT_e$ for each source $s_e$ containing a single node $s_e$. Let $I^*$ be the set of intersection vertices reachable by two distinct sets $S_e,S_e'$. For every set $S_e$, let $I(e)$ be the set of intersection vertices in $I^*$ reachable by $S_u$. 
		
		If there exists $e \neq e' \in S$ such that $I(e) \subseteq I(e')$, then set $S_{e'} = S_e \cup S_{e'}$ and delete $S_e$, combine the trees $\cT_e$ and $\cT_{e'}$ by adding an edge between $s_e$ and $s_{e'}$, and update the set $I^*$. Repeat until we obtain $s \leq |S|$ sets of sources such that there are no $e \neq e'$ such that $I(e) \subseteq I_{e'}$. Let $S^*$ be the collection of the remaining sets. For every set $S_e \in S^*$, the graph $\vec{H}(S_e)$ has a \dagtreewidth{} of $1$, as $\cT_e$ is a valid \dagtree{} decomposition of $\vec{H}(S_e)$.
		
		Consider the hypergraph $P$ where every intersection vertex in $I^*$ is a vertex, and every source set $S_e$ is a hyperedge $e$ containing the vertices in $I(e)$.
		
		We can show that $\vec{H}$ is \reducible{P}. For each edge $e \in E(P)$ we use $S_e$ as the corresponding source of sets, and for each vertex $v \in V(P)$ corresponding to the intersection vertex $v' \in I^*$ we set $I_v = \{v'\}$. We can verify the conditions from \Def{reducible}:
		
		\begin{itemize}
			\item For every vertex $v \in V(P)$ and every edges $e \ni v$ we have that $v' \in I(e)$, which implies that $I_v \subseteq I(e)$.
			\item For every hyperedge $e \in V(P)$ we have that $e$ contains all the vertices $v$ for which $v' \in I(e)$. Hence $\bigcup_{v\in e}I_v = I(e)$.
		\end{itemize}
		
		We show that $P$ must satisfy the first three conditions from \Def{group}:
		\begin{enumerate}
			\item By construction, every vertex in $I^*$ must be reachable by at least two distinct sets, hence the corresponding vertex in $P$ will have a degree of at least $2$.
			\item Every set $S_e$ must reach at least two different intersection vertices in $I^*$, otherwise there we could combine it with a different set of sources, with other set of sources that reaches the same intersection vertex.
			\item A hyperedge $e$ being a subset of another hyperedge $e'$ would imply $I(e) \subseteq I_{e'}$, which is not allowed by the construction.
		\end{enumerate}
		
		Let $i$ be the number of vertices in $P$ and $s$ the number of edges. Note that $s \geq 3$, otherwise the original pattern $\vec{H}$ has \dagtreewidth{} of $1$, this implies $i \geq 3$, or the construction would violate one of the previous conditions. If $s=3$ then $i=3$ and the $P = \cycle{3}$ which means $\vec{H}$ is \reducible{\cycle{3}}. 
		
		Otherwise we have that $i,s \geq 4$ and $i+s \leq k$. Using \Lem{simplify} we have that for some $i' \leq i$ there is a pattern $P'\in \mathcal{P}_{i',s}$ such that $\vec{H}$ will be \reducible{P'}, and we have that  $\mathcal{P}_{i',s} \subseteq \mathcal{P}_k$.
	\end{proof}
	
	\subsection{Proof of \Lem{nine_content}}
	
	\begin{figure}
		\centering
		\includegraphics[width=\textwidth*1/4]{figures/Special_1.png}%
		\caption{The hypergraph $\hyperone$. Formed by one hyperedge of arity $3$ (in gray) and three normal edges.}
		\label{fig:hyperone}
	\end{figure}
	
	From the definition of $\cP_9$ we have that:
	\[
		\cP_9 = \cP_8 \cup \cP_{4,5}  \cup \cP_{5,4} = \{\cycle{3}\} \cup \cP_{4,4} \cup \cP_{4,5}  \cup \cP_{5,4}
	\]
	We will show which patterns form each of these sets. First, we can show that $\cP_{4,4}$ is formed by the $4$-cycle, the $3$-simplex and $\hyperone$ (\Fig{hyperone}).
	
	\begin{claim}
		$\mathcal{P}_{4,4} = \{\cycle{4}, \simplex{3}, \hyperone \}$
	\end{claim}
	\begin{proof}
		Let $P$ be a hypergraph in $\mathcal{P}_{4,4}$. Note that $P$ can not have any hyperedge of arity $4$, otherwise all the other hyperedges would be subsets of it, breaking condition $3$ in \Def{group}. Hence $P$ may contain only hyperedges with arity $2$ and $3$, consider the following cases:
		\begin{itemize}
			\item $P$ has $4$ hyperedges of arity $2$: We can show that in this case $P$ must be the $4$-cycle. If $P$ were to contain a triangle then we have that the vertex not in the triangle will have a degree of at most $1$, this contradicts condition $1$ in \Def{group} and hence $P$ must not contain a triangle. The only graph with $4$ vertices and $4$ edges that does not contain a triangle is the $4$-cycle.
			
			\item $P$ has $3$ hyperedges of arity $2$ and $1$ of arity $3$: Let $e$ be the hyperedge of arity $3$ and $v$ the vertex not in $e$. Every vertex in $e$ must be part of an additional edge in order to have degree $2$, the only vertex they can connect without violating condition $3$ in \Def{group} is $v$, hence the three edges with arity $2$ will connect $v$ with each of the three vertices in $e$, giving the pattern $\hyperone$ seen in \Fig{hyperone}.
			
			\item $P$ has $2$ hyperedges of arity $2$ and $2$ of arity $3$: The two hyperedges of arity $3$ must intersect in two vertices, let $u,u'$ be those vertices and $v,v'$ the two remaining vertices. The remaining two edges of arity $2$ can not contain $u$ or $u'$ as they would be subsets of one of the hyperedges of arity $3$. Hence only the other two vertices $v,v'$ can be part of an edge. But this only gives one possible edge, hence we can not have a graph $P \in \mathcal{P}_{4,4}$ with this configuration.
			
			\item $P$ has $1$ hyperedges of arity $2$ and $3$ of arity $3$: The two hyperedges of arity $3$ must intersect in two vertices, let $u,u'$ be those vertices and $v,v'$ the two remaining vertices. The third hyperedge of arity $3$ can not contain both $u$ and $u'$ so it will contain $v,v'$ and either $u$ or $u'$. But this means that every pair of vertices is already part of some hyperedge, and we can not have an extra edge of arity $2$ without violating the condition $3$ in \Def{group}.
			
			\item $P$ has $4$ hyperedges of arity $3$: There are $4$ possible hyperedges of arity $3$ in a graph with $4$ vertices, hence $P$ must contain all of them, the result is the simplex of arity $3$, $\simplex{3}$.
		\end{itemize}
	\end{proof}

	$\cP_{4,5}$ only contains the diamond graph $\diamondgraph$, also known as $K_4$ minus one edge.

	\begin{claim}
		$\mathcal{P}_{4,5} = \{\diamondgraph \}$.
	\end{claim}
	\begin{proof}
		We first can show that any graph $P \in \mathcal{P}_{5,4}$ may not contain any hyperedge of arity greater than $2$: If it were to contain a hyperedge of arity $4$ then all the other hyperedges would be subset of it, breaking condition $3$ in \Def{group}. Also $P$ can not contain $5$ hyperedges of arity $3$ as there are at most $4$ such hyperedges with $4$ vertices. Hence $P$ must contain at least an edge of arity $2$. We consider the rest of cases:
		\begin{itemize}
			\item If $P$ contains $3$ or $4$ hyperedges of arity $3$ then every pair of vertices in $P$ will be together in one of the hyperedges and we can not add any edges of arity $2$. Hence this can not happen.
			\item If $P$ contains $2$ hyperedges of arity $2$, then there is only one pair of vertices that is not together in some hyperedge, but need to add $3$ edges.
			\item If $P$ contains only $1$ hyperedge of arity $3$, we can add an edge connecting every vertex in the hyperedge with the free vertex, but we will still be missing one edge.
		\end{itemize}
		Therefore, $P$ can only contain edges of arity $2$. The only pattern with $4$ vertices and $5$ edges is the diamond or $K_4$ minus one edge.
	\end{proof}

	\begin{figure}
		\centering
		\includegraphics[width=\textwidth*1/4]{figures/Special2.png}%
		\caption{The hypergraph $\hypertwo$. Formed by two hyperedges of arity $3$ (in gray) and two normal edges.}
		\label{fig:hypertwo}
	\end{figure}
	
	Finally, $\mathcal{P}_{5,4}$ only contains the hypergraph $\hypertwo$ (\Fig{hypertwo}).
	
	\begin{claim}
		$\mathcal{P}_{5,4} = \{\hypertwo\}$.
	\end{claim}
	\begin{proof}
		Let $P$ be a hypergraph with $5$ vertices and $4$ hyperedges that satisfies all the conditions in \Def{group}. We prove that $P = \hypertwo$. $P$ can not contain a hyperedge with arity $5$, as then the other hyperedges would necessarily be subsets of it. 
		
		Suppose $P$ contains an arity $4$ hyperedge, let $e$ be such hyperedge, every vertex in $e$ must be part of at least another hyperedge to have degree of at least $2$, also, for every pair of vertices $u,u'\in e$ the set of edges containing $u$ can not be a subset or equal to the set of edges containing $u'$. However we only have $3$ remaining hyperedges (lets call them $\{e_1,e_2,e_3\}$), we can create at most $3$ subsets of the remaining hyperedges that will not be subset of each other, either $\{e_1\},\{e_2\},\{e_3\}$ or $\{e_1,e_2\}$,$\{e_1,e_3\}$,$\{e_2,e_3\}$, we can assign each subset to one of the vertices but the forth vertex will necessarily create a conflict either by being contained by a subset or a superset of those groups of hyperedges.
		
		Hence, $P$ can not contain an arity $4$ hyperedge. Neither can $P$ contain only arity $2$ hyperedges, as in that case the average degree (and hence the minimum degree) of $P$ would be below $2$. Hence $P$ must be formed by arity $2$ and $3$ hyperedges. Similarly if $P$ contains $1$ arity $3$ hyperedge and $3$ arity $2$ edges we will have the average degree is below $2$.
		We show that $P$ can not contain two arity $3$ hyperedges that intersect in more than one vertex:
		
		Consider otherwise, there are two hyperedges $e,e'$ intersecting in $u,u'$. Let $e = \{u,u',v\}$ and $e'=\{u,u',v'\}$. We have two more hyperedges. Both hyperedges must include the additional vertex $w$ in order for it to have a degree of $2$. We need $u$ and $u'$ to be part of at least some hyperedge without the other to avoid breaking the fourth condition of \Def{group}, hence one of the remaining hyperedges must contain $\{u,w\}$ and the other $\{u,w\}$. However, to avoid breaking the condition we need both $v$ and $v'$ to be in some hyperedge without $u$ and without $u'$, we can achieve this for $v$ by adding $v$ to the previous hyperedges, but then we will not have any additional hyperedge connecting $v'$, as the maximum arity is $3$. Therefore, $P$ can not contain such hyperedges.
		
		If $P$ contains $3$ or more arity-$3$ hyperedges then at least two of them will intersect in more than one vertex. Thus, $P$ must contain $2$ hyperedges with arity $2$ and two more with arity $3$. The only hypergraph of that form that satisfies all constraints is $\hypertwo$.
	\end{proof}

	Combining the three previous claims gives \Lem{nine_content}.
	
	\subsection{Proof of \Lem{nine_star}}
	
	Let $\simplex{r}$ be the $r$-simplex. We can show that every pattern that is \reducible{\simplex{r}} is also \reducible{\cycle{r+1}}. See \Fig{simplex} for an example.
	\begin{lemma} \label{lem:simplex}
		For all $r\geq 3$, every \reducible{\simplex{r}} pattern is \reducible{\cycle{r+1}}.
	\end{lemma}
	\begin{proof}
		In this proofs all the indexes are given in module $r+1$.
		
		Let $\vec{H}$ be a \reducible{\simplex{r}} pattern, for each vertex $v \in V(P)$ let $I_s$ be the sets of intersection vertices that achieve the $\simplex{r}$-reducibility. Similarly for each $e \in E(P)$ let $S_e$ be the set of sources that achieve the $\simplex{r}$-reducibility.
		
		Arbitrary sort the $r+1$ vertices of $\simplex{r}$ $v_0,...,v_{r}$, and let $e_i =\{V(\simplex{r})\setminus \{v_i\}\}$. We define $I(e)$ as the set of intersection vertices reached by the set of sources $S_e$ in $\vec{H}$, hence we will have $I(e_i) = \bigcup_{j\neq i} I_j$.
		
		We define $r+1$ new intersection sets as follows, $I'_i = \bigcup_{j=i+1}^{r-1+i} I_i$ for $i \in [0,4]$. Note that for all $i$, $I'_i \cup I'_{i+1} = I(e_i)$ and both $I'_i \subseteq I(e_i)$ and $I'_i \subseteq I(e_{i-1})$. Hence we can use the original source sets $S_e$ with the new intersection sets $I'_v$ to achieve $\cycle{r+1}$-reducibility, assigning $I'_i$ to the $i$-th vertex of the cycle and $S_{e_i}$ to the edge connecting the $i$ and $i+1$-th vertices.
	\end{proof}
	
	\begin{figure}
		\centering
		\includegraphics[width=\textwidth*3/4]{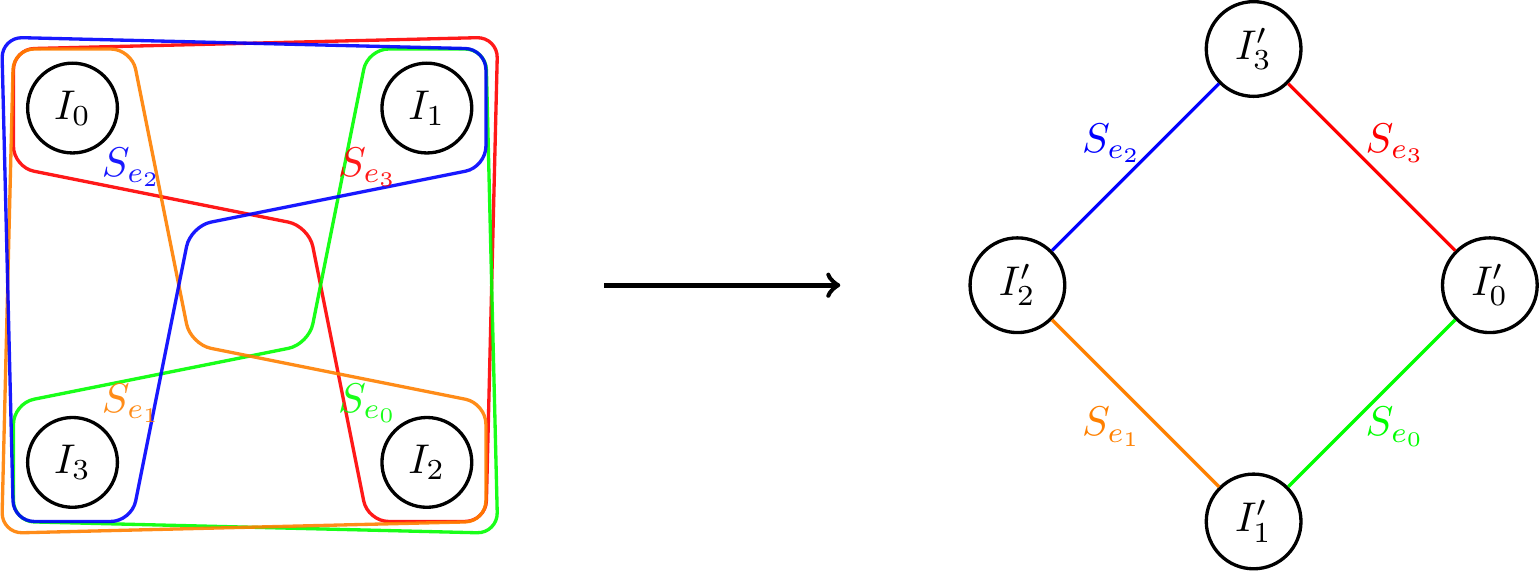}%
		\caption{An example of apply \Lem{simplex} to the $3$-simplex, which becomes a $4$-cycle. Here $I'_i =  I_{i+1}\cup I_{i+2}$.}
		\label{fig:simplex}
	\end{figure}

	We can now prove \Lem{nine_star}.
	
	\ninestar*
	\begin{proof}
		Let $H$ be a \computable{\cP_9} pattern. Consider any of its orientations $\vec{H}$:
		\begin{itemize}
			\item If $\vec{H}$ is \reducible{P} for some $P \in \cP_9 \setminus \{\simplex{3}\}$ then we have that $P \in \cP^*_9$.
			\item Otherwise, $\vec{H}$ is \reducible{\simplex{3}}, but in that case using \Lem{simplex} we have that $\vec{H}$ is \reducible{\cycle{4}}, which is in $\cP^*_9$.
		\end{itemize}
		Hence, all the orientations of $H$ reduce to some pattern in $\cP^*_9$, and hence $H$ will be \computable{\cP^*_9}.
	\end{proof}
	
	\subsection{Proof of \Lem{allnine}}
	
	From \Lem{cycle_complexity} we have for $\cycle{3}$ and $\cycle{4}$, we can compute Col-WSub in time $O(m^{d_3})$ and $O(m^{d_4})$ respectively. We verify that for the remaining three graphs in $\cP^*_9$ ($\diamondgraph, \hyperone, \hypertwo)$ we can compute Col-WSub in $\tilde{O}(m^{5/3})$ time.
	
	\begin{lemma}
		There is an algorithm that for any colored, weighted input graph $G$ with $m$ edges computes $\WSub{G}{\hyperone}$ in $O(m^{5/3})$ time.
	\end{lemma}
	\begin{proof}
		Let $v$ be the vertex of $\hyperone$ that is not part of the arity-$3$ hyperedge. Set $\Delta = m^{1/3}$. We can split the vertices of $V^{(v)}(G)$ into two groups, depending on their degree: let $V^H = \{v \in V^{(v)}(G) : d(v) > \Delta\}$ and $V^L = \{v \in V^{(v)}(G) : d(v) \leq \Delta\}$. Every copy of $\hyperone$ in $G$ must contain a vertex in either $V^L$ or $V^H$. We show how to list all copies of $P$:
		
		\begin{itemize}
			\item For every vertex $u \in V^H$ we can iterate over every hyperedge of arity $3$ in $G$ and verify if $u$ connects with each of the vertices in the hyperedge and they have the right colors. There are at most $O(\frac{m}{\Delta})$ vertices in $V^H$. Thus this step will take $O(m\frac{m}\Delta) = O(m^{5/3})$ total time.
			\item For every vertex $u \in V^L$ we can list all triplet of edges connecting $u$ to vertices in each of the other three layers, forming a 3-star and then check if there is a hyperedge that contains the three endpoints of the star. There will be at most $\Delta$ such edges, hence we can list all combinations in time $O(m \Delta^2) = O(m^{5/3})$.
		\end{itemize}
		For each copy of $\hyperone$ that we find we can just multiply the edge/hyperedge weights and aggregate the products to obtain the value of $\WSub{G}{\hyperone}$.
	\end{proof}

	\begin{lemma}
		There is an algorithm that for any colored, weighted input graph $G$ with $m$ edges computes $\WSub{G}{\diamondgraph}$ in $\tilde{O}(m^{3/2})$ time.
	\end{lemma}
	\begin{proof}
		Let $e$ be the diagonal edge in $\diamondgraph$, let $u_1$ and $u_2$ be the endpoints and $u_3,u_4$ the other two vertices in the diamond.
		
		We can enumerate all the copies of the colored triangles $u_1-u_2-u_3$ and $u_1-u_2-u_4$ in $G$ in time $O(m^{3/2})$. We create a hashmap $d_{u_3}$ and for each triangle of the first type with vertices $v_1-v_2-v_3$ we add the product of $w(v_1,v_3)$ and $w(v_2,v_3)$ to $d_{u_3}$. Similarly we will fill a hashmap $d_{u_4}$ with the triangles of the second type. Then for each edge $(v_1,v_2)$ connecting vertices of the type $u_1$ and $u_2$ we will compute $w(v_1,v_2) \cdot d_{u_3}((v_1,v_2)) \cdot d_{u_4}((v_1,v_2))$ and aggregate all the results to obtain $\WSub{G}{\diamondgraph}$. The total runtime of the algorithm will be $\tilde{O}(m^{3/2})$.
	\end{proof}
	
	\begin{lemma} \label{lem:hypertwo}
		There is an algorithm that for any colored, weighted input graph $G$ with $m$ edges computes $\WSub{G}{\hypertwo}$ in $\tilde{O}(m^{5/3})$ time.
	\end{lemma}
	\begin{proof}
		To compute $\WSub{G}{\hypertwo}$ we can perform thresholding on the vertices of $G$, dividing them in two groups for every color depending on their degrees. Set $\Delta = m^{1/3}$, and let $u_i$ be a vertex $\hypertwo$ we define $V^{(u_i)}_H = \{v \in V^{(u_i)(G)}: d(v) > \Delta\}$ and $V^{(u_i)}_L = \{v \in V^{(u_i)(G)}: d(v) \leq \Delta\}$. Every appearance of $\hypertwo$ in $G$ must be as one of the $32$ different combinations of vertices from the high and low degree sets. We show how to compute the counts for each combinations. Each of the combinations will correspond to at least one of the $4$ following cases, which are shown in \Fig{hypertwocomp}. We use $u_c$ for the central vertex and $u_1,u_2,u_3,u_4$ for the exterior vertices:
		\begin{enumerate}
			\item The central vertex is in the low degree set: In this case we can list all pairs of hyperedges for each vertex in $V^{(u_c)}_L$, and then check if they induce a valid copy of $\hypertwo$ in constant time, in that case we just aggregate the product of weights of the hyperedges in the copy. There will at most $ O(m\Delta) = O(m^{4/3})$ pairs of hyperedges to check.
			
			\item All the vertices have high degree: We split the graph in two parts and list each of the parts separately. Each part will be formed by one hyperedge of arity $3$ and one of the normal edges. Both parts intersect in three vertices forming a diagonal. Let $u_c,u_2,u_4$ be the vertices forming the intersection.
			
			We can list every copy of each of the halfs. For each half we list all possible hyperedges corresponding to the arity-$3$ hyperedge and vertices in the high set of the vertex that reach the normal edge. There will be at most $O(m\frac{m}{\Delta}) = O(m^{5/3})$ pairs for each of the halfs. We aggregate the products of weights of each copy into two hashmaps (one for each half) using the values of $u_c,u_2,u_4$ as the key.
			
			We then iterate over the hashmaps multiplying the values with similar entries and aggregating the counts. This will take total $\tilde{O}(m^{5/3})$ time. 
			
			\item A hyperedge contains $1$ low degree vertex: Let $e$ be the hyperedge with the low degree vertex and $e'$ the other hyperedge. Let $u_1$ be the low degree vertex and $u_4$ the high degree. We split $\hypertwo$ in two halfs, one containing $e$ and the edge $(u_1,u_2)$ and the other containing $e'$ and the edge $(u_3,u_4)$, note that they intersect at the vertices $u_c,u_2,u_4$. 
			
			We can list every copy of each of the halfs. For the half containing the low degree vertex we can just iterate over the vertices in $V^{(u_1)}_L$ and look at each pair of neighbors, there will be at most $O(m\Delta) = O(m^{4/3})$ possible pairs. For the other half we can list all possible hyperedges corresponding to $e'$ and vertices in $V^{(u_4)}_H$. There will be at most $O(m\frac{m}{\Delta}) = O(m^{5/3})$ pairs. For each of the halfs we aggregate the products of weights of each copy into two hashmaps (one for each half) using the values of $u_c,u_2,u_4$ as the key. 
			
			We then iterate over the hashmaps multiplying the values with similar entries and aggregating the counts. This will take total $\tilde{O}(m^{5/3})$ time.
			
			\item A hyperedge contains $2$ low degree vertices: Let $e$ be the vertex of $\hypertwo$ with the two low degree vertices. We can iterate over all the hyperedges in $G$ with the same colors containing two low degree vertices, and then iterate over each pair of edges coming from each of the low degree vertices. We can then check in constant time if the hyperedge together with the two edges induce a copy of $\hypertwo$ in $G$, in which case we add the product of the weight to the total count. There will be at most $O(m\Delta^2)= O(m^{5/3})$ subgraphs to verify.
		\end{enumerate}
	\end{proof}
	
	\begin{figure}
		\centering
		\includegraphics[width=\textwidth]{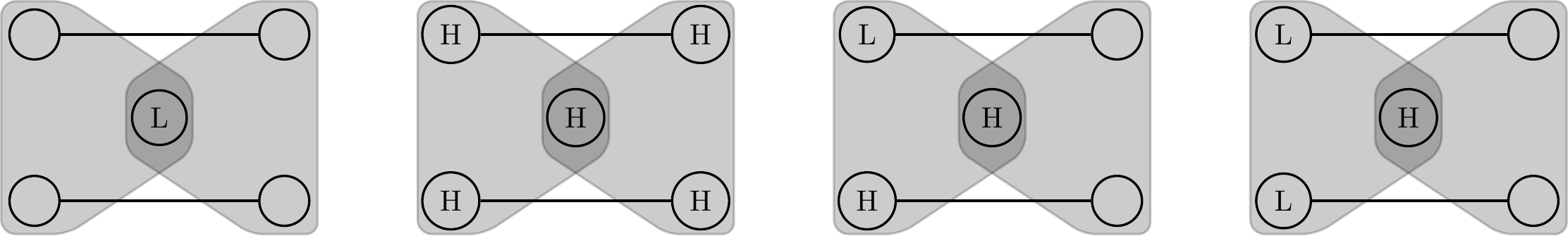}%
		\caption{The possible cases for thresholding of hypergraph $\hypertwo$ that are analyzed in \Lem{hypertwo}.}
		\label{fig:hypertwocomp}
	\end{figure}

	\section{Proof of \Lem{wsub_cycles}}
	
	The proof will closely follow the structure of Sections $4$ and $5$ in \cite{GiLeSh+23}, with the necessary modifications, as we are simply adapting the existing algorithms.
	
	\subsection{Combinatorial algorithm for cycles}
	
	Let $P_r$ be the path with $r+1$ vertices and $r$ edges with consecutive vertices having consecutive colors. For a colored graph $G$ we use $V^{(l)}(G)$ to denote the set of vertices with color $l$ or just $V^{(l)}$ if $G$ is clear from the context. In this section indexes and colors are taken modulo $k$.
	
	We first show the following auxiliary lemma, equivalent to Lemma $4.1$ in \cite{GiLeSh+23}:
	
	\begin{lemma} \label{lem:aux_cycle}
		Let $G$ be a weighted colored graph with $m$ edges. There is an algorithm that for integers $r, o$ and $\Delta \geq 1$ computes:
		\begin{itemize}
			\item For every pair $u,v \in V(G)$, the total weight $N_{r,o,\Delta}(u,v)$ of paths $P_r =\{p_0,...,p_{r}\}$ in $G$ such that $p_i \in V^{(i+o)}(G)$ for all $i$, $p_0 = u$, $p_{r} = v$ and $d(p_i) \leq \Delta$ for every $0 < i < r$.
			\item For every pair $u,v \in V(G)$ with $d(u) > \Delta$ and for every function $f: \{1,...,r\} \to \{\text{Low},\text{High}\}$ the total weight $M_{r,o,\Delta,f}$ of paths $P_r =\{p_0,...,p_r\}$ in $G$ such that $p_i \in V^{(i+o)}(G)$ for all $i$, $p_0 = u$, $p_{r} = v$ and for all $i>0$, $d(p_i) \leq \Delta$ if $f(i) =\text{Low}$ and $d(p_i) > \Delta$ if $f(i) =\text{High}$.
			\item For every pair $u,v \in V(G)$ with $d(v) > \Delta$ and for every function $f: \{0,...,r-1\} \to \{\text{Low},\text{High}\}$ the total weight $M'_{r,o,\Delta,f}$ of paths $P_r =\{p_0,...,p_r\}$ in $G$ such that $p_i \in V^{(i+o)}(G)$ for all $i$, $p_0 = u$, $p_{r} = v$ and for all $i<r$, $d(p_i) \leq \Delta$ if $f(i) =\text{Low}$ and $d(p_i) > \Delta$ if $f(i) =\text{High}$.
		\end{itemize}
		The first can be computed in time $\tilde{O}(m\Delta^{r-1})$ and the second and third in time $\tilde{O}(m^2/\Delta)$. There are at most $O(m\Delta^{r-1})$ pairs $u,v$ for which $N_{r,\Delta}(u,v) > 0$.
	\end{lemma}
	\begin{proof}
		For the first quantity, look at the all the edges between $V^{(o)}(G)$ and $V^{(1+o)}(G)$ and fix the position of the first two vertices $p_0$ and $p_1$, there will be at most $O(m)$ choices. We have $d(p_1) \leq \Delta$, hence we have at most $\Delta$ choices for $p_2$. Similarly, all vertices $p_2,...,p_{r-1}$ will have low degree and hence given $p_i$ we will have at most $\Delta$ choices for $p_{i+1}$, we need to do this $r-1$ times. Therefore, there will be at most $O(m\Delta^{r-1})$ possible paths. For each path we multiple the weight of the edges and aggregate the product into the hashmap $N_{r,o,\Delta}$ using the endpoints of the path as key. There will be at most $O(m\Delta^{r-1})$ possible non-zero entries.
	
		For the second quantity, we do induction in $r$. For $r=1$ we just need to looks at all the edges with one endpoint $u$ in $V^{(o)}(G)$ with $d(u)>\Delta$ and another endpoint $v$ in $V^{(1+o)}(G)$. We then store the weight of the edge $(u,v)$ in the corresponding $M_{r,o,\Delta,f}(u,v)$ with $f$ depending on the degree of $v$. There are at most $O(m/\Delta)$ vertices with high degree and they can have at most $n$ neighbors, hence we can complete this step in $O(nm/\Delta) = \tilde{O}(m^2/\Delta)$ time.
		
		For the inductive step, assume that we can compute all the counts for $P_{r-1}$ and we compute $P_r$. Fix vertex $u \in V^{(o)}(G)$ with $d(u)>\Delta$ and edge $(v',v)$ with $v' \in V^{(r-1+o)}(G)$ and $v \in V^{(r+o)}(G)$, we have at most $O(m/\Delta)$ choices for $u$ and $m$ choices for $(v',v)$, hence at most $O(m^2/\Delta)$ total choices. Using the inductive assumption, for each pair $u,v'$ and any function $g : \{1,...,r-1\} \to \{\text{Low},\text{High}\}$ we have the value of $M_{r-1,\Delta,g}(u,v')$. Every $P_r$ in $G$ starting at layer $o$ must also contain a  $P_{r-1}$ starting a layer $o$. Hence, we can then set the function $f :\{1,...,r\} \to \{\text{Low},\text{High}\}$ that agrees with $g$ in each value and has $f(r) = \text{Low}$ if $d(v) \leq \Delta$ and $f(r) = \text{High}$ otherwise, and add the product of $M_{r-1,o,\Delta,g}(u,v')$  and the weight of $(u,v)$ to the entry $(u,v)$ in the hashmap $M_{r,o,\Delta,f}$. We only need constant time per pair $u,(v',v)$ and hence this step will take $\tilde{O}(m^2/\Delta)$ total time.
		
		The third quantity is equivalent to the second, just reversing the position of the high degree vertex, and it can be shown in a similar way selecting edges connecting colors $o$ and $o+1$ and vertices from $V^{r+o}$.
	\end{proof}
	
	We can now prove the combinatorial algorithm:
	
	\begin{lemma} \label{lem:wsub_cycles_comb}
		Let $G$ be a colored weighted graph with $m$ edges. For all $k>3$, there is a combinatorial algorithm that computes $\WSub{G}{\cycle{k}}$ in time $\tilde{O}(m^{2-1/\lceil k/2 \rceil })$.
	\end{lemma}
	\begin{proof}
		Let $\Delta = m^{1/\lceil k/2 \rceil}$. For every function $g ; \{1,...,k\} \to \{\text{Low}, \text{High}\}$ we compute the total weight $W_g$ of colorful $k$-cycles $\{v_1,..,v_k\}$ in $G$ such that the degree of the $v_i \leq \Delta$ if $g(i)=\text{Low}$ and $v_i > \Delta$ otherwise. We have that $\WSub{G}{\cycle{k}} = \sum_g W_g$. We distinguish two cases:
		\begin{itemize}
			\item $g(i) = \text{Low}$ for all $i$. In this case we can split the $k$-cycle into two paths $P_{\lfloor k/2 \rfloor}$ from layer $0$ to layer $\lfloor k/2 \rfloor$ and a $P_{\lceil k/2 \rceil}$ from layer $\lfloor k/2 \rfloor$ to layer $0$. We can then express $W_g$ as follows:
			\[
				W_g = \sum_{\substack {u \in V^{(0)} : d(u) \leq \Delta \\ v \in V^{(\lfloor k/2 \rfloor)}: d(v) \leq \Delta}} N_{\lfloor k/2 \rfloor,0,\Delta}(u,v) \cdot N_{\lceil k/2 \rceil,\lfloor k/2 \rfloor,\Delta}(v,u)
			\]
			Using \Lem{aux_cycle} we can compute $N_{\lfloor k/2 \rfloor,0,\Delta}(u,v)$ and $N_{\lceil k/2 \rceil,\lfloor k/2 \rfloor,\Delta}(v,u)$ for every pair $u,v$ in total time $\tilde{O}(m\Delta^{\lceil k/2 \rceil - 1})$, and we will have at most $O(m\Delta^{\lceil k/2 \rceil - 1})$ non-zero entries, thus the sum can be computed in time $\tilde{O}(m\Delta^{\lceil k/2 \rceil - 1}) = \tilde{O}(m^{2-1/\lceil k/2 \rceil}) $.
			
			\item There is at least an entry $i$ for which $g(i) = \text{High}$. In this case we split the $k$-cycle into two paths $P_{\lfloor k/2 \rfloor}$ from layer $i$ to layer $i+\lfloor k/2 \rfloor$ and a $P_{\lceil k/2 \rceil}$ from layer $i+\lfloor k/2 \rfloor$ to layer $i$. We set $f$ as the restriction of $g$ to the values $\{i+1,...,i+\lfloor k/2 \rfloor\}$ and $f'$ as the restriction of $g$ to the values  $\{i+\lfloor k/2 \rfloor\,...,i-1\}$ (modulo $k$). We can express $W_g$ as follows:
			\[
				W_g = \sum_{\substack {u \in V^{(i)} : d(u) > \Delta \\ v \in V^{(\lfloor k/2 \rfloor+i)}}} M_{\lfloor k/2 \rfloor,i,\Delta, f}(u,v) \cdot M'_{\lceil k/2 \rceil,\lfloor k/2 \rfloor+i,\Delta, f'}(v,u)
			\]
			Using \Lem{aux_cycle} we can compute $ M_{\lfloor k/2 \rfloor,i,\Delta, f}(u,v)$ and $M'_{\lceil k/2 \rceil,\lfloor k/2 \rfloor+i,\Delta, f'}(v,u)$ for all pairs $u,v$ in total time $\tilde{O}(m^2/\Delta)$. The number of pairs is bounded by $O(m^2/\Delta)$ as there are at most $O(m/\Delta)$ choices for $u$. Hence we can compute $W_g$ in time  $\tilde{O}(m^2/\Delta) = \tilde{O}(m^{2-1/\lceil k/2 \rceil})$.
		\end{itemize}
	\end{proof}	
	
	\subsection{Matrix multiplication algorithm for cycles}
	
		We prove now the matrix multiplication algorithm. The following lemma together with \Lem{wsub_cycles_comb} completes the proof of \Lem{wsub_cycles}. The structure of our proof follows closely Section $5$ in \cite{GiLeSh+23}. Note that the indexes over $\{0,...,k-1\}$ will be taken as modulo $k$.
		
		\begin{lemma} \label{lem:wsub_cycles_mm}
		Let $G$ be a colored weighted graph with $m$ edges. For all $k>3$, there is an algorithm that computes $\WSub{G}{\cycle{k}}$ in time $\tilde{O}(m^{c_k})$.
		\end{lemma}
		\begin{proof}
			Let the colors of $G$ go from $0$ to $k-1$. For each color $i \in [0,k-1]$, let $V^{(i)}$ be the set of vertices of $G$ with color $i$. We partition each of the sets into $\log m$ degree classes. For all $1 \leq j < \log{n}$ we define the set $W^{(i)}_j$ as follows:
			\[
				W^{(i)}_j = \{v \in V^{(i)} | 2^j \leq d(v) \leq 2^{j+1}\}
			\]
			Every such set will have at most $O(m/2^j)$ vertices. We classify the colorful $k$-cycles in $G$ into $O(\log^k{n})$ distinct classes, according to the degree of the vertices in the cycles. 
			
			We fix a tuple of degree classes $f = (f_0,...,f_{k-1})$ and show how to compute the weighted sum of cycles $\{u_0,u_1,\dots,u_{k-1}\}$ satisfying $u_i \in W^{(i)}_{f_i}$ for all $i$. Repeating this process for all classes will only add a poly-logarithmic term.
			
			Let $A^f_{i}$ be the $|W^{(i)}_{f_i}| \times |W^{(i+1)}_{f_{i+1}}|$ weighted adjacency matrix of $G$ restricted to the edges from vertices in $W^{(i)}_{f_i}$ to vertices in $W^{(i+1)}_{f_{i+1}}$. $A^f_{i}$ can only have $O(m)$ non-zero entries, and hence we can construct a sparse representation of it in $O(m)$ time. For every $p,q \in \{0,\dots,k-1\}$ we set:
			\[
				B^f_{p,q} = A^f_p \times A^f_{p+1} \times \dots \times A^f_{q-1}
			\]
			Note that the trace of the matrix $B^f_{0,k}$ will be equal to the sum of the product of weights of the $k$-cycles in $G$ satisfying the degree classes $f$. We can compute trace($B^f_{0,k}$) by selecting two points $p,q \in \{0,\dots,k-1\}$ and computing instead trace($B^f_{p,q}B^f_{q,p}$) = trace($B^f_{0,k}$).
			
			For all $i \in \{0,k-1\}$, we set $d_i = f_i/\log{m}$. Hence we can represent the tuple of array classes as $d = \{d_0,...,d_{k-1}\}$. The value of $d_i$ ranges from $0$ to $1$ and the sets $W^{(i)}_{f_i}$ will have at most $O(m^{1-d_i})$ vertices. For every $B^f_{p,q}$, let $P^f_{p,q}$ be the minimum such that we can compute $B^f_{p,q}$ in $O(m^{P^f_{p,q}})$ time (and hence $B^f_{p,q}$ will have at most $O(m^{P^f_{p,q}})$ non-zero entries). We can compute $B^f_{p,q}$ in three different ways:
			
			\begin{enumerate}
				\item First we compute a sparse representation of $B^f_{p,q-1}$. For every entry $(u,v) \in W^{(p)}_{f_p} \times W^{(q-1)}_{f_{q-1}}$ of $B^f_{p,q-1}$, we look at the neighbors of $v$ in $W^{(q)}_{f_{q}}$, for every such neighbor $v'$ we update $B^f_{p,q}(u,v') +\!\!= B^f_{p,q-1}(u,v) \cdot w((v,v'))$. Computing $B^f_{p,q-1}$ will take $O(m^{P^f_{p,q-1}})$ time and it will have as much non-zero entries. Every vertex $v$ will have at most $O(m^{d_{q-1}})$ neighbors. Hence, $B^f_{p,q}$ can be computed in time $O(m^{P^f_{p,q-1} + d_{q-1}})$.
				\item Similar to the previous method but reversed: We compute $B^f_{p+1,q}$ and then for every entry $(u,v)$ of $B^f_{p+1,q}$ we look at the neighbors $u'$ of $u$ in $W^{(p)}_{f_{p}}$ and update $B^f_{p,q}(u',v) +\!\!= B^f_{p+1,q}(u,v) \cdot w((u',u))$. We can then compute $B^f_{p,q}$ in time $O(m^{P^f_{p+1,q} + d_{p+1}})$.
				\item For some $p < r < q$, we compute $B^f_{p,r}$ and $B^f_{r,q}$ and then compute their product to obtain $B^f_{p,q}$. We need $O(m^{P^f_{p,r}} + m^{1-d_p} + m^{1-d_r})$ time to compute the matrix representation of $B^f_{p,r}$, similarly we need $O(m^{P^f_{r,q}} + m^{1-d_r} + m^{1-d_q})$ to obtain the matrix representation of $B^f_{r,q}$. We need additional $O(m^{M(1-d_p,1-d_r,1-d_q)})$ time to compute the product, where $M(a,b,c)$ is the minimum value of $g$ such that we can multiply matrices with dimensions $m^a \times m^b$ and $m^b \times m^c$ in time $O(m^g)$.
			\end{enumerate}
			
			We can now bound the value of $P^f_{p,q}$:
			\begin{itemize}
				\item From the first method we have $P^f_{p,q} \leq P^f_{p,q-1} + d_{q-1}$.
				\item From the second method we have $P^f_{p,q} \leq P^f_{p+1,q} + d_{p+1}$.
				\item From the third method we have: 
				\[P^f_{p,q} \leq \min_{p<r<q}\max \{P^f_{p,r},P^f_{r,q}, M(1-d_p,1-d_r,1-d_q)\}\]
			\end{itemize}
			
			Any $B^f_{p,p+1}$ can be constructed in time $O(m)$, hence $P^f_{p,p+1}=1$. Therefore, we can define $P^f_{p,q}$ as follows:
			
			\begin{equation} \label{eq:cycle1}
			\begin{split}
				& P^f_{p,p+1}=1 \\
				& P^f_{p,q} = \min\left\{ P^f_{p,q-1} + d_{q-1}, P^f_{p+1,q} + d_{p+1}, \min_{p<r<q}\max \{P^f_{p,r},P^f_{r,q}, M(1-d_p,1-d_r,1-d_q)\}    \right\}
			\end{split}
			\end{equation}
			
			Given the sparse representations of $B^f_{p,q}$ and $B^f_{q,p}$ we can compute trace($B^f_{p,q}\times B^f_{q,p}$) in time equal to the number of non-zero entries in each of the matrices $O(m^{P^f_{p,q}}+ m^{P^f_{q,p}})$. We define $C_k(f)$ as the minimum $g$ such that we can compute trace($B^f_{0,k}$) in time $O(m^g)$, we will have:
			
			\begin{equation}\label{eq:cycle2}
				C_k(f) = \min_{0\leq p < q \leq k-1} \max{P^f_{p,q},P^f_{q,p}}
			\end{equation}
			
			Equations \Eqn{cycle1} and \Eqn{cycle2} are identical to the equations in \cite{GiLeSh+23} and \cite{DaVuWi19}. Hence we can define $c_k$ as in \cite{GiLeSh+23}, giving:
			\begin{equation}
				c_k = \max_f C_k(f)
			\end{equation} 
			
			Which means that $O(m^{c_k})$ is an upper bound for computing trace($B^f_{0,k-1}$), for all $f$. We can then iterate over all tuples of degree classes $f$ and compute $\WSub{G}{\cycle{k}}$ in time $\tilde{O}(m^{c_k})$.
			
		\end{proof}
	
	\section{Proof of \Lem{lowerbound}}
		
		\subsection{Even cycles}
			We first prove \Lem{lowerbound} for even cycles.
			\begin{definition} [Expanded Graph]
				Given a graph $G$ with $n$ vertices and $m$ edges we define $\expandG$ as the graph resulting of subdividing every edge in $G$ by a $2$-edge path. The resultant graph has $m+n$ vertices and $2m$ edges.
			\end{definition}
		
		\begin{figure}
			\centering
			\includegraphics[width=\textwidth*3/4]{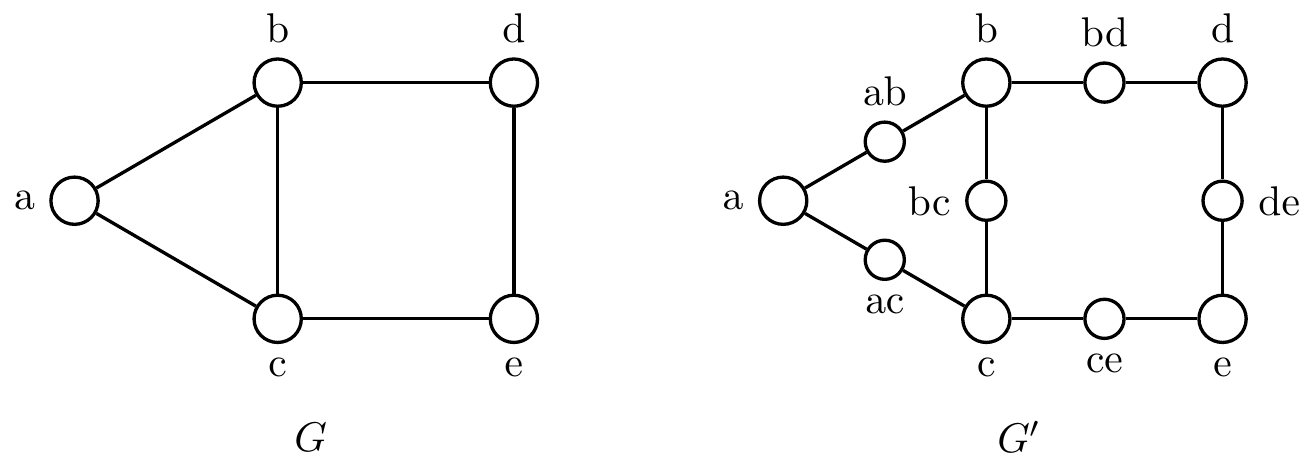}%
			\caption{An example of the construction of $\expandG$ from a graph $G$. The original graph had one $\cycle{3}$, one $\cycle{4}$ and one $\cycle{5}$. Those cycle become $\cycle{6},\cycle{8}$ and $\cycle{10}$ respectively.}
			\label{fig:subdivision}
		\end{figure}
		
		\Fig{subdivision} shows an example of the construction. We can show that $\expandG$ has constant degeneracy.
		
		\begin{claim} \label{lem:degen}
			The degeneracy of $\expandG$ is $2$.
		\end{claim}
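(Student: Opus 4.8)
The plan is to prove the bound directly from the definition of degeneracy given in the preliminaries: it suffices to show that every subgraph of $\expandG$ has a vertex of degree at most $2$, which gives $\degen(\expandG) \le 2$, and then to exhibit one subgraph of minimum degree exactly $2$ for the matching lower bound.

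First I would record the structure of $\expandG$. Its vertex set is $V(G) \cup \{x_e : e \in E(G)\}$, where for each edge $e=(u,v)\in E(G)$ the new vertex $x_e$ is adjacent exactly to $u$ and $v$ and there are no other edges. Hence $\expandG$ is bipartite with one part $V(G)$ and the other part $\{x_e : e \in E(G)\}$; every subdivision vertex $x_e$ has degree exactly $2$ in $\expandG$, and in particular $V(G)$ is an independent set. Now take an arbitrary nonempty subgraph $F$ of $\expandG$. If $V(F)$ contains some subdivision vertex $x_e$, then $\deg_F(x_e) \le \deg_{\expandG}(x_e) = 2$, so $F$ has a vertex of degree at most $2$. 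Otherwise $V(F) \subseteq V(G)$, and since $\expandG$ has no edge with both endpoints in $V(G)$, the subgraph $F$ has no edges at all, so every vertex of $F$ has degree $0$. Either way $F$ has minimum degree at most $2$, so $\expandG$ is $2$-degenerate and $\degen(\expandG) \le 2$.

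For the matching lower bound I would note that in the regime relevant to the reduction $G$ contains a cycle $C$; subdividing $C$ produces an even cycle inside $\expandG$, and that subgraph has all of its vertices of degree $2$, so no value smaller than $2$ can work. Therefore $\degen(\expandG) = 2$. There is essentially no obstacle in this proof: the entire argument rests on the single observation that the only vertices of $\expandG$ that can acquire large degree are the original vertices of $G$, but these form an independent set, so once the degree-$2$ subdivision vertices are peeled off in any elimination ordering the remaining vertices are isolated. (As a byproduct, the same case analysis yields an acyclic orientation of $\expandG$ with maximum outdegree $2$, which is what the subsequent cycle-counting reduction uses.)
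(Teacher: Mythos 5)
Your proof is correct and follows essentially the same route as the paper's: both rest on the observation that every subdivision vertex has degree exactly $2$ and that any subgraph with an edge must contain one, since the original vertices form an independent set in $\expandG$. You are somewhat more careful than the paper in that you also supply the lower bound (via a subdivided cycle) and note it needs $G$ to contain a cycle, whereas the paper's one-line proof only argues the upper bound, which is all the reduction actually uses.
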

		\begin{proof}
			Every connected subgraph with more than $1$ vertex must include one of the new vertices, the degree of these vertices is $2$ so the degeneracy can not be larger than $2$.
		\end{proof}
	
		We also show that there is a direct relation between the cycles in $G$ and in $\expandG$.
	
		\begin{lemma} \label{lem:equivalence}
			For any $k\geq 3$, the number of $\cycle{k}$ in $G$ is equal to the number of $\cycle{2k}$ in $\expandG$.
		\end{lemma}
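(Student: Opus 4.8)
The plan is to construct an explicit bijection between the copies of $\cycle{k}$ in $G$ and the copies of $\cycle{2k}$ in $\expandG$ (we fix $k\ge 3$, as given). Write $W$ for the set of $m$ subdivision vertices introduced in $\expandG$, with $w_e\in W$ the vertex created when the edge $e\in E(G)$ is subdivided. The structural fact I would record first is that $\expandG$ is bipartite with parts $V(G)$ and $W$: no two vertices of $V(G)$ are adjacent in $\expandG$ (every original edge was subdivided) and no two vertices of $W$ are adjacent (each $w_e$ is joined only to the two endpoints of $e$); moreover every $w_e$ has degree exactly $2$ in $\expandG$. All of this is immediate from the definition of $\expandG$.

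Next I would define the forward map. Given a copy of $\cycle{k}$ in $G$, written as $v_1v_2\cdots v_kv_1$ with the $v_i$ pairwise distinct and $e_i:=v_iv_{i+1}\in E(G)$ (indices taken modulo $k$), replace each edge $e_i$ by the two-edge path through $w_{e_i}$ joining $v_i$ and $v_{i+1}$. This yields the closed walk $v_1\, w_{e_1}\, v_2\, w_{e_2}\, \cdots\, v_k\, w_{e_k}\, v_1$ of length $2k$ in $\expandG$. Its $2k$ vertices are pairwise distinct: the $v_i$ are distinct by hypothesis, the $w_{e_i}$ are distinct because the $k$ edges of a $k$-cycle are distinct, and $V(G)\cap W=\emptyset$. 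Hence this is a genuine copy of $\cycle{2k}$ in $\expandG$, and distinct copies of $\cycle{k}$ give distinct copies of $\cycle{2k}$, since the $v_i$ can be recovered as the vertices lying in $V(G)$.

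Then I would argue the map is onto, which is the only point needing a little care. Take any copy $D$ of $\cycle{2k}$ in $\expandG$. Since $\expandG$ is bipartite with the parts above, $D$ alternates between $V(G)$ and $W$, so it meets $V(G)$ in exactly $k$ vertices $u_1,\dots,u_k$ (listed in cyclic order along $D$), separated along $D$ by $k$ vertices of $W$. The vertex of $W$ lying between $u_i$ and $u_{i+1}$ on $D$ has exactly two neighbours in all of $\expandG$, namely the endpoints of the edge of $G$ it came from, and those two neighbours appear on $D$ as $u_i$ and $u_{i+1}$; hence that vertex is $w_e$ with $e=u_iu_{i+1}\in E(G)$. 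Therefore $u_1u_2\cdots u_ku_1$ is a closed walk in $G$, and since the $u_i$ are distinct (being among the $2k$ distinct vertices of $D$) it is a copy of $\cycle{k}$ in $G$ whose image under the forward map is exactly $D$.

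Finally I would note that the forward map is injective (already observed) and surjective (just shown), hence a bijection, so the number of $\cycle{k}$ in $G$ equals the number of $\cycle{2k}$ in $\expandG$, i.e.\ $\Sub{G}{\cycle{k}}=\Sub{\expandG}{\cycle{2k}}$. I do not expect any genuine obstacle; the two spots that require justification are that every cycle of $\expandG$ alternates between the two sides of the bipartition (bipartiteness together with the absence of edges inside each side) and the degree-$2$ argument pinning down which edge of $G$ each subdivision vertex on $D$ originates from, both of which follow directly from the construction of $\expandG$.
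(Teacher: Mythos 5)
Your proof is correct and follows essentially the same route as the paper's: the forward map subdivides each edge of a $k$-cycle, and surjectivity is established by observing that $\expandG$ is bipartite so a $2k$-cycle alternates between original and subdivision vertices, whose consecutive original vertices must be adjacent in $G$. Your write-up is simply a more careful version of the same argument (making the injectivity and the degree-$2$ identification of subdivision vertices explicit).
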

		\begin{proof}
			Clearly every $k$-cycle in $G$ will become a $2k$-cycle in $\expandG$ as every edge is now a path of two edges. Now consider a cycle of length $2k$ in $\expandG$, we can can show that it corresponds to exactly one $k$-cycle in $G$. Let $v_1,...,v_{2k}$ be the vertices of the cycles, note that $\expandG$ is a bipartite graph with all the original vertices of $G$ in one side and the new intermediary vertices in the other, hence every alternate vertex will also be a vertex in $G$. Let $v_1,v_3,...,v_{2k-1}$ be those vertices, we can see that there must be an edge connecting $v_i$ with $v_{i+2}$ in $G$ for odd $i$ as there is a $2$-path connecting them in $\expandG$, hence the alternate vertices induce a $k$-cycle in $G$.
		\end{proof}
	
		We can now prove the following lemma.
		
		\begin{restatable}{lemma}{even} \label{thm:even}
			Let $c\geq 1$, if there is an $f(\degen)O(n^c)$ algorithm for counting $2k$-cycles, then there is a $O(m^c)$ algorithm for counting $k$-cycles. 
		\end{restatable}
		\begin{proof}
			Let $G$ be any graph with $n$ vertices, $m$ edges and degeneracy $\degen$, and assume we have a $f(\degen)O(n^c)$ algorithm for the $Sub_{\cycle{2k}}$ problem, we show that we can compute $\Sub{G}{\cycle{k}}$.
			
			First, compute the expanded graph $\expandG$ from $G$. This can be done in $O(m+n)$ time. Use the algorithm for counting the number of $2k$-cycles in $\expandG$, by \Lem{equivalence} we have that this amount will be equal to $\Sub{G}{\cycle{k}}$. The number of vertices in $G'$ is $O(m)$ and the degeneracy is $2$, hence the runtime will be $f(2)O(m^c) = O(m^c)$.
		\end{proof}
		
		The previous lemma, together with \Lem{uppercycle} gives \Cor{5-cycle}.

		\fivecycle*
		
		\subsection{Odd cycles}
		
		We now prove the lower bound for $7$ and $9$-cycles. We need to slightly modify our construction.
		
		\begin{definition} [Odd Expanded Graph]
			Given a graph $G$ with $n$ vertices and $m$ edges we define $\expandGOdd$ as the graph resulting of replacing every edge by a $2$-path and a $3$-path. The resultant graph has $3m+n$ vertices and $5m$ edges.
		\end{definition}
		
		\begin{figure}
			\centering
			\includegraphics[width=\textwidth*3/4]{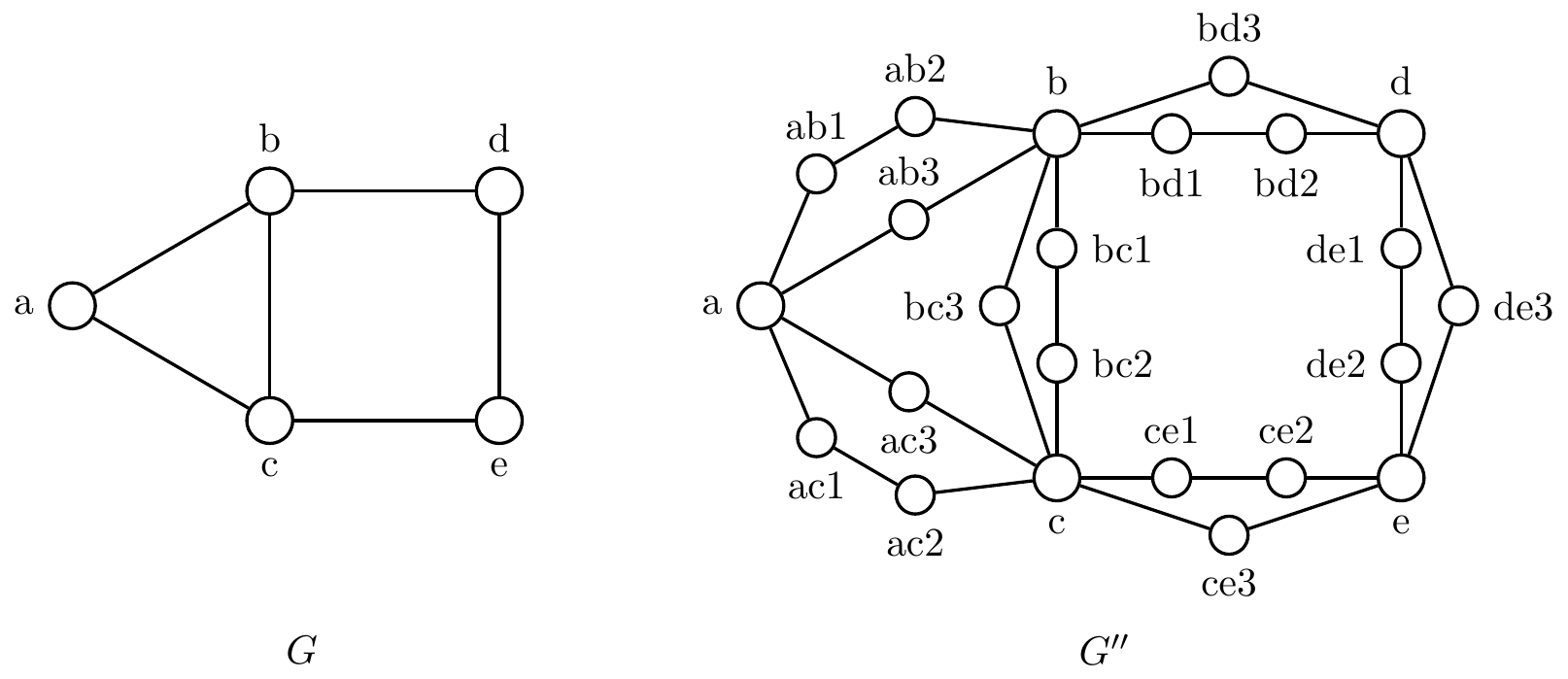}%
			\caption{An example of the construction of $\expandGOdd$ from a graph $G$. The original graph had one $\cycle{3}$ and one $\cycle{4}$. The expanded graph has $3$ $\cycle{7}$ and $5$ $\cycle{9}$.}
			\label{fig:subdivision2}
		\end{figure}
		
		\Fig{subdivision2} shows an example of the construction. Again, this graph will have a degeneracy of $2$. We can show the following:
		
		\begin{lemma} \label{lem:equivalence2}
			For any odd $k$:
			\begin{itemize}
				\item $\Sub{\expandGOdd}{\cycle{7}} = 3 \cdot \Sub{G}{\cycle{3}}$
				\item $\Sub{\expandGOdd}{\cycle{9}} = \Sub{G}{\cycle{3}} + 4 \cdot \Sub{G}{\cycle{4}} $
			\end{itemize}
		\end{lemma}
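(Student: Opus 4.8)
The plan is to mimic the analysis of \Lem{equivalence}, exploiting the fact that in $\expandGOdd$ only the original vertices of $G$ have degree larger than $2$: each of the $3m$ subdivision vertices lies in the interior of a gadget --- the $2$-path or the $3$-path replacing some edge $\{u,v\}$ of $G$ --- and therefore has degree exactly $2$. First I would record the structural consequence. Let $C$ be any cycle of $\expandGOdd$. It cannot avoid all original vertices, since the union of all gadget interiors is a disjoint union of paths; so $C$ visits original vertices $v_1,\dots,v_k$ in cyclic order, and between consecutive $v_i,v_{i+1}$ it traverses a subpath all of whose internal vertices are subdivision vertices of a single gadget. Because each such subdivision vertex has degree $2$, this subpath must be \emph{all} of the $2$-path or \emph{all} of the $3$-path of the gadget replacing $\{v_i,v_{i+1}\}$; in particular $\{v_i,v_{i+1}\}\in E(G)$ for every $i$, no gadget is used for two consecutive pairs, and $|E(C)|=\sum_{i=1}^{k}\ell_i$ with each $\ell_i\in\{2,3\}$. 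The degenerate cases are $k\le 1$ (impossible) and $k=2$, where $C$ must use both the $2$-path and the $3$-path of the unique gadget between $v_1$ and $v_2$, giving $|E(C)|=5$.

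For $\cycle{7}$: a $7$-cycle has $k\ge 3$ (as $k\le 2$ forces length $\le 5$) and $k\le 3$ (as $k\ge 4$ forces length $\ge 8$), so $k=3$ and $\{\ell_1,\ell_2,\ell_3\}=\{2,2,3\}$ as a multiset. Hence every $7$-cycle of $\expandGOdd$ determines a triangle $v_1v_2v_3$ of $G$ together with a choice of which of its three edges is the one replaced by the $3$-path; conversely each such choice builds a distinct $7$-cycle, distinct because the choice fixes exactly which subdivision vertices occur in $C$. This is a bijection, so $\Sub{\expandGOdd}{\cycle{7}}=3\,\Sub{G}{\cycle{3}}$.

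For $\cycle{9}$: a $9$-cycle has $k\ge 3$ ($k\le 2$ forces length $\le 5$) and $k\le 4$ ($k\ge 5$ forces length $\ge 10$). Solving $\sum\ell_i=9$ with $\ell_i\in\{2,3\}$ gives exactly two multiset solutions: $k=3$ with $\{3,3,3\}$, and $k=4$ with $\{2,2,2,3\}$. The first family is in bijection with the triangles of $G$ (all three edges replaced by $3$-paths, one $9$-cycle per triangle); the second is in bijection with pairs $(Q,e)$ where $Q$ is a $4$-cycle of $G$ and $e\in E(Q)$ is the edge replaced by the $3$-path, giving four $9$-cycles per $4$-cycle. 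The two families are disjoint --- the underlying subgraphs of $G$ have different numbers of vertices --- and together they exhaust all $9$-cycles, so $\Sub{\expandGOdd}{\cycle{9}}=\Sub{G}{\cycle{3}}+4\,\Sub{G}{\cycle{4}}$.

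The only step requiring genuine care is the structural first paragraph: establishing that every subdivision vertex has degree exactly $2$, and deducing that cycles decompose as a cyclic sequence of full gadget-paths, with the $k=2$ case handled separately. Once that is in place, the $\{2,3\}$-multiset enumerations and the verification that distinct $G$-subgraphs (with their choices) never produce the same cycle are routine, so I expect no substantial obstacle beyond bookkeeping.
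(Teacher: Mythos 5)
Your proposal is correct and follows essentially the same route as the paper's proof: both establish a bijection between $7$-cycles (resp.\ $9$-cycles) of $\expandGOdd$ and triangles with a distinguished edge (resp.\ triangles, or $4$-cycles with a distinguished edge) of $G$, using the fact that a cycle in $\expandGOdd$ can contain only $3$ or $4$ original vertices. Your version is somewhat more careful than the paper's --- in particular the explicit degree-$2$ argument forcing full gadget traversal, the $\{2,3\}$-multiset enumeration, and the handling of the degenerate $k=2$ case --- but the underlying argument is the same.
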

		\begin{proof}
			First consider any $\cycle{3}$ in $G$ given by vertices $v_1,v_2,v_3$, we can see that they will be part of exactly one $\cycle{9}$ in $\expandGOdd$: take the $3$-path connecting every pair of vertices from $v_1,v_2,v_3$, the result is a $\cycle{9}$. The same cycle will also corresponded to three different $\cycle{7}$ in $\expandGOdd$: For two of the pairs from $v_1,v_2,v_3$ take the $2$-path and take the $3$-path from the remaining pair, there are three possible pairs, giving three cycles.
			
			Now consider any $\cycle{4}$ in $G$, we can see that it will correspond with $4$ different $\cycle{9}$ in $\expandGOdd$: let the vertices $v_1,v_2,v_3,v_4$ be the four vertices of the cycle, for three of the pairs $(v_1,v_2),(v_2,v_3),(v_3,v_4),(v_4,v_1)$ take the $2$-path, and take the $3$-path for the remaining pair. The result is $4$ different $\cycle{9}$.
			
			Now inversely, consider a $\cycle{7}$ in $\expandGOdd$, note that it can have at most $3$ vertices from the original graph, and hence it can only correspond with $\cycle{3}$, giving the equality of the lemma. Similarly a $\cycle{9}$ in $\expandGOdd$ can have either $3$ or $4$ vertices from the original graph $G$, and hence it will correspond to either a $\cycle{3}$ or $\cycle{4}$.
		\end{proof}
		
		We can now show the lower bounds:
		
		\begin{lemma}
			\begin{itemize}
			\item If an $f(\degen)o(n^{d_3})$ algorithm for $\Sub{G}{\cycle{7}}$ exists then there exist a $o(m^{d_3})$ algorithm for $\Sub{G}{\cycle{3}}$.
			\item If an $f(\degen)o(n^{d_4})$ algorithm for $\Sub{G}{\cycle{9}}$ exists then there exist a $o(m^{d_4})$ algorithm for $\Sub{G}{\cycle{4}}$.
		\end{itemize}
		\end{lemma}
		\begin{proof}
			Assume an $f(\degen)o(n^{d_3})$ algorithm exists for $\Sub{G}{\cycle{7}}$. For any graph $G$ we can construct $\expandGOdd$ in $O(n+m)$ time. Then we can use the $f(\degen)o(n^{d_3})$ algorithm to compute the number of $\cycle{7}$ in $\expandGOdd$ from which we can get the number of $\cycle{3}$ in $G$ using \Lem{equivalence2}. $\expandGOdd$ has degeneracy of $2$ and $O(m)$ vertices, hence the algorithm will run in $o(m^{d_3})$.
			
			Similarly for $\cycle{9}$, assume a $f(\degen)o(n^{d_4})$ algorithm exists for $\Sub{G}{\cycle{9}}$. For any graph $G$ we can construct $\expandGOdd$ in $O(n+m)$ time, we can also use the existing $O(m^{d_3}) = o(m^{d_4})$ algorithm for computing the number of $\cycle{3}$ in $G$. We then use the assumed algorithm for $\cycle{9}$ in $\expandGOdd$. We can use \Lem{equivalence2} to compute the number $\cycle{4}$ in $G$ from the number of $\cycle{3}$ in $G$ and the number of $\cycle{9}$ in $\expandGOdd$. This will take $o(m^{d_4})$.
		\end{proof}

    \section{Proof of \Lem{hardness}} \label{sec:hardness}

	Consider the $10$-vertex graph $H_\triangle$ shown in \Fig{reduction_three}, it has one acyclic orientation $\vec{H}_\triangle$ that is \reducible{\hyperthree}, that can be seen by replacing every source in $\vec{H}_\triangle$  with a hyperedge.
	
	Let $G$ be any input graph with bounded degeneracy, we can show that a subquadratic algorithm for computing $\Sub{G}{H_\triangle}$ implies a subquadratic algorithm for computing $\Sub{\vec{G}}{\vec{H}_\triangle}$.
	
	\begin{lemma} \label{lem:aux_conj}
		If there is a $f(\degen)o(n^2)$ algorithm for computing $\Sub{G}{H_\triangle}$, then there is a $f(\degen)o(n^2)$ algorithm for computing $\Sub{\vec{G}}{\vec{H}_\triangle}$.
	\end{lemma}
	\begin{proof}
		\begin{equation}
			\Sub{G}{H_\triangle} = \sum_{\vec{H} \in \Sigma(H_\triangle)} \Sub{\vec{G}}{\vec{H}} = \sum_{\vec{H} \in \Sigma(H_\triangle)} \sum_{\vec{H'} \in \Spasm(\vec{H})}  f(\vec{H},\vec{H'})\Hom{\vec{G}}{\vec{H'}}
		\end{equation}
	The values of $f(\vec{H},\vec{H'})$ are non-zero for each pair $\vec{H},\vec{H'}$. We can hence rewrite $\Sub{G}{H_\triangle}$ as a linear combination of homomorphism counts. Consider the direct pattern $\vec{H}_\triangle$, note that because it has size $10$ it can not be in the spams of any other acyclic orientation of $H_\triangle$ besides its own. Hence it will have a non-zero coefficient of $f(\vec{H}_\triangle,\vec{H}_\triangle)$ in the linear combination.
	
	We can now use Lemma $3.1$ in \cite{GiLeSh+23} to transform our $f(\degen)o(n^2)$ algorithm for $\Sub{G}{H_\triangle}$ into a $f(\degen)o(n^2)$ algorithm for counting any of the homomorphisms with non-zero coefficient in the linear combination, hence we will have a $f(\degen)o(n^2)$ algorithm for $\Hom{\vec{G}}{\vec{H}_\triangle}$.
	
	Note that the rest of the patterns in $\Spasm(\vec{H}_\triangle)$ might not have non-zero coefficient in the linear combination, however, they will have at most $9$ vertices, and hence we can compute the number of homomorphisms for each of them in subquadratic time using \Thm{main}. We can then use the homomorphism counts of all the graphs in $\Spasm(\vec{H}_\triangle)$ to compute $\Sub{\vec{G}}{\vec{H}_\triangle}$ in total time $f(\degen)o(n^2)$.
	\end{proof}
	\begin{figure}
	\centering
	\includegraphics[width=\textwidth]{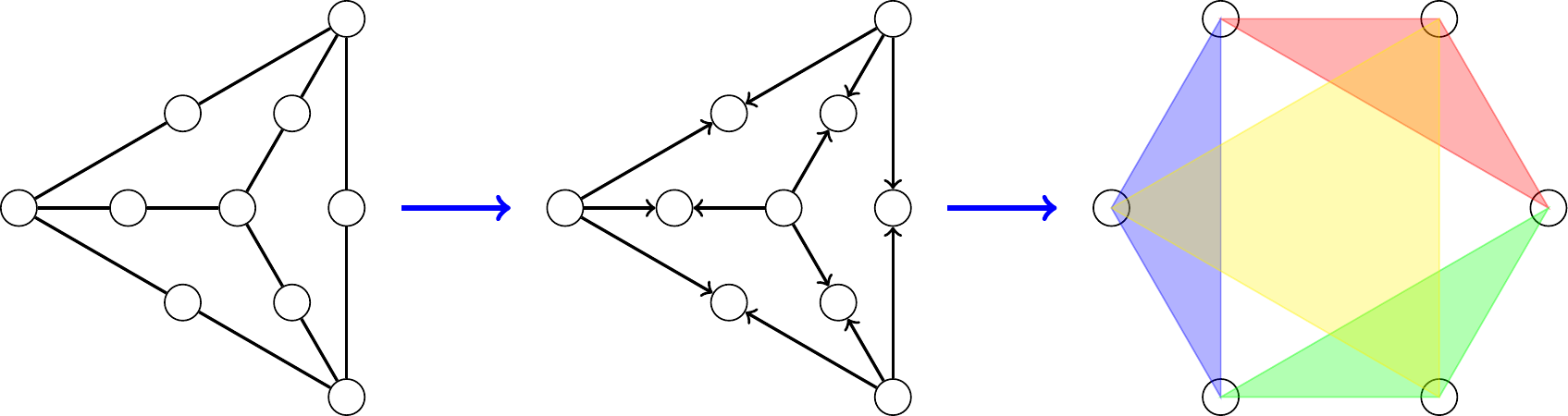}%
	\caption{The graph $H_\triangle$, its acyclic orientation $\vec{H}_\triangle$ that is \reducible{\hyperthree} and the hypergraph $\hyperthree$.}
	\label{fig:reduction_three}
	\end{figure}
		
	Now we can prove \Lem{hardness}.
	\hardness*
	\begin{proof}
		Assume there is a an algorithm that computes $\Sub{G}{H}$ in $f(\degen)o(n^2)$ time for all patterns with $10$ vertices. This implies that $\Sub{G}{H_\triangle}$ can be computed in $f(\degen)o(n^2)$ time. Using \Lem{aux_conj} we have that there will also be an $f(\degen)o(n^2)$ algorithm for computing $\Sub{\vec{G}}{\vec{H}_\triangle}$. We show how to compute $\Sub{G}{\hyperthree}$ using this algorithm.
		
		Let $G$ be a hypergraph with all hyperedges of arity $3$. Construct the graph $\vec{G'}$ by replacing every hyperedge $e=\{u_1,u_2,u_3\}$ by a new vertex $v_e$ and the directed edges $(v_e,u_1)$,$(v_e,u_2)$,$(v_e,u_3)$. This graph will have $O(m)$ edges and a degeneracy of $3$.
		
		Note that every copy of $\hyperthree$ in $G$ will now have become a copy of $\vec{H}_\triangle$ in $\vec{G'}$. Similarly every copy of $\vec{H}_\triangle$ in $\vec{G'}$ will correspond to a copy of $\hyperthree$ in $G$. Hence $\Sub{G}{\hyperthree} = \Sub{\vec{G'}}{\vec{H}_\triangle}$. We can then use the subquadratic algorithm that we assumed exists to compute $\Sub{G}{\hyperthree}$ in time $f(3)o(m^2) =o(m^2)$. But this contradicts \Conj{hyperthree}.
	\end{proof}

\end{document}